\newtheorem{thm}{Theorem}[section]
\newtheorem{prop}[thm]{Proposition}
\newtheorem{lem}[thm]{Lemma}
\newtheorem{cor}[thm]{Corollary}
\newtheorem*{thm*}{Theorem}
\theoremstyle{definition}
\newtheorem{definition}[thm]{Definition}
\newtheorem{cond}[thm]{Condition}
\newtheorem{rem}[thm]{Remark}
\renewcommand{\phi}{\varphi}
\newcommand{\eps}{\varepsilon}
\newcommand{\ud}{\mathrm{d}}
\newcommand{\uod}{{\mathrm{od}}}
\newcommand{\ue}{\mathrm{e}}
\newcommand{\ui}{\mathrm{i}}
\newcommand{\R}{\mathbb{R}}
\newcommand{\N}{\mathbb{N}}
\newcommand{\Fou}{\mathcal{F}}
\newcommand{\Lz}{L^2}
\newcommand{\norm}[1]{\ensuremath{\left\lVert #1 \right\rVert}}
\newcommand{\abs}[1]{\ensuremath{\left\lvert #1 \right\rvert}}
\newcommand{\hilb}{\mathscr{H}}
\newcommand{\uppar}[1]{\ensuremath{^{(#1)}}}
\title{On Nelson-type Hamiltonians and abstract boundary conditions}
\author{Jonas Lampart 
\thanks{CNRS \& ICB (UMR 6303), Université de Bourgogne Franche-Comté, 9 Av. A. Savary, 21078 Dijon Cedex, France.
\texttt{jonas.lampart@u-bourgogne.fr}}
, Julian Schmidt 
\thanks{Fachbereich Mathematik, Eberhard Karls Universität Tübingen, Auf der Morgenstelle 10, 72076 Tübingen, Germany. \texttt{juls@maphy.uni-tuebingen.de}}
}
\begin{document}

\maketitle  

\begin{abstract}
We construct Hamiltonians for systems of nonrelativistic particles linearly coupled to massive scalar bosons using abstract boundary conditions. The construction yields an explicit characterisation of the domain of self-adjointness in terms of boundary conditions that relate sectors with different numbers of bosons. We treat both models in which the Hamiltonian may be defined as a form perturbation of the free operator, such as Fröhlich's polaron, and renormalisable models, such as the massive Nelson model. 
\end{abstract}

\section{Introduction}

We consider a system of nonrelativistic particles interacting with massive scalar bosons. For a linear coupling, the interaction between one particle and the bosons is (formally) given by $a(v(x-y))+a^*(v(x-y))$, where $a, a^*$ are the bosonic annihilation and creation operators, $v$ is the form factor of the interaction and $x$ denotes the position of the particle, $y$ that of a boson. Figuratively speaking, the particles act as sources that create and annihilate bosons with wavefunction $v$ centred at their position $x$.
We will discuss a class of ultraviolet-divergent models for which $v(y)$ is a singular function (or a distribution). In most examples $v(y)$ is singular at $y=0$ but regular and decaying as $|y|\to \infty$. For example, for the Fröhlich polaron $v(y)\sim |y|^{-2}$, and in the Nelson model $v(y)\sim |y|^{-5/2}$ (both in three space-dimensions). The Hamiltonians for these models can be constructed using quadratic forms (for the Fröhlich model) or by a renormalisation procedure (for the Nelson model). However, these methods do not give detailed and explicit information on the domain of the operator (e.g. concerning regularity) or the action of the operator thereon.  
We will discuss a new method of construction that explicitly describes the domain in terms of abstract boundary conditions relating sectors with different numbers of bosons. 
More precisely, the elements of the domain will, for any given number $n\geq 1$ of bosons, be singular functions with singularities determined by the function with $n-1$ bosons.
If the only singularity of $v$ is at $y=0$, these singularities are located on the planes in configuration space where the positions of (at least) a source and a boson coincide.
The relation between the form of this singularity and the function with fewer bosons can be viewed as an inhomogeneous  generalised  boundary condition on the set of these planes.

Boundary conditions of this type were proposed as an approach to ultraviolet divergences by Teufel and Tumulka~\cite{TeTu15, TeTu16}. They were called interior-boundary conditions, as they concern points in the interior of the configuration space of the two species of particles. 
Similar boundary conditions had previously been investigated by Thomas~\cite{thomas1984} in a specific model where the total number of particles is at most three.
The emphasis of these works is on point interactions, where $v$ is the $\delta$-distribution and it is particularly natural to consider boundary conditions.
A rigorous analysis of a model for nonrelativistic bosons, with $v=\delta$ and sources that are fixed at points in $\R^3$, was subsequently performed by Teufel, Tumulka, and the authors~\cite{IBCpaper}.  This extended a result of Yafaev~\cite{yafaev1992}, allowing only for the creation of a single particle.
The one-dimensional variant of this model was studied by Keppeler and Sieber~\cite{KeSi16}.

In the present article, we will explain how such an approach can be applied to models for nonrelativistic particles interacting with bosons, where the \enquote{sources} are themselves dynamical objects. We also demonstrate that the method is sufficiently flexible to accommodate various interactions $v$ and dispersion relations of the bosons, such as the relativistic dispersion of the Nelson model. Our class of models also contains a dynamical version of the model with nonrelativistic bosons and $v=\delta$ of~\cite{IBCpaper} in two (instead of three) space-dimensions.  Our method could also be applied to models that involve creation and annihilation of fermions, but we will restrict ourselves to bosons in this article.
We obtain an explicit characterisation of the Hamiltonian and its domain of self-adjointness, which seems to be new for all of the cases under consideration.
We also hope that this explicit characterisation will facilitate further research on the properties of these models, such as their energy-momentum spectrum and dynamics, which is an active area of investigation (see e.g.~\cite{AmFa14, BlTh17, GHL14, Miy18, MaMo17} for some recent results, and references therein).

\subsection{Nonrelativistic particles interacting with scalar bosons}

Let us now introduce some notation and discuss in more detail the models we will consider as well as our main results. We  consider a fixed but arbitrary number $M$ of nonrelativistic particles in $d\leq 3$ dimensions interacting with a variable number of scalar bosons. 
We do not impose any particular symmetry under permutations on the first type of particles.
The Hilbert space on which we describe our system is given by
\begin{equation*}
 \hilb:=L^2(\R^{dM})\otimes \Gamma(L^2(\R^d))
 =\bigoplus_{n = 0}^\infty \Lz(\R^{dM}) \otimes \Lz_{\mathrm{sym}}(\R^{dn}) = \bigoplus_{n \in \N} \hilb \uppar n 
 \,,
\end{equation*}
where $\Gamma(L^2(\R^d))$ is the bosonic Fock space over $L^2(\R^d)$ and $\hilb\uppar{n}$ the sector of $\hilb$ with $n$ bosons.
In the position representation, we will denote the positions of the first type of particles by $x_1,\dots, x_M$ and refer to these as the $x$-particles from now on. We will denote the positions of the bosons by $y_1, \dots$ and refer to them as the $y$-particles.
In appropriate units, the formal expression for the linearly coupled Hamiltonian of this system reads
\begin{equation}\label{eq:H formal}
 - \sum_{j=1}^M \Delta_{x_j} + \ud \Gamma(\omega(-\ui\nabla_y)) + g\sum_{j=1}^M \left( a^*(v(x_j-y))+a(v(x_j-y)) \right)\,,
\end{equation}
where $\omega:\R^d\to \R_+$ is the dispersion relation of the bosons, $v\in \mathscr{S}'(\R^d)$ is the interaction, and $g\in\R$ is the coupling constant.
When $v\in L^2$ and $\omega(k)\geq e_0>0$, then, by the Kato-Rellich theorem, this defines a self-adjoint operator on the domain
\begin{equation*}
D(L)=\{\psi\in\hilb : L\psi \in \hilb\} 
\end{equation*}
of the free operator (understood in the sense of tempered distributions)
\begin{equation}\label{eq:Ldef}
 L:=- \sum_{j=1}^M \Delta_{x_j} + \ud \Gamma(\omega(-\ui\nabla_y))\,.
\end{equation}
Note that $D(L)$ is contained in the domain of the boson-number operator $N=\ud \Gamma(1)$ if $\omega(k)\geq e_0>0$.

Our class of models concerns cases where the operator in Equation~\eqref{eq:H formal} above is not immediately well defined because $v\notin L^2(\R^d)$. We will only consider cases with an ultraviolet problem but no infrared problem, that is $\omega(k)\geq e_0>0$ and $\hat v \in L^2_\mathrm{loc}$.
The problem in this case is that the creation operator $a^*(v(x-y))$ is not a densely defined operator on $\hilb$, so the expression~\eqref{eq:H formal} cannot be interpreted as a sum of unbounded operators on any dense domain.
The annihilation operator $\sum_{j=1}^M a(v(x_j-y))$ is less problematic, as it is always densely defined, and under our assumptions it is defined on $D(L)$ (cf.~Corollary~\ref{lem:gonfockspace} and the following remark). Depending on $v$ and $\omega$, this problem may be solvable by one of two well-known methods.
\begin{enumerate}[label=(\arabic*)]
 \item If $\int \frac{ |\hat v(k)|^2}{k^2+\omega(k)} \ud k <\infty$, the annihilation operator is continuous from $D(L^{1/2})$ to $D(N^{-1/2})$  and one can interpret the expression~\eqref{eq:H formal} as the quadratic form
 \begin{equation}\label{eq:QF}
  \langle \psi, L \psi \rangle + \sum_{j=1}^M   \langle \psi , a(v(x_j-y)) \psi \rangle + \langle a(v(x_j-y))  \psi , \psi \rangle,
 \end{equation}
on $D(L^{1/2})\subset D(N^{1/2})$, since $a^*$ is the formal adjoint of $a$. When this form is bounded below, one defines the Hamiltonian $H$ to be the unique self-adjoint and semibounded operator associated with this form. This solves the problem of defining $H$, but yields only limited information, namely that $D(H)\subset D(L^{1/2})$ and that $H$ is semibounded.
\item When  $a(v(x-y))$ is not defined on $D(L^{1/2})$ one can still hope to construct $H$ using a renormalisation procedure due to Nelson~\cite{nelson1964}. In this procedure, one first regularises $v$, for example by replacing it by $v_\Lambda$ whose Fourier transform is $\hat v_\Lambda(k)=\hat v(k)\chi_\Lambda(k)$, where $\chi_\Lambda$ is the characteristic function of a ball of radius $\Lambda$. Then $v_\Lambda \in L^2$, so the operator $H_\Lambda$ with this interaction is self-adjoint on $D(L)$ for every $\Lambda\in \R_+$ and  $v_\Lambda$ converges to $v$ in $\mathscr{S}'(\R^d)$ as $\Lambda \to \infty$.
Under appropriate conditions on $v$ and $\omega$, one can then find (explicit) numbers $E_\Lambda$, so that
\begin{equation*}
 H_\infty=\lim_{\Lambda \to \infty}H_\Lambda + E_\Lambda
\end{equation*}
 exists in the norm resolvent sense and defines a self-adjoint and semibounded operator. This defines a Hamiltonian for the model up to a constant, since the numbers $E_\Lambda$ can always be modified by adding a finite constant in this procedure. However, one retains virtually no information on the domain of $H_\infty$, which led Nelson to pose in~\cite{nelson1964} the following problem: 
 \begin{quote}
It would be interesting to have a direct description of the operator $H_\infty$. Is $D(H_\infty) \cap D(L^{1/2}) = \{0\}$?
\end{quote} 
The second question was answered, affirmatively, in a recent article by Griesemer and Wünsch~\cite{GrWu17}. We will provide a direct description of $H_\infty$ and its domain in terms of abstract boundary conditions. From this description the answer to the second question will also be apparent.
\end{enumerate}

The models we consider will fall into one of these two classes. 
They are form perturbations of $L$, as under point (1) above, if $\int \frac{ |\hat v(k)|^2}{k^2+\omega(k)} \ud k <\infty$ and renormalisable in the sense of point (2) otherwise. The precise assumptions will be given in Condition~\ref{cond:alphabeta} below. 
The class of $v$ and $\omega$ we cover contains the following examples:
\begin{itemize}
 \item The Fröhlich model ($d=3$, $\omega=1$, $\hat v(k)=|k|^{-1}$) describes the interaction of nonrelativistic electrons with phonons in a crystal. As noted above, this model falls into the class of form perturbations. A recent exposition of the construction and an investigation of its domain can be found in the article of Griesemer and Wünsch~\cite{GrWu16}.
 \item The massive Nelson model ($d=3$, $\omega(k)=\sqrt{k^2+1}$, $\hat v(k)=\omega(k)^{-1/2}$) describes the interaction of nonrelativistic particles with relativistic, massive, scalar bosons, whose mass we have chosen to be one.
 It was defined rigorously by Nelson~\cite{nelson1964} and provides the blueprint for the renormalisation procedure described under point (2) above.
 \item Nonrelativistic point-particles in two dimensions ($d=2$, $\omega(k)=k^2+1$, $v=\delta$). In this model, the nonrelativistic ($x$-) particles interact with nonrelativistic bosons ($y$-particles) by creation/annihilation at contact. This is a  two-dimensional version of the model of~\cite{IBCpaper} with dynamical sources. The renormalisation procedure can be applied to this model by following Nelson's proof line-by-line (see also~\cite{GrWu17}).
\end{itemize}

\subsection{A Hamiltonian with abstract boundary conditions}

Our approach to constructing a Hamiltonian for these models starts not from the quadratic form or a regularisation of the expression~\eqref{eq:H formal}, but by considering extensions of $L$ to singular functions, adapted to the singularity of $v$.
This is analogous to the construction of Schrödinger operators with singular (pseudo-) potentials using the theory of self-adjoint extensions (see e.g.~\cite{albev, Be_etal2017, MiOt17,posi08}).
In those problems, one considers a self-adjoint operator $(S,D(S))$ (e.g. $S=-\Delta$ on $H^2(\R^d)$) and restricts it to the kernel of a singular \enquote{potential}. This could be the Sobolev trace on some lower dimensional set, the \enquote{boundary}, or some other linear functional on $D(S)$. The restriction of $S$ then defines a closed, symmetric operator $S_0$, and one searches for self-adjoint extensions of $S_0$, or, equivalently, restrictions of $S_0^*$. These extensions incorporate interactions through (generalised) boundary conditions.
We remark that, in many examples, such models can also be constructed using renormalisation techniques 
(see e.g.~\cite{DeFiTe1994, DiRa04, KiSi1995}), giving the same operators.
This is also true for our models, as we will show in Theorem~\ref{thm:renorm} below. 

Let $L_0$ be the restriction of $L$ to the domain
\begin{equation}\label{eq:D(L_0)}
 D(L_0)=D(L) \cap \ker \left( \sum_{j=1}^M  a(v(x_j-y)) \right).
\end{equation}
Then $L_0^*$ is an extension of $L$
whose domain contains, in particular, elements of the form
\begin{equation}
\psi=G\phi:=-g \left( \sum_{j=1}^M   a(v(x_j-y))L^{-1} \right)^*\phi
=-g L^{-1}  \sum_{j=1}^M   a^*(v(x_j-y)) \phi,
\end{equation}
for $\phi \in \hilb$. In this expression, $a^*(v(x_j-y))$ is to be understood as the adjoint of $a(v(x_j-y)):D(L)\to \hilb$ that maps $\hilb $ to $D(L)'=D(L^{-1})$, the dual of $D(L)$.
Note also that $L$ is invertible on the sectors with at least one boson since we assume $\omega\geq e_0> 0$. 
%

We will define an extension $A$ of $\sum_{j=1}^M  a(v(x_j-y))$ to functions in the range of $G$. One can then consider the operator $L_0^*+gA$ on the domain
\begin{equation*}
\left \{ \psi \in \hilb\vert \exists \phi \in \hilb: \psi -G\phi \in D(L)\right\}.
\end{equation*}
Since $G\phi \notin D(L)$ for $\phi \neq 0$, the function $\phi$ in this decomposition is unique.
The condition means that the singular part of $\psi\uppar{n}$, i.e.~the part not in $D(L)$, is determined by the \enquote{boundary value} $\phi\uppar{n-1}$. Note that, since $\mathscr{H}$ is the sum over all sectors $\mathscr{H}\uppar{n}$, the space on which the operator acts and the space of boundary values are both equal to $\hilb$.
The operator $L_0^* + gA$ is not symmetric on this domain, but it has symmetric  restrictions obtained by imposing boundary conditions, in the sense of linear relations between $\psi$ and $\phi$.

To find the boundary condition corresponding to the formal Hamiltonian~\eqref{eq:H formal}, first observe that the range of $G$ is contained in the kernel of $L_0^*$, because for all $\psi\in D(L_0)$
\begin{align}\label{eq: G Ker}
\langle L_0^* G \phi, \psi \rangle =  \langle  \phi,G^* L_0 \psi \rangle 
= - g \sum_{j=1}^M \langle    \phi, a(v(x_j-y)) \psi \rangle = 0 \, .
\end{align}
For any $\psi$ with $\psi -G\phi \in D(L)$ we then have
\begin{equation}\label{eq:L+a^*}
 L_0^* \psi= L_0^*(\psi - G\phi) = L(\psi -G\phi)=L\psi + g\sum_{j=1}^M a^*(v(x_j-y)) \phi.
\end{equation}
The final expression is a sum of vectors in $D(L)'$ that lies in $\hilb$, because it equals the left hand side. Imposing the relation $\phi=\psi$, i.e. that ~$\psi - G\psi \in D(L)$, then gives the equality
\begin{equation*}
 L_0^*\psi + gA\psi = L\psi + g\sum_{j=1}^M a^*(v(x_j-y)) \psi + gA \psi
\end{equation*}
in $D(L)'$. This is essentially the formal Hamiltonian~\eqref{eq:H formal}, but on a domain different from $D(L)$ chosen in such a way that the singularities of the first two terms cancel each other, and with the annihilation operator suitably extended to this domain. Our main result is that the Hamiltonian $H=L_0^*+gA$ is self-adjoint and bounded from below on the domain with this boundary condition. For the appropriate choice of extension $A$, it equals the Hamiltonian defined as a quadratic form, or by renormalisation, respectively.

Our hypothesis on $\hat v$ and $\omega$ is that they have upper, respectively lower, bounds by appropriate powers of $|k|$ or $1+k^2$, which is the case in all of the relevant examples. For simplicity we also set the rest-mass $e_0$ of the $y$-particles to one.
\begin{cond}\label{cond:alphabeta}
Let $v\in \mathscr{S}'(\R^d)$, $v\notin L^2(\R^d)$ and $\omega: \R^d \to \R_+$.
We have bounds $|\hat v(k) | \leq |k|^{-\alpha}$ and $\omega(k) \geq (1+k^2)^{\beta/2}$  with parameters $0\leq \alpha < \tfrac d2$,  $0\leq \beta \leq 2$ satisfying additionally one of the following two conditions:
\begin{enumerate}[label=(\arabic*)]
 \item $\alpha>\tfrac d2-1$ and thus $\int \frac{ |\hat v(k)|^2}{k^2+\omega(k)} \ud k <\infty$;
 \item $\int \frac{ |\hat v(k)|^2}{k^2+\omega(k)} \ud k =\infty$ and
 \begin{align*}
 \alpha=0 \text{ and } \beta>0 \text{ if } d&=2\\
 \alpha>\tfrac12 - \tfrac{\beta^2}{8+\beta^2} \text{ if } d&=3\,.
\end{align*} 
\end{enumerate}
\end{cond}
Note that the condition $\alpha< \tfrac d2$ implies $\hat v\in L^2_\mathrm{loc}$.
Later on, we will often state our results in terms of the parameter
 \begin{equation*}
  D:=d-2\alpha-2,
 \end{equation*}
which measures the (non)-integrability of $|\hat v(k)|^2(1+k^2)^{-1}$ and thus the singularity of the interaction. The first case of the condition corresponds to $D< 0$ and the second to $D\geq0$. 

\begin{definition}
Assume Condition~\ref{cond:alphabeta} holds and $d\in \{1,2,3\}$.
We define $A$ with domain $D(A)$ as the extension of 
\begin{equation*}
 \sum_{j=1}^M   a(v(x_j-y)):D(L)\to \hilb
\end{equation*}
given in
\begin{itemize}
 \item Equation~\eqref{eq:DnegAdef} if $\int \frac{ |\hat v(k)|^2}{k^2+\omega(k)} \ud k <\infty$, or
 \item Equations~\eqref{eq:DposAdef} and~\eqref{eq:DposTdef} if $\int \frac{ |\hat v(k)|^2}{k^2+\omega(k)} \ud k =\infty$.
\end{itemize}
\end{definition}

The integrability condition determines which of the cases in Condition~\ref{cond:alphabeta} applies.
Our main result is:

\begin{thm}\label{thm:main}
Let $d\in \{1,2,3\}$ and assume that $v$ and $\omega$ satisfy Condition~\ref{cond:alphabeta}. 
 Then the operator $H=L_0^*+gA$ with domain
 \begin{equation*}
  D(H)=\left \{ \psi \in \hilb\vert  \psi -G\psi \in D(L)\right\}
 \end{equation*}
 is self-adjoint and bounded from below. Its domain is contained in the domain of the number operator $N$ and for $\psi \in D(H)$ we have the equality 
 \begin{equation}\label{eq:H create}
  H\psi = L\psi + g\sum_{j=1}^M a^*(v(x_j-y))\psi + gA\psi.
 \end{equation}
in the dual of $D(L)$.
\end{thm}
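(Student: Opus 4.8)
The plan is to use the abstract framework of boundary conditions (in the spirit of the Posilicano/Kreĭn construction of self-adjoint extensions), adapted here so that the ``boundary space'' coincides with $\hilb$ itself. Recall $L_0$ is the restriction of $L$ to $D(L_0)=D(L)\cap\ker\bigl(\sum_j a(v(x_j-y))\bigr)$, a closed symmetric operator, and $G\colon\hilb\to\ker L_0^*$ is the bounded operator $G=-g\bigl(\sum_j a(v(x_j-y))L^{-1}\bigr)^*$. The starting point is the decomposition: every $\psi\in D(H)$ splits uniquely as $\psi=\psi_L+G\psi$ with $\psi_L:=\psi-G\psi\in D(L)$ (uniqueness because $G\phi\in D(L)$ forces $\phi=0$, which in turn follows from $G\phi\in D(L)\cap\mathrm{ran}\,G\subset\ker L_0^*\cap D(L)=\ker L_0$ and $G^*$ being injective on $\ker L_0$). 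On this domain $L_0^*\psi=L\psi_L$ by \eqref{eq: G Ker}, and then \eqref{eq:L+a^*} gives $L_0^*\psi=L\psi+g\sum_j a^*(v(x_j-y))\psi$ in $D(L)'$, which is \eqref{eq:H create} once we add $gA\psi$; so the operator identity is essentially built into the definition, and the real content is self-adjointness, semiboundedness, and $D(H)\subset D(N)$.

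The core of the proof is to recognise $H$ as a Kreĭn-type extension governed by a single operator. Writing $\psi=\psi_L+G\phi$ with $\phi=\psi$ imposed, the condition $\psi\in D(H)$ together with $H\psi\in\hilb$ is equivalent, after applying $L^{-1}$ to \eqref{eq:L+a^*} and projecting, to an equation of the form $A\psi - T\psi \in \hilb$ where $T$ is the ``diagonal'' part of the would-be annihilation operator acting on the singular piece $G\phi$; concretely one expects the map $\phi\mapsto \sum_j a(v(x_j-y))\,\bigl(\text{regular part of }G\phi\bigr)$ to recombine with the explicitly subtracted/renormalised term defining $A$ (Equations~\eqref{eq:DnegAdef} or \eqref{eq:DposAdef}--\eqref{eq:DposTdef}) so that $H$ becomes the self-adjoint operator associated, via the abstract theory, to a symmetric operator on $\hilb$. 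In the form-perturbation case ($D<0$) the relevant operator is bounded (or at least $L^{1/2}$-bounded) and one invokes the KLMN/Kreĭn-resolvent machinery directly; in the renormalisable case ($D\ge 0$) the operator $T$ is itself only defined after subtracting a divergent diagonal, and the decomposition $A=A_{\mathrm{reg}}+T_{\mathrm{ren}}$ with the explicit $T$ from \eqref{eq:DposTdef} is what makes everything finite — this is precisely where Condition~\ref{cond:alphabeta}(2), i.e. the quantitative bounds on $\alpha$ and $\beta$ (equivalently on $D$), is consumed. For self-adjointness one checks that the resolvent formula $(H-z)^{-1}=(L_0-z)^{-1}+G_z(\text{bounded, invertible})G_z^*$ makes sense, with $G_z=(\mathbf 1+z L_0^{*\,-1}\text{-corrections})G$, and that the middle factor is boundedly invertible for $z$ negative and large, which simultaneously gives semiboundedness.

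Key steps, in order: (i) establish the unique decomposition $\psi=\psi_L+G\psi$ on $D(H)$ and derive \eqref{eq:H create} as above; (ii) show $D(H)\subset D(N)$ — this follows because $D(L)\subset D(N)$ (stated in the excerpt, using $\omega\ge e_0>0$) and $G\psi$ improves the boson number by one in a controlled way, so one needs a bound $\|N G\phi\|\lesssim\|\phi\|$ or at worst $\|NG\phi\|\lesssim\|H\psi\|+\|\psi\|$, obtained from the explicit kernel of $G$ together with the $|\hat v(k)|\le|k|^{-\alpha}$, $\omega(k)\ge(1+k^2)^{\beta/2}$ bounds; (iii) compute the ``interaction'' operator obtained by feeding $G\phi$ into the extended annihilation operator $A$, and identify $H$ with the abstract extension determined by it; (iv) prove the relevant self-adjointness / semiboundedness statement for that abstract extension, in the two cases, using Condition~\ref{cond:alphabeta} to control the operator (bounded relative perturbation when $D<0$; finiteness after renormalised subtraction when $D\ge0$); (v) conclude via the Kreĭn resolvent formula. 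The main obstacle is step (iii)--(iv) in the renormalisable regime $D\ge 0$: there the naive diagonal contribution of $A$ on $\mathrm{ran}\,G$ is divergent, and one must show that the particular finite-part prescription encoded in \eqref{eq:DposAdef}--\eqref{eq:DposTdef} produces an operator that is (after the explicit subtraction) bounded below and gives a closed, self-adjoint $H$; quantifying this is exactly where the sharp interplay between $\alpha$ and $\beta$ in Condition~\ref{cond:alphabeta}(2) — i.e. the dimension-dependent inequalities for $d=2,3$ — becomes essential, and it is the one place where routine functional-analytic arguments must be supplemented by a genuine estimate on singular integral kernels.
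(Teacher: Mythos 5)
Your step (i) and the overall picture (split $\psi=(1-G)\psi+G\psi$, derive \eqref{eq:H create} from \eqref{eq: G Ker} and \eqref{eq:L+a^*}, reduce everything to controlling $T=gAG$) match the paper, but the mechanism you invoke for the actual self-adjointness is a genuine gap. You defer steps (iii)--(v) to ``the abstract theory'' of Kre\u{\i}n/Posilicano-type extensions, yet the condition defining $D(H)$ is $B\psi=\psi$ (the boundary datum equals the state itself, coupling the $n$- and $(n{-}1)$-boson sectors), \emph{not} a relation between the two boundary maps $B\psi$ and $A\psi$ given by a fixed operator on the boundary space; so the standard Kre\u{\i}n resolvent formula with a middle factor of the form $(\Theta-M(z))^{-1}$ does not apply, and in the quasi-boundary-triple setting (cf.\ Remark~\ref{rem:Tdef} and the remark after Lemma~\ref{lem:DnegTrelbounded}) self-adjointness of such restrictions is exactly what is \emph{not} automatic. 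Your proposed formula $(H-z)^{-1}=(L_0-z)^{-1}+G_z(\text{bounded invertible})G_z^*$ also has the wrong structure here: solving $(H-z)\psi=f$ with $(1-G_z)\psi\in D(L)$ leads to $(H-z)=(1-G_{\bar z}^{\,*})(L-z)(1-G_z)+\tilde T_z$, i.e.\ a sandwich $(1-G_z)^{-1}[\cdots]^{-1}(1-G_{\bar z}^{\,*})^{-1}$, not ``free resolvent plus $G_zSG_z^*$''. What actually closes the argument in the paper is the identity $H=(1-G)^*L(1-G)+T$ on $D(H)$ (Equation~\eqref{eq:H rewrite v2}), together with three quantitative inputs you never supply: invertibility of $1-G$ with control of $N$ (Lemmas~\ref{lem:Dnegnisleftinv}, \ref{lem:Dposnisleftinv}, via the sector-wise decay of $\|G\|$ and a Neumann series), self-adjointness of $H_0=(1-G)^*L(1-G)$ on $(1-G)^{-1}D(L)$ (Lemma~\ref{lem:dressed}), and infinitesimal $H_0$-boundedness of $T$, which in the renormalisable regime requires the bounds of Propositions~\ref{prop:generalboundong} and~\ref{prop:Tmainprop} and the delicate parameter choice of Lemma~\ref{lem:D_s} — that is where Condition~\ref{cond:alphabeta}(2) is consumed, and no abstract extension theorem substitutes for it.

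Two further concrete problems. For step (ii), the bound $\norm{NG\phi}\lesssim\norm{\phi}$ you hope for is false: $G$ gains only a fractional power of $N$ (sector-wise $\norm{G}_{\hilb\uppar{n-1}\to\hilb\uppar n}\lesssim n^{D/4}$ when $D<0$, and only $D(L^\eta)$ with $\eta<1/2$ when $D\geq0$, cf.\ Lemma~\ref{lem:DnegGanda} and Corollary~\ref{lem:gonfockspace}); the paper instead gets $D(H)\subset D(N)$ from the invertibility estimate $\norm{N\psi}\leq C(\norm{N(1-G)\psi}+\norm{\psi})$. Your fallback $\norm{NG\phi}\lesssim\norm{H\psi}+\norm{\psi}$ presupposes exactly the relative bounds you have not established. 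And in the case $D<0$, invoking KLMN produces \emph{a} self-adjoint, semibounded operator from the form \eqref{eq:QF}, but the theorem asserts self-adjointness on the specific domain $(1-G)^{-1}D(L)$ with the specific action; identifying the KLMN operator with that one is essentially the content of the paper's Kato--Rellich argument, so it cannot be cited as a shortcut. (The uniqueness argument in step (i) is also garbled — what is needed is injectivity of $G$ together with $\ker L_0^*\cap D(L)=\{0\}$, not injectivity of ``$G^*$ on $\ker L_0$'' — but that point is inessential to the theorem.)
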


For the Fröhlich model we are in the first case of Condition~\ref{cond:alphabeta} and have $\alpha=1$. For the Nelson model we can choose $\beta=1$, $\alpha=\tfrac12$. For $\beta=1$ the condition on $\alpha$ is $\alpha>\tfrac7{18}$, which also allows for slightly more singular cases. For our model of nonrelativistic point-particles in two dimensions the conditions are satisfied with $\beta=2$ and $\alpha=0$.
The corresponding model in one dimension, which is an extension of the one treated in~\cite{KeSi16} with moving sources, is a form perturbation. In fact, in one dimension we always have $\int \frac{ |\hat v(k)|^2}{k^2+\omega(k)} \ud k <\infty$ since we assume a bound with $0\leq\alpha <\tfrac 12$. 
For nonrelativistic bosons in three dimensions with $\beta=2$ our condition is $\alpha>\tfrac16$. This excludes $v=\delta$, corresponding to a model which is not known to be renormalisable (in sense of operators explained above).
However, our methods can be adapted to construct a Hamiltonian also in this case. This will be the subject of an upcoming publication by the first author~\cite{La18}.

Our result provides a self-adjoint operator $H$ whose action is given by~\eqref{eq:H formal}, if the separate terms are interpreted as elements of $D(L)'$ and $ \sum_{j=1}^M   a(v(x_j-y))$ is suitably extended. In the case of form perturbations, the annihilation operator is automatically well defined on $D(H)\subset D(L^{1/2})$. Our theorem then also implies that the quadratic form of $H$ is indeed given by the usual expression~\eqref{eq:QF}, since in this case Equation~\eqref{eq:L+a^*} also holds in the sense of quadratic forms on $D(L^{1/2})$.

 For the more singular models the extension of the annihilation operator involves an operation that can be interpreted as the addition of an \enquote{infinite constant}, and it is certainly not unique. 
 These models can also be treated by a renormalisation technique, see~\cite{GrWu17}. We make a choice of the extension $A$ for which $H$ coincides with the operator $H_\infty$ obtained by renormalisation (see also Remark~\ref{rem:Tdef}). The following theorem, proved in Section~\ref{sect:proof limit}, implies that $H=H_\infty$.

\begin{thm}\label{thm:renorm}
Let the conditions of Theorem~\ref{thm:main} be satisfied and $\int \frac{ |\hat v(k)|^2}{k^2+\omega(k)} \ud k =\infty$.
 For $\Lambda \in \R_+$ let $H_\Lambda$ be the Hamiltonian with the regularised interaction defined by $\hat v_\Lambda(k)=\hat v(k)\chi_\Lambda(k)$, where $\chi_\Lambda$ is the characteristic function of a ball of radius $\Lambda$, and let
 \begin{equation*}
  E_\Lambda= g^2 M \int \frac{ |\hat v_\Lambda(k)|^2}{k^2 + \omega(k)} \ud k\,.
 \end{equation*}
Then $H_\Lambda + E_\Lambda$ converges to $H$ in the strong resolvent sense.
\end{thm}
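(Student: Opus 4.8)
The plan is to reduce the claim to strong resolvent convergence at a single point. Fix $z=-\mu$ with $\mu>0$ so large that $z$ lies below $H$ and, for all sufficiently large $\Lambda$, below $H_\Lambda+E_\Lambda$ — the lower bound being uniform in $\Lambda$, which follows from the same semiboundedness estimates that enter Theorem~\ref{thm:main}. Then the operators $(H_\Lambda+E_\Lambda-z)^{-1}$ are uniformly bounded, and it suffices to show $(H_\Lambda+E_\Lambda-z)^{-1}\to(H-z)^{-1}$ in the strong operator topology, since strong convergence of resolvents at one point of the common resolvent set, with a self-adjoint limit, propagates to the whole resolvent set. One should not expect norm convergence here, because the creation and annihilation terms are not relatively compact perturbations (there is neither decay in the $x$-variables nor in the boson number).

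Both resolvents will be handled through the Krein-type formula that the construction underlying Theorem~\ref{thm:main} produces for $H$, namely $(H-z)^{-1}=(L-z)^{-1}+G_z\,T_z^{-1}\,G_{\bar z}^{*}$ (suitably interpreted), with $G_z$ the $z$-dependent analogue of $G$ — built from $\sum_{j}a^*(v(x_j-y))$ via $(L-z)^{-1}$ — and $T_z$ the boundary operator on $\hilb$ from~\eqref{eq:DposTdef}, which is boundedly invertible for $z\ll0$. For the regularised interaction the construction applies verbatim to the pair $(v_\Lambda,\omega)$: because $v_\Lambda\in L^2$ the creation and annihilation operators are genuine $L$-bounded operators, the boundary condition is vacuous (the domain $\{\psi:\psi-G^\Lambda\psi\in D(L)\}$ reduces to $D(L)$), and the scheme simply reproduces $H_\Lambda$ with resolvent $(H_\Lambda-z)^{-1}=(L-z)^{-1}+G_z^\Lambda\,(T_z^\Lambda)^{-1}\,G_{\bar z}^{\Lambda *}$. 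Adding the constant $E_\Lambda$ only shifts the spectral parameter, and the point of Nelson's counterterm — which I would make explicit by inspecting the self-energy part of $T_z^\Lambda$ — is that this shift is exactly the one needed so that the boundary operator of $H_\Lambda+E_\Lambda$, call it $\widetilde T_z^\Lambda$, stays bounded as $\Lambda\to\infty$: the divergence of $T_z^\Lambda$ is carried by the diagonal ($i=j$, same boson number) part of the self-energy, whose leading term is $E_\Lambda$ times the identity operator, with $E_\Lambda=g^2M\int|\hat v_\Lambda(k)|^2(k^2+\omega(k))^{-1}\,\ud k$ — independent of $z$ and of the boson number — and this is precisely what the counterterm removes.

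It then remains to pass to the limit in the uniformly bounded Krein formula for $H_\Lambda+E_\Lambda$. The free part $(L-z)^{-1}$ carries no $\Lambda$-dependence; $G_z^\Lambda\to G_z$ strongly — indeed in operator norm into $D(L^{1/2})$, uniformly for $z$ in a left half-line — which follows from $\hat v_\Lambda\to\hat v$ in $L^2_{\mathrm{loc}}$, the bound $|\hat v(k)|\leq|k|^{-\alpha}$ with $\alpha<d/2$, and dominated convergence, using the $L$-relative bounds of Corollary~\ref{lem:gonfockspace}. The crux — and the main obstacle — is the strong convergence $\widetilde T_z^\Lambda\to T_z$ on $\hilb$: once the divergent diagonal term has been cancelled by $E_\Lambda$, one must show that the remaining diagonal remainder together with all off-diagonal ($i\neq j$) and boson-number-changing contributions to the self-energy converge, with bounds uniform in $\Lambda$. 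This is exactly the content of the singular-integral estimates on which the proof of Theorem~\ref{thm:main} rests, and the sharp conditions on $(\alpha,\beta)$ in Condition~\ref{cond:alphabeta}(2) are precisely those under which the renormalised kernels defining $T_z$ in~\eqref{eq:DposTdef}, and the differences between the cutoff and limiting kernels, yield operators with the boundedness (relative to the free energy) required to close the argument; they give strong, but in general not norm, convergence. Since $\widetilde T_z^\Lambda$ and $T_z$ are boundedly invertible uniformly in $\Lambda$ for $\Lambda$ large (taking $\mu$ larger if needed), $(\widetilde T_z^\Lambda)^{-1}\to T_z^{-1}$ strongly, the whole Krein formula converges strongly, and $(H_\Lambda+E_\Lambda-z)^{-1}\to(H-z)^{-1}$, which is the assertion. (Alternatively, one could extract a weak subsequential limit $\psi$ of $(H_\Lambda+E_\Lambda-z)^{-1}\xi$ and identify it by checking $\psi-G\psi\in D(L)$ and $(H-z)\psi=\xi$ in the dual of $D(L)$, then upgrade to strong convergence; but the route above makes the role of $E_\Lambda$ most transparent.)
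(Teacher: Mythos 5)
Your argument rests on an exact Krein-type resolvent formula, $(H-z)^{-1}=(L-z)^{-1}+G_z\,T_z^{-1}G_{\bar z}^{*}$ and its cutoff analogue for $H_\Lambda$, and this is the step that fails. The boundary condition defining $H$ is $B\psi=\psi$, i.e.\ the ``charge'' is identified with the full wave function across sectors; it is \emph{not} of the form $\Gamma_1\psi=\Theta\,\Gamma_0\psi$ with a fixed parameter in the boundary space, so the standard Krein/Birman--Schwinger factorisation of the resolvent difference through $\operatorname{ran}G_z$ does not apply. Indeed, if $(H-z)\psi=\xi$ one only gets the implicit relation $(1-G(z))\psi=(L-z)^{-1}(\xi-gA\psi)$, and $(H-z)^{-1}\xi-(L-z)^{-1}\xi$ does not lie in $\operatorname{ran}G_z$ (its $D(L)$-part is not $(L-z)^{-1}\xi$, because of the annihilation terms). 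The cutoff version is false even more plainly: for $v_\Lambda\in L^2$ the operator $H_\Lambda$ is an honest additive perturbation of $L$, and $(L-z)^{-1}+G_z^{\Lambda}(T_z^{\Lambda})^{-1}G_{\bar z}^{\Lambda *}$ cannot equal $(H_\Lambda-z)^{-1}$ --- since $G_z^{\Lambda}=O(g)$ and $T_z^{\Lambda}=O(g^2)$, the claimed correction term is $O(1)$ as $g\to0$, whereas the true resolvent difference is $O(g)$; $T_z^{\Lambda}$ is only the one-loop self-energy, not the full Schur complement of the sector-coupling interaction. So the skeleton of the proof (pass to the limit in two exact Krein formulas) has no foundation in this setting, even though several of your ingredients (cancellation of the divergent diagonal self-energy by $E_\Lambda$, convergence of the renormalised kernels, uniform relative bounds, reduction to one spectral point) are correct in spirit.

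The paper's proof avoids any resolvent formula. It uses the exact operator identity $H_\Lambda=(1-G_\Lambda)^*L(1-G_\Lambda)+T_\Lambda$ on $D(L)$, with $G_\Lambda=-gL^{-1}a^*(V_\Lambda)$ and $T_\Lambda=-G_\Lambda^*LG_\Lambda$, parallel to $H=(1-G)^*L(1-G)+T$, and then applies the second resolvent identity at $\pm\ui$, splitting $H_\Lambda+E_\Lambda-H$ into $(1-G)^*L(G-G_\Lambda)$, $(G^*-G_\Lambda^*)L(1-G_\Lambda)$ and $T_\Lambda+E_\Lambda-T$. Convergence then follows from $G_\Lambda\to G$ (Proposition~\ref{prop:generalboundong} with $\hat v$ replaced by $\hat v(\chi_\Lambda-1)$), from $T_\Lambda+E_\Lambda\to T$ strongly as operators $D(T)\to\hilb$ (the diagonal part plus $E_\Lambda$ converging to $T_\ud$, the $\theta$- and $\tau$-terms handled by Lemmas~\ref{lem:DposT1} and~\ref{lem:T2mainlem}), and from the $\Lambda$-uniform relative bounds, which make $(H+\ui)^{-1}(1-G)^*L$ bounded and $L(1-G_\Lambda)(H_\Lambda+E_\Lambda+\ui)^{-1}$ bounded uniformly in $\Lambda$. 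If you want to keep your outline, you would have to replace the Krein formulas by this dressing identity (or prove a correct, necessarily implicit, resolvent representation), which amounts to redoing the paper's argument.
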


The domain of $H$ is explicit and for any given $\psi\in \hilb$ it is easy to check whether it belongs to $D(H)$ or not. In particular, the regularity properties of 
$\psi\in D(H)$ are easily deduced from the regularity of $G\psi$.
This allows us to answer Nelson's second question.

 \begin{cor}\label{cor:Nelson}
   Let the conditions of Theorem~\ref{thm:main} be satisfied and additionally $\omega \in L^\infty_{\mathrm{loc}}(\R^d)$. Then $D(H)\subset D(L^{1/2})$ if and only if $\int \frac{ |\hat v(k)|^2}{k^2+\omega(k)} \ud k <\infty$. Moreover, if $\int \frac{ |\hat v(k)|^2}{k^2+\omega(k)} \ud k =\infty$, then $D(H)\cap D(L^{1/2})=\{0\}$. 
 \end{cor}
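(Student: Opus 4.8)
The plan is to read the statement off the explicit domain $D(H)=\{\psi:\psi-G\psi\in D(L)\}$. Every $\psi\in D(H)$ splits as $\psi=G\psi+(\psi-G\psi)$ with $\psi-G\psi\in D(L)\subset D(L^{1/2})$, so $\psi\in D(L^{1/2})$ iff $G\psi\in D(L^{1/2})$; and since $L^{1/2}G\phi=-g\,L^{-1/2}\sum_{j=1}^M a^*(v(x_j-y))\phi$ while $a^*$ raises the boson number by one, $G\phi\in D(L^{1/2})$ is equivalent to $L^{-1/2}\sum_{j=1}^M a^*(v(x_j-y))\phi\uppar m\in\hilb$ for every $m\ge 0$. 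So the whole corollary reduces to the sector-level dichotomy: for $f$ in the $m$-boson sector, $L^{-1/2}\sum_{j=1}^M a^*(v(x_j-y))f\in\hilb$ holds for \emph{all} such $f$ when $\int\frac{\abs{\hat v(k)}^2}{k^2+\omega(k)}\,\ud k<\infty$, and forces $f=0$ when this integral diverges. Granting it: the convergent case gives $G\phi\in D(L^{1/2})$ for all $\phi\in\hilb$, hence $D(H)\subset D(L^{1/2})$ (the ``if''); the divergent case, applied with $f=\psi\uppar m$ for each $m$, forces $\psi\uppar m=0$ for all $m$, i.e.\ $\psi=0$, for every $\psi\in D(H)\cap D(L^{1/2})$ (the ``moreover''); and since $D(H)$ is dense ($H$ being self-adjoint by Theorem~\ref{thm:main}), the latter also settles the ``only if'', for $D(H)\subset D(L^{1/2})$ together with the divergence would force $D(H)=\{0\}$.

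The convergent case of the dichotomy is essentially free: a single boson sector is contained in $D(N^{1/2})$ (there $N$ acts as a scalar), so it follows by duality from the continuity $\sum_{j}a(v(x_j-y))\colon D(L^{1/2})\to D(N^{-1/2})$ recalled in the introduction. (In this regime $D(H)\subset D(L^{1/2})$ is moreover already built into the construction of $H$ as the semibounded operator of the form \eqref{eq:QF}, whose form domain is $D(L^{1/2})$.)

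The substance is the divergent case, and the main obstacle is to extract a genuine \emph{lower} bound on $\norm{L^{-1/2}\sum_j a^*(v(x_j-y))f}$ in spite of the symmetrisation and the cross terms between the $M$ sources. My plan is to localise, in Fourier space and for a fixed source index $j$ and large parameter $S$, to the region $\mathcal{R}_S$ of the $(m+1)$-boson momentum space on which all $x$-momenta $p_l$ with $l\neq j$ and all of the first $m$ boson momenta have modulus at most $S$, the remaining boson momentum $k$ satisfies $\abs{k}\ge 10mS$, and $\abs{p_j+k}\le S$ (so $p_j\approx-k$ is large). On $\mathcal{R}_S$, exactly one term of the symmetrised vector $(\sum_{j'}a^*(v(x_{j'}-y))f)^{\wedge}$, namely the one creating the fast boson $k$ at source $j$ and hence shifting the $j$-th argument of $\hat f$ to $p_j+k$ of modulus $\le S$, involves $\hat f$ only on the ball where \emph{all} arguments have modulus $\le S$; every other term involves $\hat f$ at a point whose $j$-th argument has modulus $\ge 8mS$. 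The free energy in the denominator is moreover $\le C_S(k^2+\omega(k))$ on $\mathcal{R}_S$ for $\abs{k}\ge 1$, using $\omega\in L^\infty_{\mathrm{loc}}$ to bound $\sum_{\iota\le m}\omega(\kappa_\iota)$. Restricting the integral defining $\norm{L^{-1/2}\sum_j a^*(v(x_j-y))f}^2$ to $\mathcal{R}_S$, splitting off the distinguished term and using the triangle inequality yields a bound of the form
\begin{equation*}
 \norm{L^{-1/2}\textstyle\sum_{j}a^*(v(x_j-y))f}^2 \;\ge\; \Bigl(\tfrac{\norm{f_S}}{\sqrt{(m+1)C_S}}\,\bigl(\textstyle\int_{\abs{k}\ge 10mS}\tfrac{\abs{\hat v(k)}^2}{k^2+\omega(k)}\,\ud k\bigr)^{1/2}-R_S\Bigr)^2,
\end{equation*}
where $f_S$ is the Fourier truncation of $f$ to $\{\text{all momenta}\le S\}$, which is nonzero for $S$ large if $f\neq 0$ (here I use that $v$ may be taken real, so $\abs{\hat v(-k)}=\abs{\hat v(k)}$), and $R_S<\infty$ collects the remaining terms — \emph{finite} for each fixed $S$ since $\hat v$ is bounded on $\{\abs{k}\ge 10mS\}$, $\omega\ge1$ and $f\in L^2$. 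Because $\abs{\hat v(k)}\le\abs{k}^{-\alpha}$ with $2\alpha<d$ makes $\int_{\R^d}\tfrac{\abs{\hat v}^2}{k^2+\omega}$ finite over any bounded set, the tail $\int_{\abs{k}\ge 10mS}\tfrac{\abs{\hat v}^2}{k^2+\omega}$ still diverges, so the right-hand side is $(\infty-R_S)^2=\infty$ and $L^{-1/2}\sum_j a^*(v(x_j-y))f\notin\hilb$ whenever $f\neq0$. The only real care is the bookkeeping that isolates the distinguished term and shows $R_S$ is finite; once one notices that finiteness of $R_S$ (not smallness) suffices, the rest is routine.
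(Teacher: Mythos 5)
Your proposal is correct and follows essentially the same route as the paper: the divergent case is exactly the paper's Proposition~\ref{prop:non-reg} specialised to $\eta=1/2$ (isolate the single creation term whose arguments of $\hat f$ stay in a bounded momentum region, bound $L$ there by $C_S(k^2+\omega(k))$ using $\omega\in L^\infty_{\mathrm{loc}}$, and check that the remaining cross terms are square-integrable on the localised region via changes of variables and $\hat v\in L^2_{\mathrm{loc}}$ -- note that for the terms where the $\hat v$-factor sits at a \emph{bounded} boson momentum it is local square-integrability, not boundedness, that does the work), while the convergent case is the content of Corollary~\ref{cor:reg}, i.e.\ the mapping property of $G$ from Lemma~\ref{lem:DnegGanda}/Proposition~\ref{prop:generalboundong}. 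The only claim to temper is ``$G\phi\in D(L^{1/2})$ for all $\phi\in\hilb$'' in the convergent case: the sector-wise bound carries a factor growing with the boson number, so it should be applied to $\phi\in D(N^{1/2})$, which suffices because $D(H)\subset D(N)$ by Theorem~\ref{thm:main}.
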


This corollary follows from our more precise discussion of the regularity properties of $D(H)$ in Section~\ref{sect:regularity}. Essentially the same result for $M=1$ was recently obtained~\cite{GrWu16,GrWu17} by different methods.

The structure of the proof of our main result, Theorem~\ref{thm:main}, is essentially the same for the cases of form perturbations ($\int \frac{ |\hat v(k)|^2}{k^2+\omega(k)} \ud k <\infty$) and renormalisable models ($\int \frac{ |\hat v(k)|^2}{k^2+\omega(k)} \ud k =\infty$). However, the technical difficulties are slightly different, and much greater in the second case. For this reason, we will give the proof of the first case separately, in Section~\ref{sect:form}. This may also serve as a less technical presentation of the general strategy. The proof for the second case will be given in Section~\ref{sect:renorm}.
In both cases, the crucial technical ingredients of the proof are bounds on the operator $T=gAG$ that are sufficiently good regarding both regularity and particle number. This operator also appears in the theory of point interactions (with $v=\delta$), where it is known as the Ter-Martyrosyan--Skornyakov operator, see e.g.~\cite{Co_etal15,DeFiTe1994, MoSe17, MoSe18}.
We will build on some of the results obtained in this area, as we explain in Remark~\ref{rem:T}.

\section{Form perturbations}
\label{sect:form}

In this section we will prove Theorem~\ref{thm:main} under the assumptions of the first case in Condition~\ref{cond:alphabeta}. That is, we assume that $\omega(k) \geq 1$, $v\in \mathscr{S}'(\R^d)$, $v\notin L^2(\R^d)$,
and that $|\hat v(k) | \leq |k|^{-\alpha}$  for some $\tfrac d2 >\alpha > \tfrac d2 -1 $, respectively $d-2\alpha-2=D<0$.
We will use the notation
\begin{equation}\label{eq:aV}
 a(V):=\sum_{i=1}^{M} a(v(x_i-y)). 
\end{equation}
Under the assumptions of this section $a(V)$ is operator-bounded by $L$, as will be proved in Lemma~\ref{lem:Dnegbasicbound} below.

In the following we will often work in the Fourier representation. We denote by $P=(p_1,\dots, p_M)$, $K=(k_1,\dots, k_{n})$ the conjugate Fourier variables to $X=(x_1,\dots x_M)$, $Y=(y_1,\dots, y_{n})$.
The vector $\hat Q_j\in \R^{d 
\nu} $ is $Q\in \R^{d(\nu+1)}$ with the $j$-th entry deleted and $e_i$ is the inclusion of the $i$-th summand in $\R^{d\mu}=\bigoplus_{i=1}^\mu \R^{d}$. We will denote the Fourier representation of the operator  $L$ (on the $n$-boson sector) as multiplication by the function 
\begin{equation*}
L(P,K):=P^2+ \sum_{j=1}^n \omega(k_j)=:P^2 + \Omega(K). 
\end{equation*}

\begin{lem}
\label{lem:Dnegbasicbound}
If Condition~\ref{cond:alphabeta} holds with $D<0$ then
\begin{equation*}
 a(V)L^{-\frac12} N^{-\frac{2+D}{4}}
\end{equation*}
is a bounded operator on $\hilb$.
\begin{proof}
Since we are not concerned with the dependence of the norm on $M$ it is sufficient to estimate one term in the sum~\eqref{eq:aV} and then bound the norm of the sum by the sum of the norms.

In Fourier representation, we have
\begin{align*}
&\left(\Fou a(v(x_1-y)) L^{-\frac{1}{2}} N^{-\frac{2+D}{4}} \psi \right) \uppar {n}(P,\hat{K}_{n+1}) 
\\
&
= 
\sqrt{n+1} \int_{\R^d} \frac{\overline{\hat v(k_{n+1})}\hat \psi \uppar {n+1} (P-e_1k_{n+1},K)}{L(P-e_1k_{n+1},K)^{\frac12} (n+1)^{\frac{2+D}{4}}}     \, \ud k_{n+1} \, .
\end{align*}
To prove our claim it is sufficient to show that for some constant $C>0$ it holds that
\begin{align}
&
\abs{\int_{\R^d} \frac{\overline{\hat v(k_{n+1})}\hat \psi \uppar {n+1} (P-e_1k_{n+1},K)}{L(P-e_1k_{n+1},K)^{\frac12}}     \, \ud k_{n+1}}^2
\nonumber \\ \label{eq:Dnegbasicbound}
&
 \leq C  (n+1)^{\frac{D}{2}} \int_{\R^d}  \abs{\hat \psi \uppar {n+1} (P-e_1k_{n+1},K)}^2   \, \ud k_{n+1} \, ,
\end{align}
because we may afterwards integrate in $P$ and $\hat{K}_{n+1}$ and perform a change of variables $P \rightarrow P-{e}_1  k_{n+1}$. 


Using the Cauchy-Schwarz inequality, and our assumptions on $\hat v$ and $\omega$, we can bound the integral from above by

\begin{align*}
&\abs{\int_{\R^d} \frac{\overline{\hat v(k_{n+1})}\hat \psi \uppar {n+1} (P-e_1k_{n+1},K)}{L(P-e_1 k_{n+1},K)^{\frac{1}{2}}}     \, \ud k_{n+1}}^2
 \\
 &
\leq
\int_{\R^d} \frac{\abs{q}^{-2 \alpha}}{(p_1- q)^2 + n+1} \, \ud q \int_{\R^d} \abs{ \hat\psi \uppar {n+1}(P-{e}_1 k_{n+1},K)}^2 \, \ud k_{n+1} \, .
\end{align*}
The integral in $q$ takes its maximal value at $p_1=0$, by the Hardy-Littlewood inequality.
Rescaling by $(n+1)^{-1/2}$ then yields the upper bound
\begin{align*}
(n+1)^{-1+\frac{d-2\alpha}{2}} \int_{\R^d} \frac{\abs{q'}^{-2 \alpha}}{q'^2 +1} \, \ud  q' \int_{\R^d} \abs{ \hat\psi \uppar {n+1}(P-{e}_1 k_{n+1},K)}^2  \ud k_{n+1} \,,
\end{align*}
and this proves the claim.
\end{proof}
\end{lem}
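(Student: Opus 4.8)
The plan is to prove the operator bound $\norm{a(V)L^{-1/2}N^{-(2+D)/4}}<\infty$ by working in the Fourier representation, reducing to a single summand $a(v(x_1-y))$ since the $M$-dependence is irrelevant, and then estimating the resulting integral kernel sectorwise in the boson number $n$. First I would write out $\bigl(\Fou a(v(x_1-y))L^{-1/2}N^{-(2+D)/4}\psi\bigr)\uppar n(P,\hat K_{n+1})$ as an integral over the momentum $k_{n+1}$ of the annihilated boson, pulling out the factor $\sqrt{n+1}\,(n+1)^{-(2+D)/4}$ coming from the annihilation operator and the number operator. The key reduction is that it suffices to prove the pointwise-in-$(P,\hat K_{n+1})$ bound~\eqref{eq:Dnegbasicbound}, because integrating that inequality in the remaining variables and performing the shift $P\mapsto P-e_1k_{n+1}$ turns the right-hand side into $C(n+1)^{D/2}\norm{\psi\uppar{n+1}}^2$, and then $(n+1)\cdot(n+1)^{-(2+D)/2}\cdot(n+1)^{D/2}=1$ gives exactly the bound that, summed over $n$ (and over the $M$ terms), yields boundedness on $\hilb$.

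Next I would establish~\eqref{eq:Dnegbasicbound} itself. Apply Cauchy--Schwarz in the $k_{n+1}$-integral, splitting the integrand as $\overline{\hat v(k_{n+1})}L(P-e_1k_{n+1},K)^{-1/2}$ times $\hat\psi\uppar{n+1}(P-e_1k_{n+1},K)$. Using $\abs{\hat v(k)}\le\abs k^{-\alpha}$ and $L(P-e_1k_{n+1},K)\ge (p_1-k_{n+1})^2+\Omega(K)\ge (p_1-k_{n+1})^2+n+1$ (since $\omega\ge 1$ on each of the $n+1$ factors — note the annihilated boson is not present, but the kinetic term $P^2\ge p_1^2$ together with $\Omega(K)$ over the $n$ remaining bosons still gives at least... actually one uses $L\ge$ a constant times $n+1$ crudely, or more carefully $(p_1-k_{n+1})^2+\Omega(K)$ with $\Omega(K)\ge n$ plus the contribution needed), the $L^2$-norm of the first factor is bounded by $\int_{\R^d}\frac{\abs q^{-2\alpha}}{(p_1-q)^2+n+1}\,\ud q$ after renaming $q=k_{n+1}$. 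By the Hardy--Littlewood rearrangement inequality this integral is maximised at $p_1=0$, removing the $p_1$-dependence. Finally rescale $q=\sqrt{n+1}\,q'$: the substitution produces a factor $(n+1)^{-1}(n+1)^{d/2}(n+1)^{-\alpha}=(n+1)^{-1+(d-2\alpha)/2}=(n+1)^{D/2}$ multiplying the convergent integral $\int_{\R^d}\frac{\abs{q'}^{-2\alpha}}{q'^2+1}\,\ud q'$, which is finite precisely because $0\le\alpha<d/2$ (local integrability at the origin) and $2\alpha+2>d$, i.e. $D<0$ (decay at infinity). This gives~\eqref{eq:Dnegbasicbound} with $C=\int_{\R^d}\frac{\abs{q'}^{-2\alpha}}{q'^2+1}\,\ud q'$.

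The only genuinely delicate point — the "main obstacle" — is getting the $n$-dependence to come out with the exact exponent $D/2$, which is what makes the factor $N^{-(2+D)/4}$ both necessary and sufficient. This is why one must use the bound $\omega(k_j)\ge 1$ to lower-bound $\Omega(K)$ by $n$ (or $n+1$) before rescaling, rather than discarding it; a cruder estimate would spoil either the power of $n$ or the convergence of the $q'$-integral. Everything else is a routine Cauchy--Schwarz plus rearrangement argument. I would also remark that the hypothesis $\alpha<d/2$ is exactly what is needed for the singularity $\abs{q'}^{-2\alpha}$ to be locally integrable in $\R^d$, and $D<0$ for the integrability at infinity, so the two halves of Condition~\ref{cond:alphabeta}(1) are each used once.
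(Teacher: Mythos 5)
Your proposal is correct and follows the paper's proof essentially verbatim: Fourier representation, reduction to the pointwise bound~\eqref{eq:Dnegbasicbound}, Cauchy--Schwarz, Hardy--Littlewood to set $p_1=0$, and the rescaling $q=\sqrt{n+1}\,q'$ producing the factor $(n+1)^{D/2}$. The only wobble is your parenthetical about the annihilated boson being ``not present'': since $\hat\psi\uppar{n+1}$ has $n+1$ boson momenta (including $k_{n+1}$), one simply has $\Omega(K)\ge n+1$ from $\omega\ge 1$, which is exactly the lower bound you (and the paper) use, so there is no gap.
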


\subsection{The extended domain}

Lemma~\ref{lem:Dnegbasicbound} has several important consequences. First of all, $a(V): D(L) \rightarrow \hilb$ is continuous in the graph norm of $L$. Thus $D(L_0)$, defined in~\eqref{eq:D(L_0)} as the kernel of $a(V)$ in $D(L)$, is a closed subspace of $D(L)$ with this norm. Due to our assumption that $v\notin L^2$, this subspace is also dense in $\hilb$.

\begin{lem}
If  Condition~\ref{cond:alphabeta} is satisfied
 the space $D(L_0)$ is dense in $\hilb$.
\end{lem}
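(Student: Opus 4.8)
The plan is to show that $D(L_0)$ contains a dense subset of $\hilb$, for instance the algebraic direct sum of finitely many sectors with, on each sector, a Schwartz function compactly supported in Fourier space. It suffices to prove that such a dense set can be approximated in the norm of $\hilb$ (not the graph norm of $L$) by elements of $D(L_0)$. Since $D(L_0)$ is the kernel of $a(V)$ inside $D(L)$, the task is to correct a given nice $\psi$ by a small error so that $a(V)\psi$ is cancelled.

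The key observation is that the range of $a(V)$ acting on the $n$-boson sector lands in the $(n-1)$-boson sector, so the issue is really about producing, for a prescribed target $\chi\uppar{n-1}$, a small vector $\eta$ on the $n$-boson sector with $a(V)\eta = -\chi\uppar{n-1}$ and $\norm{\eta}_\hilb$ small. First I would fix $\psi$ in the dense set above, lying in sectors up to some $N_0$, and treat it sector by sector starting from the top sector $n=N_0$: choose $\eta\uppar{N_0}\in D(L^{(N_0)})$ with $a(V)\eta\uppar{N_0}=-a(V)\psi\uppar{N_0}$ and $\norm{\eta\uppar{N_0}}$ as small as we like, then pass to the modified function $\psi+\eta\uppar{N_0}$ whose top sector now lies in $\ker a(V)$, and repeat downward. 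To build such an $\eta$ with small norm: given a target $f\in\hilb\uppar{n-1}$, since $v\notin L^2$ we have $\abs{\hat v}^2$ not integrable, so we can pick a shell $R\le\abs{k_{n+1}}\le 2R$ on which $\int \abs{\hat v(k)}^2\,\ud k$ is as large as we please; placing mass of $\hat\eta\uppar{n}$ concentrated (in the boson variable) on that shell, with a fixed profile in the remaining variables chosen to match $f$, and normalising so that $a(V)\eta\uppar{n}=-f$, the $L^2$-norm of $\eta\uppar{n}$ is controlled by $(\int_{R\le\abs k\le 2R}\abs{\hat v}^2\,\ud k)^{-1/2}\norm{f}$, which tends to $0$ as $R\to\infty$. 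The factor $\sqrt{n}$ from the annihilation operator and the finitely many terms in $a(V)$ only change the constant.

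The one point requiring a little care is that $\eta\uppar{n}$ must genuinely lie in $D(L)$; but since it is by construction a Schwartz-type function with Fourier transform supported in a bounded region of $P$ and of $\hat K_n$ and in a shell in $k_n$, and $\omega$ is locally bounded (indeed $\omega(k)\ge 1$ with the polynomial upper growth implicit in Condition~\ref{cond:alphabeta} when needed), it lies in $D(L)$ automatically, so $\eta\uppar{n}\in D(L)$ and the correction stays inside $D(L)$. One should also note that modifying the $n$-boson sector leaves all lower sectors unchanged by $a(V)$ except the $(n-1)$-boson sector, which is exactly the one we are about to treat in the next step, so the downward induction closes without interference. Summing the finitely many corrections $\eta=\sum_{n=1}^{N_0}\eta\uppar{n}$ (each made small) gives $\psi+\eta\in D(L_0)$ with $\norm{\eta}_\hilb$ arbitrarily small, proving density.

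The main obstacle, such as it is, is purely bookkeeping: one must make sure the correction added at level $n$ does not spoil the kernel condition already arranged at levels $>n$ (it does not, since $a(V)$ lowers the boson number by exactly one), and that the profile used to match the target $f$ at each step is chosen measurably and with norm comparable to $\norm f$. Neither is hard; the essential input is simply $v\notin L^2$, which lets the energy denominator in $\eta$ be pushed to high frequencies at no cost in $L^2$-norm.
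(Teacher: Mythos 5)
Your overall strategy (correct a dense set of nice vectors sector by sector, exploiting $v\notin L^2$ to place the correction at high boson frequencies at negligible $L^2$-cost) is plausible, and the bookkeeping you worry about is indeed harmless, but the central step has a genuine gap: the claim that an ansatz of the form \enquote{profile in the remaining variables $\times$ shell factor in the new boson variable, scalar-normalised} satisfies $a(V)\eta\uppar{n}=-f$ \emph{exactly}. It does not. In Fourier variables the annihilation operator acts as $\sqrt{n}\sum_{i=1}^M\int\overline{\hat v(k)}\,\hat\eta\uppar{n}(P-e_ik,\hat K_n,k)\,\ud k$, so the momentum shift $P\mapsto P-e_ik$ smears any product ansatz; equivalently, the form factor seen by $a(V)$ is the $X$-dependent function $\hat V(X,k)=\sum_i \ue^{\ui kx_i}\hat v(k)$, and for $M\geq2$ the cross-source terms produce non-constant functions of $x_i-x_\ell$ of the same order as the diagonal term, so no single normalisation constant works. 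Even for $M=1$, the bosonic symmetrisation of $\eta\uppar{n}$ forces terms in which $v$ contracts against the bosons carried by the profile $f$ rather than against the shell factor (for every $n\geq2$). These errors are lower order after dividing by $\int_{\mathrm{shell}}\abs{\hat v}^2$, but they are nonzero, and $D(L_0)=D(L)\cap\ker a(V)$ requires the kernel condition exactly, so \enquote{small} is not enough. Trying to repair this by iterating over ever higher shells is also delicate: the limit must still lie in $D(L)$, and since the mass is pushed to frequencies where $L$ is large, summability in graph norm is unclear when the divergence of $\int\abs{\hat v}^2$ is slow. A smaller slip: $v\notin L^2$ with $\hat v\in L^2_{\mathrm{loc}}$ only guarantees annuli $R\leq\abs k\leq R'$ of arbitrarily large mass for suitable large $R'$; dyadic shells $R\leq\abs k\leq 2R$ need not carry large mass.

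For comparison, the paper's proof avoids the exactness problem altogether: it works fibre-wise in $X$ in the direct integral $\hilb=\int^\oplus_{\R^{Md}}\Gamma(L^2(\R^d))\,\ud X$, subtracts from a one-particle function its component along the truncated form factor $\hat V(X,\cdot)\chi_\Lambda$ (this kills the linear functional $V(X)$ exactly, with vanishing error since $\norm{\hat V_\Lambda(X,\cdot)}_{L^2}\to\infty$), and then uses coherent states, for which $a(V(X))\Phi(f)=\langle V(X),f\rangle\Phi(f)$ vanishes identically once $f\in\ker V(X)$ — the coherent-state structure is what makes the kernel condition hold on all sectors simultaneously and exactly. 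This route also isolates an ingredient your argument never confronts: one must know that $\hat V(X,\cdot)\notin L^2$ for almost every $X$ (Lemma~\ref{lem:V}), which for $M\geq2$ is not immediate from $v\notin L^2$ because of possible cancellations between the sources.
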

\begin{proof}
The Hilbert space $\hilb$ is equal to the direct integral $\hilb=\int^\oplus_{\R^{Md}} \Gamma(L^2(\R^d)) \ud X$. We start by proving that for almost every $X\in \R^{Md}$ the kernel of $a(V(X))=\sum_{i=1}^M a(v(x_i-y))$ is dense in $\Gamma(L^2(\R^d))$.

The first step is to show that the kernel of  the linear functional defined by $\hat V(X,k)=\sum_{i=1}^M \ue^{\ui k x_i} \hat v(k)\in L^2_\mathrm{loc}(\R^d)$ is dense in $L^2(\R^d)$ if $\hat V(X)\notin L^2(\R^d)$.
The set of $X$ where $\hat V(X)\in L^2$ has  measure zero, see Lemma~\ref{lem:V} in the appendix.
Let $\hat v_\Lambda(k)=\hat v(k) \chi_\Lambda(k)$ with the characteristic function $\chi_\Lambda$ of a ball of radius $\Lambda>0$ and let $\hat V_\Lambda$ be defined like $\hat V$, with $\hat v_\Lambda$ replacing $\hat v$.
Let $f\in H^2(\R^d)$ and set 
\begin{equation*}
 \hat f_\Lambda(X,k) := \hat f(k) - \frac{\hat V(X,k)\chi_\Lambda(k)}{\int |\hat V_\Lambda(X,k')|^2 \ud k'}  \int \overline{\hat V(X,k')} \hat f(k') \ud k'.
\end{equation*}
Now $\lim_{\Lambda \to \infty} \int |\hat V_\Lambda(X,k')|^2 \ud k'=\infty$, because $\hat V(X,k) \notin L^2(\R^d)$, so $\hat f_\Lambda(k)$ converges to $\hat f(k)$ in $L^2(\R^d)$. On the other hand $\int \overline{\hat V(X,k)} \hat f_\Lambda(k) \ud k =0$ so, after taking the inverse Fourier transform in $k$, $f_\Lambda$ is in the kernel of $V(X)$. 

This implies that coherent states generated by functions in $H^2(\R^d) \cap \ker(V(X))$ are dense in $\Gamma(L^2(\R^d))$, see e.g.~\cite[Prop.12]{IBCpaper}. 
Such states are in the kernel of $a(V(X))$ since for the coherent state $\Phi(f)$ generated by $f$ we have $a(V(X))\Phi(f)=\langle V(X),f \rangle \Phi(f)$. 
Consequently, the kernel of $a(V(X))$ is dense in $\Gamma(L^2(\R^d))$ for almost every $X$.

To conclude the proof, notice that the approximants $f_\Lambda(X)$ above are in $H^2(\R^d)$ and depend smoothly on $X$. We can thus approximate any $\Gamma(L^2(\R^d))$-valued $L^2$-function of $X$ by smooth functions taking values in the kernel of $a(V(X))$.
Such functions are elements of $D(L)$ and this proves the claim.
\end{proof}

We have established that $L_0$, the restriction of $L$ to the kernel of $a(V)$, is a densely defined, closed, symmetric operator. As explained in the introduction, we are going to extend $L$ to a subspace of the domain of $L_0^*$. This space is spanned by functions of the form $\psi+G\phi$ with $\psi\in D(L)$ and $\phi\in \hilb$, where
\begin{align}
\label{eq:defofG}
G \phi 
&
:= \left(-g a(V) L^{-1}\right)^* \phi = - g L^{-1} a^*(V) \phi.
\end{align}
The operator $G$ is bounded on $\hilb$ by Lemma~\ref{lem:Dnegbasicbound}. It maps $\hilb$ to the kernel of $L_0^*$ by Equation~\eqref{eq: G Ker}.
%
%
Application of $G$ also improves regularity or decay in the particle number.
\begin{lem}\label{lem:DnegGanda}
If Condition~\ref{cond:alphabeta} holds with $D<0$ the operator $G$ is continuous from $\hilb$ to $D(N^{-D/4})$ and from $D(N^{\frac{2+D}4})$ to $D(L^{1/2})$.
\end{lem}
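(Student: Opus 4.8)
The plan is to prove the two mapping properties by rewriting $G$ in a form to which Lemma~\ref{lem:Dnegbasicbound} applies directly after inserting the appropriate powers of $N$. Recall that $G = (-g\,a(V)L^{-1})^*$, so $G^* = -g\,a(V)L^{-1}$, and the two claims are dual to the statements that $N^{-D/4}G^*$ is bounded on $\hilb$ (which would give $G$ continuous from $\hilb$ into $D(N^{-D/4})$ once we are careful about which side $N$ acts on), and that $L^{1/2}G N^{-(2+D)/4}$ is bounded (equivalently $N^{-(2+D)/4}G^* L^{1/2}$ is bounded on $\hilb$, again up to the adjoint). So the key observation is that a single creation/annihilation operator shifts the boson number by one, hence $N$ and $a^*(V)$ almost commute: on the $n$-boson sector $N = n$, and $a^*(V)$ maps the $n$-sector to the $(n+1)$-sector, so $N^s a^*(V) = a^*(V)(N+1)^s$ as an identity of operators (into $D(L)'$), and similarly $a(V)N^s = (N+1)^s a(V)$.

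First I would establish the second mapping property, which is the cleaner one. Write
\begin{align*}
 L^{1/2} G\, N^{-\frac{2+D}4}
 = -g\, L^{1/2}L^{-1} a^*(V)\, N^{-\frac{2+D}4}
 = -g\, L^{-1/2} a^*(V)\, N^{-\frac{2+D}4}.
\end{align*}
Using $a^*(V)N^{-(2+D)/4} = (N+1)^{-(2+D)/4}a^*(V)$ on each sector, and $(N+1)^{-(2+D)/4}\le 1$ (recall $2+D\ge 0$ is not assumed, but here $D<0$ and $2+D = d-2\alpha>0$, so this power is a genuine negative power and the bound holds), it suffices to bound $L^{-1/2}a^*(V)$ composed with suitable number powers. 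Taking adjoints, $L^{-1/2}a^*(V)$ bounded from $D(N^{(2+D)/4})$ into $\hilb$ is exactly the adjoint statement of $a(V)L^{-1/2}N^{-(2+D)/4}$ being bounded on $\hilb$, which is precisely Lemma~\ref{lem:Dnegbasicbound}. One has to keep track of the order of $N$ and $a^*(V)$ and insert the commutation identity above, but this is routine bookkeeping on sectors.

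For the first mapping property, $G$ continuous from $\hilb$ to $D(N^{-D/4})$, I would write $N^{-D/4}G = -g\, N^{-D/4} L^{-1} a^*(V)$ and use the commutation relation to move $N^{-D/4}$ past $a^*(V)$ (up to a shift $N\mapsto N+1$, which only contributes a bounded factor since $-D/4>0$ means we gain rather than lose, or rather: $(N+1)^{-D/4}/N^{\text{something}}$ — here one should instead write $N^{-D/4}L^{-1}a^*(V) = L^{-1}a^*(V)(N+1)^{-D/4}\cdot(\text{correction})$; more carefully, since $L^{-1}$ does not commute with $N$ in an obvious algebraic way but both are diagonal in the boson number, $N^{-D/4}$ and $L^{-1}$ do commute as they are both functions of operators acting sector by sector; only $a^*(V)$ shifts sectors). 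So $N^{-D/4}G = -g\,L^{-1}(N+1)^{-D/4}a^*(V)$. Then factor $L^{-1} = L^{-1/2}\cdot L^{-1/2}$ and write this as $L^{-1/2}\big(L^{-1/2}(N+1)^{-D/4}a^*(V)\big)$; the operator in brackets is, after taking adjoints, $a(V)(N+1)^{-D/4}L^{-1/2}$, and comparing with Lemma~\ref{lem:Dnegbasicbound} we need $(N+1)^{-D/4}$ versus $N^{-(2+D)/4}$: since $-D/4 \ge -(2+D)/4$ when $2\ge 0$ — wait, $-D/4$ vs $-(2+D)/4 = -D/4 - 1/2$, so $-D/4$ is the \emph{larger} exponent, i.e. $(N+1)^{-D/4}$ decays slower, so we cannot directly dominate. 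The fix is that we have an extra $L^{-1/2}$ available: estimate $\|L^{-1/2}\|$ on the $n$-sector by $(n)^{-\beta/2}$ (since $\Omega(K)\ge n$ when $\omega\ge 1$), which supplies the missing power of $N$. This interplay — trading a factor of $L^{-1/2}$ for a power of $N^{-1/2}$ using $\omega\ge 1$ — is the one genuinely model-specific step, and I expect it to be the main obstacle, since one must check the powers match: the bound from Lemma~\ref{lem:Dnegbasicbound} gives $a(V)L^{-1/2}$ costs $N^{(2+D)/4}$, the extra $L^{-1/2}$ on the $n$-sector is $\le n^{-1/2}$, and $(2+D)/4 - 1/2 = D/4$, so the budget balances exactly. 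Writing this cleanly, sector by sector, completes the proof.
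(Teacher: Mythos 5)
Your proposal is correct and follows essentially the same route as the paper: both claims are obtained from Lemma~\ref{lem:Dnegbasicbound} by taking sector-wise adjoints (using $G=(-g\,a(V)L^{-1})^*$) and trading the extra factor $L^{-1/2}$ for $N^{-1/2}$ via $L\geq N$, which is exactly the paper's one-line argument. The only blemishes are harmless off-by-one slips in the commutation identities for powers of $N$ past $a^*(V)$ (and the stray $n^{-\beta/2}$, immediately corrected to $n^{-1/2}$), which affect nothing since the shifts $N\mapsto N\pm1$ only change constants.
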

\begin{proof}
 In view of Equation~\eqref{eq:defofG} and the fact that $L\geq N$, this is immediate from Lemma~\ref{lem:Dnegbasicbound}.
\end{proof}

The next lemma is concerned with the map $1-G$ which is not only bounded but also invertible.
\begin{lem}
\label{lem:Dnegnisleftinv}
Assume Condition~\ref{cond:alphabeta} holds with $D<0$. Then $1-G$ is invertible and there exists a constant $C>0$ such that
\begin{align}
\label{eq:Dnegnisleftinv}
\norm{N \psi }_\hilb \leq C (\norm{N (1-G) \psi }_\hilb + \norm{ \psi }_\hilb ) \, .
\end{align}
\begin{proof}
Due to Lemma~\ref{lem:DnegGanda} there is a constant $C>0$ such that sector-wise 
\begin{equation*}
 \norm{G}^2_{\hilb \uppar {n-1} \rightarrow \hilb \uppar n} \leq C n^{\frac{D}{2}}\,.
\end{equation*}
Using this we estimate the $k$-th power of $G$ acting on $\psi \in \hilb$ by
\begin{align*}
\norm{ G^k \psi}^2_\hilb 
&
=
 \sum_{n \geq 1} \norm{\left( G^k \psi\right)\uppar n}^2_{\hilb \uppar n}\\
&\leq
 \sum_{n \geq k} \prod_{\ell=1}^k \norm{G}^2_{\hilb \uppar {n-\ell} \rightarrow \hilb \uppar {n-\ell+1}} \norm{ \psi \uppar {n- k}}^2_{\hilb \uppar {n- k}} 
 \\
 &
\leq
 C \norm{ \psi }^2_{\hilb } \sup_{m\geq 0} \prod_{r=1}^{k}  (r+m)^{\frac{D}{2}}
\leq
 C \norm{ \psi }^2_{\hilb } (k!)^{\frac{D}{2}} \, .
\end{align*}
This implies that the Neumann series $\sum_{k\geq 0} G^k$ converges in $\hilb$, hence $1-G$ is invertible. 

To prove \eqref{eq:Dnegnisleftinv}, first note that $G$ is a bounded operator from $D(N)$ to itself, because it maps $\hilb\uppar{n}$ to $\hilb \uppar{n+1}$. Define for any $\mu \geq 0 $ a modified map by
\begin{align*}
G_\mu:=-g\left(a(V)(L+\mu^2 )^{-1}\right)^*
\end{align*}
The norm $\norm{ G_\mu}_{D(N)\to D(N)}:=c_\mu$ is decreasing in $\mu$, so for sufficiently large $\mu$ we have $c_\mu<1$. Then $(1-G_\mu)^{-1}$ is a bounded operator on $D(N)$ with norm at most $(1-c_\mu)^{-1}$. 
 By the resolvent formula we then have
 \begin{align*}
  \norm{N\psi}&\leq (1-c_\mu)^{-1}\left(\norm{N(1-G_\mu)\psi}+\norm{(1-G_\mu)\psi}\right)\\
  &\leq (1-c_\mu)^{-1}\left(\norm{N(1-G)\psi}+\norm{\mu^2 N (L+\mu^2)^{-1}G_\mu\psi}+\norm{(1-G_\mu)\psi}\right).
 \end{align*}
Since $L\geq N$ this proves the claim. 
\end{proof}
\end{lem}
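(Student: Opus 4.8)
The plan is to exploit the fact that $G$ raises the boson number by exactly one, so that with respect to the grading $\hilb=\bigoplus_n\hilb\uppar n$ it acts like a strictly ``lower triangular'' operator whose sectorwise norms decay. Concretely, from Lemma~\ref{lem:DnegGanda} (equivalently from the estimate in Lemma~\ref{lem:Dnegbasicbound}) one reads off $\norm{G}^2_{\hilb\uppar{n-1}\to\hilb\uppar n}\le Cn^{D/2}$ with $D<0$. Since $G^k$ maps $\hilb\uppar{n-k}$ into $\hilb\uppar n$, the norm of $G^k$ on a fixed sector is bounded by a product of $k$ consecutive such factors; as $t\mapsto t^{D/2}$ is decreasing this product is at most $(k!)^{D/2}$ up to a constant raised to the $k$-th power, so $\norm{G^k}_\hilb\le C^k(k!)^{D/4}$. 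Since $D<0$ the series $\sum_k C^k(k!)^{D/4}$ converges by the ratio test, hence $\sum_{k\ge0}G^k$ converges in operator norm and inverts $1-G$ on $\hilb$.

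For the bound~\eqref{eq:Dnegnisleftinv} I would use a regularisation. Set $G_\mu:=-g\bigl(a(V)(L+\mu^2)^{-1}\bigr)^*=-g(L+\mu^2)^{-1}a^*(V)$ and run the argument of Lemma~\ref{lem:Dnegbasicbound} with $(L+\mu^2)^{-1}$ in place of $L^{-1/2}$, absorbing the mismatch into $L^{1/2}(L+\mu^2)^{-1}$; this gives the sectorwise bound $\norm{G_\mu}^2_{\hilb\uppar{n-1}\to\hilb\uppar n}\le C\min\{\mu^{-2}n^{(2+D)/2},\,n^{D/2}\}$, and taking the supremum over $n$ (the two expressions balance at $n\sim\mu^2$) yields $\norm{G_\mu}^2_\hilb\le C\mu^{D}\to0$ as $\mu\to\infty$ since $D<0$. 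Because $G_\mu$, like $G$, raises the boson number by one, it satisfies $NG_\mu=G_\mu(N+1)$, whence $c_\mu:=\norm{G_\mu}_{D(N)\to D(N)}\le 2\norm{G_\mu}_\hilb\to0$; fix $\mu$ so large that $c_\mu<1$. Then $1-G_\mu$ is boundedly invertible on $D(N)$ with inverse of norm $\le(1-c_\mu)^{-1}$, so $\norm{N\psi}\le\norm\psi_{D(N)}\le(1-c_\mu)^{-1}\bigl(\norm{N(1-G_\mu)\psi}+\norm{(1-G_\mu)\psi}\bigr)$ for $\psi$ in the relevant domain. It remains to replace $G_\mu$ by $G$ on the right: the resolvent identity $G-G_\mu=\mu^2L^{-1}G_\mu$ together with $L\ge N$ (so that $NL^{-1}$ is a contraction on the sectors with at least one boson, where $G_\mu$ takes its values) gives $\norm{N(G-G_\mu)\psi}\le\mu^2\norm{G_\mu\psi}\le\mu^2\norm{G_\mu}_\hilb\norm\psi$; inserting this, together with $\norm{(1-G_\mu)\psi}\le(1+\norm{G_\mu}_\hilb)\norm\psi$, into the previous estimate, with $\mu$ now a fixed number so that all constants are finite, produces~\eqref{eq:Dnegnisleftinv}.

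The delicate step is the second part. One has to arrange that $G_\mu$ is a genuinely small perturbation in the graph norm of $N$: the decay $\mu^{D/2}$ coming from the resolvent must beat the polynomial growth in the boson number coming from the factor $\sqrt{n+1}$ in $a(V)$, which is precisely why $(L+\mu^2)^{-1}$ cannot be used naively but must be balanced against the number operator as in the $\min$ above. The structural feature that makes everything run --- both the decay of the sectorwise norms of $G$ and $G_\mu$ and the control of the error $G-G_\mu$ --- is the mass gap $\omega\ge1$, equivalently $L\ge N$.
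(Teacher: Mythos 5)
Your proposal is correct and follows essentially the same route as the paper: the Neumann series with the sectorwise bound $\norm{G}^2_{\hilb\uppar{n-1}\to\hilb\uppar{n}}\leq Cn^{D/2}$ for invertibility, and the regularised operator $G_\mu=-g\bigl(a(V)(L+\mu^2)^{-1}\bigr)^*$ combined with the resolvent identity $G-G_\mu=\mu^2L^{-1}G_\mu$ and $L\geq N$ for the estimate~\eqref{eq:Dnegnisleftinv}. The only difference is that you spell out quantitatively why $c_\mu=\norm{G_\mu}_{D(N)\to D(N)}<1$ for large $\mu$ (the balance $\min\{\mu^{-2}n^{(2+D)/2},n^{D/2}\}\leq\mu^{D}$ together with $NG_\mu=G_\mu(N+1)$), a point the paper's proof only asserts.
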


\subsection{The annihilation operator $A$}

So far we have considered $a(V)$ as an operator on $D(L)$. In view of Lemma~\ref{lem:Dnegbasicbound} we may also define it sector-wise on $D(L^{1/2})$ in the case $D<0$ of the current section. By Lemma~\ref{lem:DnegGanda} the annihilation operator thus makes sense on $G\phi\uppar{n}$, for any $n\in \N$.

\begin{lem}\label{lem:DnegTrelbounded}
 Assume that Condition~\ref{cond:alphabeta} holds with $D<0$ and 
 let $T=ga(V)G$. Then $T$ defines a symmetric operator on the domain $D(T)=D(N^{1+D/2})$.
\end{lem}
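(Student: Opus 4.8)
The plan is to compute the operator $T=ga(V)G = -g^2 a(V)L^{-1}a^*(V)$ explicitly in the Fourier representation and to read off from that expression both that it maps $D(N^{1+D/2})$ into $\hilb$ and that it is symmetric on this domain. First I would write $G\phi$ sector-wise using Equation~\eqref{eq:defofG}: on the $(n+1)$-boson sector $G\phi\uppar{n+1}$ is built from $\phi\uppar n$ by applying $a^*(V)$ and then $L^{-1}$. By Lemma~\ref{lem:DnegGanda}, for $\phi\in D(N^{(2+D)/4})$ we have $G\phi\in D(L^{1/2})$, and by the remark opening this subsection $a(V)$ is defined sector-wise on $D(L^{1/2})$ when $D<0$; hence $T\phi$ makes sense whenever $\phi\in D(N^{(2+D)/4})$, which contains $D(N^{1+D/2})$ since $1+D/2 \geq (2+D)/4$ (equivalently $D\geq -2$, and here $D\in(-2,0)$). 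So domain-wise the composition is legitimate on $D(N^{1+D/2})$; the real content is that the output lands in $\hilb$ and that the operator is symmetric there.

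For the mapping property, I would expand $T\phi\uppar n$ as a sum of two structurally different contributions, exactly as in the standard Ter-Martyrosyan--Skornyakov analysis: a \emph{diagonal} term, where the boson created by $a^*(v(x_i-y))$ is the same one annihilated by $a(v(x_{i'}-y))$, and an \emph{off-diagonal} term, where they are different (these are the terms retained in $G$ before applying $a(V)$). The diagonal term is, after the $k$-integration, a multiplication operator of the form $\phi\uppar n(P,K)\cdot \int |\hat v(q)|^2 / L(P - e_i q + e_i' q', K)\,\ud q$ summed over $i,i'$ — the integral is finite precisely because $D<0$, i.e. $\int |\hat v(q)|^2(1+q^2)^{-1}\ud q<\infty$, and it grows at most like a power of $n$ that is absorbed by the $N^{1+D/2}$ on the domain (this is where one tracks the $(n+1)^{D/2}$-type factors as in Lemma~\ref{lem:Dnegbasicbound}). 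The off-diagonal term is a genuine integral operator in $K$-space with kernel roughly $\overline{\hat v(k_j)}\,\hat v(k_{j'})\,L(\dots)^{-1}$; I would bound its $\hilb$-norm by a Schur test / Cauchy–Schwarz argument in the spirit of the proof of Lemma~\ref{lem:Dnegbasicbound}, again reading off the $n$-dependence and checking it is controlled by $N^{1+D/2}$. In both cases the key input is the bound $|\hat v(k)|\leq |k|^{-\alpha}$ with $d/2>\alpha>d/2-1$ together with $\omega\geq 1$.

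For symmetry, the cleanest route is to avoid re-deriving the kernel formula and instead argue directly from $T = -g^2 a(V)L^{-1}a^*(V)$. On the dense domain $D(L)\cap D(N^{1+D/2})$, $a^*(V)$ acts into $\hilb$ and $L^{-1}$ is bounded self-adjoint, and $a(V)$ is the honest adjoint of $a^*(V)$ there, so for $\phi,\chi$ in this domain $\langle T\phi,\chi\rangle = -g^2\langle L^{-1}a^*(V)\phi, a^*(V)\chi\rangle = \langle \phi, T\chi\rangle$ by self-adjointness of $L^{-1}$. Then I would extend this identity from $D(L)\cap D(N^{1+D/2})$ to all of $D(N^{1+D/2})$ by a density/continuity argument: pick $\phi_k\to\phi$ in the graph norm of $N^{1+D/2}$ with $\phi_k\in D(L)$, use the mapping bound from the previous paragraph (which shows $T$ is continuous from $D(N^{1+D/2})$ to $\hilb$) to pass to the limit on both sides of $\langle T\phi_k,\chi\rangle=\langle\phi_k,T\chi\rangle$. (One must check such $\phi_k$ exist, e.g. by truncating in the boson number or mollifying, which is routine.) The main obstacle, as in all of these constructions, is the bookkeeping in the mapping estimate — in particular getting the $n$-dependence in the diagonal term sharp enough that the natural power of $N$ is exactly $1+D/2$ and not something larger — rather than the symmetry, which is essentially formal once the continuity is in hand.
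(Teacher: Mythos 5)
Your symmetry argument rests on a false premise: you claim that on $D(L)\cap D(N^{1+D/2})$ the operator $a^*(V)$ "acts into $\hilb$", so that you may pair $L^{-1}a^*(V)\phi$ with $a^*(V)\chi$ and invoke self-adjointness of the bounded operator $L^{-1}$. But Condition~\ref{cond:alphabeta} assumes $v\notin L^2(\R^d)$, and then $a^*(V)\phi\notin\hilb$ for \emph{every} $\phi\neq 0$ (already for $\phi$ in the zero-boson sector the created component is $\sum_j v(x_j-y)\phi(X)$, which has infinite $L^2$-norm; the cross terms are finite and cannot cancel the divergence). This is precisely the ultraviolet problem the whole construction is designed to handle: $a^*(V)$ only maps $\hilb$ into $D(L)'$. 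The repair is to split the resolvent, $L^{-1}=L^{-1/2}L^{-1/2}$, and write
\begin{equation*}
T=-g^2\bigl(a(V)L^{-1/2}\bigr)\bigl(a(V)L^{-1/2}\bigr)^*,
\end{equation*}
so that each factor is an honest operator on $\hilb$ after weighting with the number operator; this is exactly the paper's proof. With that factorisation symmetry is immediate ($\langle T\phi,\chi\rangle=-g^2\langle (a(V)L^{-1/2})^*\phi,(a(V)L^{-1/2})^*\chi\rangle$ for $\phi,\chi\in D(T)$), and no density/approximation argument in $D(L)$ is needed.

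The same factorisation also closes the second soft spot in your proposal, the mapping bound, which you only sketch. Lemma~\ref{lem:Dnegbasicbound} says that $a(V)L^{-1/2}N^{-(2+D)/4}$ is bounded, i.e. each factor $a(V)L^{-1/2}$ and its adjoint costs a sector-wise factor $(n+1)^{(2+D)/4}$; composing the two gives $(n+1)^{(2+D)/2}=(n+1)^{1+D/2}$, which is exactly continuity of $T$ from $D(N^{1+D/2})$ to $\hilb$ — the sharp $n$-dependence you identify as the main obstacle drops out with no further computation. Your alternative plan, a diagonal/off-diagonal decomposition with Schur-test estimates, is the strategy the paper needs only for the harder case $D\geq 0$ (Proposition~\ref{prop:Tmainprop}); for $D<0$ it is unnecessary, and since you do not carry out the estimates (in particular the $n$-bookkeeping for the sum over the $n$ bosons in the off-diagonal kernels, for which no weight trick is set up here because only $\omega\geq 1$ is assumed), the mapping property is not actually established in your write-up. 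As it stands, the proposal therefore has a genuine gap in both halves of the lemma, although both are repaired at once by the one-line factorisation above.
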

\begin{proof}
 Using Equation~\eqref{eq:defofG} we can write $T$ as
 \begin{equation*}
 T=- G^*LG= -g^2 \left(a(V) L^{-\frac12}\right)\left( a(V)L^{-\frac12}\right)^* .
 \end{equation*}
 This defines a continuous operator from $D(N^{1+D/2})$ to $\hilb$ by Lemma~\ref{lem:Dnegbasicbound}, and this operator is clearly symmetric.
 \end{proof}

On the set $D(A)=D(L)\oplus G D(T)$, which contains $D(H)$, we now define the annihilation operator $A$ by
\begin{align}
gA(\psi+G\phi):=g a(V)(\psi + G\phi) = ga(V)\psi + T\phi.
\label{eq:DnegAdef}
\end{align}

\begin{rem}
 The objects we have discussed so far occur naturally in the context of abstract boundary conditions.
 Let $K$ denote the restriction of $L_0^*$ to $D(A)=D(L)\oplus G D(T)$ and denote by $B(\eta +G\phi)=\phi$ a left inverse of $G$.
 Then $(D(T), B,-gA)$ is a quasi boundary triple for $K$ in the sense of Behrndt et al.~\cite{Be_etal2017}. In particular we have the identity
 \begin{equation*}
  \langle K \phi, \psi \rangle - \langle  \phi, K \psi \rangle = -\langle g A\phi, B\psi \rangle + \langle B\phi, gA \psi \rangle.
 \end{equation*}
The family of operators $G(z)=-g(L+z)^{-1}a^*(V)$ are called the $\gamma$-field, and $T(z)=gAG(z)$ the Weyl-operators associated to this triple.

In specific cases the operators $B$ and $A$ can be expressed as local boundary value operators on the configurations where at least one $x$-particle (source) and one $y$-particle (boson) meet, see~\cite{TeTu15, IBCpaper} and also Remark~\ref{rem:Tdef} for details.  
\end{rem}

\subsection{Proof of Theorem~\ref{thm:main} for $D<0$}\label{sect:proof form}

We will now prove that $H=L_0^* +gA$ is self-adjoint on the domain
\begin{equation*}
 D(H) = \lbrace \psi \in \hilb \vert (1 - G) \psi \in D(L) \rbrace=(1-G)^{-1} D(L)
\end{equation*}
in the case of form perturbations, $D<0$. The domain $D(H)$ is contained in $D(N)$ because $D(L)\subset D(N)$ and the domain of $N$ is preserved by $(1-G)^{-1}$, see Lemma~\ref{lem:Dnegnisleftinv}. 
We start the proof of self-adjointness by rewriting $H$ in a more symmetric form. First, we use the fact that $L_0^*G=0$, by Equation~\eqref{eq: G Ker}, to write for $\psi \in D(H)$
\begin{align}
H \psi &=
L_0^*(1-G)\psi  + gA \psi
\nonumber \\
&=L (1-G)\psi +  g a(V) (1-G)\psi+T\psi\,.
\label{eq:H rewrite v1}
\end{align}
Here, we have also used the \enquote{boundary condition} that $(1-G)\psi \in D(L)$ for $\psi \in D(H)$.
Since $G^*L=-g a(V)$ we can further rewrite this as
\begin{align}
H\psi & =(1-G^*) L (1-G)\psi + G^*  L (1-G)\psi +  g a(V) (1-G)\psi+T\psi\nonumber \\
%
%
&=(1-G)^* L (1-G)\psi +T\psi \label{eq:H rewrite v2} \, .
\end{align} 
We will prove that $H$ is self-adjoint by showing that it is a perturbation of the self-adjoint operator $ (1-G)^* L (1-G)$.
\begin{lem}
\label{lem:dressed}
The operator $H_0:=(1-G)^* L (1-G)$ is self-adjoint on $D(H)$ and positive.
\end{lem}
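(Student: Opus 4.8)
The plan is to show that $H_0 = (1-G)^*L(1-G)$ is, up to the bounded and boundedly invertible factor $(1-G)$, unitarily similar to the positive self-adjoint operator $L$ on the sectors with at least one boson (and zero on the vacuum sector). Concretely, I would argue as follows. First, since $1-G$ is bounded and invertible on $\hilb$ (Lemma~\ref{lem:Dnegnisleftinv}), and $L$ is self-adjoint and nonnegative on $D(L)$, the operator $H_0$ is symmetric on the domain $(1-G)^{-1}D(L) = D(H)$: indeed, for $\psi,\chi\in D(H)$ we have $\langle H_0\psi,\chi\rangle = \langle L(1-G)\psi,(1-G)\chi\rangle = \langle (1-G)\psi, L(1-G)\chi\rangle = \langle \psi, H_0\chi\rangle$ because $(1-G)\psi,(1-G)\chi\in D(L)$. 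Positivity is immediate from $\langle H_0\psi,\psi\rangle = \langle L(1-G)\psi,(1-G)\psi\rangle = \norm{L^{1/2}(1-G)\psi}^2 \geq 0$.

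For essential self-adjointness on $D(H)$ (equivalently, self-adjointness, since $H_0$ is closed on this domain: $(1-G)$ is a homeomorphism of $\hilb$ carrying $D(L)$ with its graph norm onto $D(H)$ with the transported norm, and $L$ is closed), I would check that $\mathrm{ran}(H_0 + 1)$ is dense, or better, all of $\hilb$. Given $\eta\in\hilb$, I want $\psi\in D(H)$ with $(1-G)^*L(1-G)\psi + \psi = \eta$. Setting $\chi = (1-G)\psi\in D(L)$, this reads $(1-G)^*L\chi + (1-G)^{-1}\chi = \eta$, i.e. $(L + (1-G)^{-1}((1-G)^*)^{-1})\chi = ((1-G)^*)^{-1}\eta$. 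Since $(1-G)^{-1}((1-G)^*)^{-1} = ((1-G)^*(1-G))^{-1}$ is a bounded, self-adjoint, strictly positive operator, its sum with the self-adjoint nonnegative $L$ is self-adjoint on $D(L)$ (Kato-Rellich / the fact that a bounded self-adjoint perturbation preserves self-adjointness) and is in fact strictly positive, hence boundedly invertible; so $\chi$ exists and is unique, and $\psi = (1-G)^{-1}\chi\in D(H)$. This yields $\mathrm{ran}(H_0+1) = \hilb$, so $H_0$ is self-adjoint.

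An alternative, perhaps cleaner, route: observe that $D(H_0)$ equipped with the graph norm is a Hilbert space, and the map $\psi\mapsto(1-G)\psi$ is a unitary-up-to-equivalence-of-norms identification with $D(L)$; the quadratic form $\psi\mapsto\norm{L^{1/2}(1-G)\psi}^2$ with form domain $(1-G)^{-1}D(L^{1/2})$ is closed (being the pullback of the closed form of $L$ under the bounded invertible $1-G$) and nonnegative, and its associated self-adjoint operator is exactly $H_0$ with the stated domain. I would present whichever of these is shortest; the resolvent-surjectivity argument is the most self-contained and I would write that one out.

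\textbf{Main obstacle.} The only genuinely nontrivial point is verifying that the operator domain of the form-associated operator is precisely $D(H) = (1-G)^{-1}D(L)$ and not merely some subset of the form domain — that is, that no spurious enlargement or shrinkage occurs. This follows from the surjectivity computation above once one notes that $((1-G)^*(1-G))^{-1}$ maps $\hilb$ into $\hilb$ (not into a smaller space), so that $L\chi = ((1-G)^*)^{-1}\eta - ((1-G)^*(1-G))^{-1}\chi \in \hilb$ forces $\chi\in D(L)$; everything else is bookkeeping with the bounded invertible operator $1-G$, using $G^*L = -ga(V)$ and the mapping properties from Lemma~\ref{lem:Dnegbasicbound} and Lemma~\ref{lem:DnegGanda} only implicitly.
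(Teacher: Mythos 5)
Your proof is correct, but it runs along a different track than the paper's. The paper's argument is a three-line adjoint-domain computation: symmetry and positivity are noted exactly as you do, and then for $\phi\in D(H_0^*)$ one observes that $\chi\mapsto\langle (1-G)\phi, L\chi\rangle=\langle \phi,H_0(1-G)^{-1}\chi\rangle$ is a bounded functional of $\chi\in D(L)$ (here the bounded invertibility of $1-G$ from Lemma~\ref{lem:Dnegnisleftinv} enters), so $(1-G)\phi\in D(L^*)=D(L)$, i.e.\ $D(H_0^*)\subset D(H)$, which suffices for a symmetric operator. You instead verify the range criterion: symmetry plus $\operatorname{ran}(H_0+1)=\hilb$, reducing the solvability of $(H_0+1)\psi=\eta$ to the bounded invertibility of $L+B$ with $B$ a bounded, self-adjoint, strictly positive perturbation. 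Both routes rest on precisely the same two ingredients (self-adjointness of $L$ and bounded invertibility of $1-G$); the paper's is shorter, while yours has the small bonus of producing an explicit resolvent formula $(H_0+1)^{-1}=(1-G)^{-1}\bigl(L+B\bigr)^{-1}\bigl((1-G)^*\bigr)^{-1}$ and of making the closedness of $H_0$ automatic. One small correction: after applying $\bigl((1-G)^*\bigr)^{-1}$ to the equation $(1-G)^*L\chi+(1-G)^{-1}\chi=\eta$, the perturbation is $B=\bigl((1-G)^*\bigr)^{-1}(1-G)^{-1}=\bigl((1-G)(1-G)^*\bigr)^{-1}$, not $(1-G)^{-1}\bigl((1-G)^*\bigr)^{-1}$; since either operator is bounded, self-adjoint and bounded below by $\norm{1-G}^{-2}>0$, this transposition does not affect the argument. (Your opening description of $H_0$ as \emph{similar} to $L$ is not what you actually use -- the map $(1-G)$ implements a congruence, not a similarity -- but the written argument never relies on it.)
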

\begin{proof}
The operator $H_0$ is clearly positive and symmetric on $D(H_0)=D(H)$, so it suffices to show that $D(H_0^*)\subset D(H)$.
If $\phi \in D(H_0^*)$, $\psi\in D(H_0)=(1-G)^{-1}D(L)$, we have
\begin{equation*}
 \langle \phi, H_0 \psi \rangle = \langle (1-G)\phi, L (1-G)\psi \rangle,
\end{equation*}
and thus $(1-G)\phi\in D(L^*)=D(L)$. This proves the claim.
\end{proof}

To prove self-adjointness of $H$ we now show that $T$ is infinitesimally $H_0$-bounded.
By Lemma~\ref{lem:DnegTrelbounded} and Young's inequality we have, keeping in mind that $D<0$,
\begin{align}
\label{eq:relboundifpowerbounded}
&\norm{T \psi }_\hilb \leq C \norm{N^{1+D/2} \psi }_\hilb \leq \frac{C}{2} \left( (2+{D})  \eps \norm{N \psi}_\hilb -  D \eps^{\frac{2+D}{D}} \norm {\psi }_\hilb \right)  \, ,
\end{align}
for any $\eps>0$. Now Lemma~\ref{lem:Dnegnisleftinv} together with $L\geq N$ yields the inequality
\begin{align}
\label{eq:NisrelboundedbyH0}
 \norm{N \psi }_\hilb 
 &
 \leq C (\norm{N (1-G) \psi }_\hilb + \norm{ \psi }_\hilb )
 \nonumber \\
 &
 \leq C\left(\norm{(1-G)^{-1}} \norm{(1-G)^* L (1-G)}+\norm{\psi}\right).
\end{align}
This proves an infinitesimal bound on $T$ relative to $H_0$ and thus that $H=H_0+T$ is self-adjoint on $D(H)$, by the Kato-Rellich theorem.

\section{Renormalisable models}
\label{sect:renorm}

In this section we will deal with models falling into the second case of Condition~\ref{cond:alphabeta}.
This means that $\abs{\hat {v}(k)}\leq \abs{k}^{-\alpha}$ for some $\alpha \geq 0$, $\int \abs{\hat {v}(k)}^2(k^2+\omega(k))^{-1} \ud k=\infty$ (so necessarily $2\alpha\leq d-2$) and $\omega(k)\geq (1+k^2)^{\beta/2}$ for some $0 < \beta \leq 2$. In dimension $d=2$ this leaves  $\alpha=0$ as the only case. In $d=3$ we assume 
\begin{equation*}
\frac12 \geq \alpha>\frac12 - \frac{\beta^2}{\beta^2+8}.
\end{equation*}
In terms of $D=d-2\alpha-2$ this means that 
\begin{equation}\label{eq:Dbeta}
 0\leq D < \frac{2\beta^2}{\beta^2+8} \leq \frac\beta 2.
\end{equation}

Following the structure of Section~\ref{sect:form},
we start this section by discussing the extended domain. We then turn to the definition of the annihilation operator $A$ and finally prove Theorem~\ref{thm:main} and Theorem~\ref{thm:renorm}.

\subsection{The extended domain}

As in Section~\ref{sect:form}, we consider the extension of $L$ (or the restriction of $L_0^*$) to vectors of the form $\psi + G\phi$ with $\psi\in D(L)$, $\phi\in \hilb$ and $G=-gL^{-1}a^*(V)$. We start by discussing the mapping properties of $G$,
showing in particular that $a^*(V):\hilb \to D(L)'$ and $a(V):D(L)\to \hilb$ are continuous.
In Section~\ref{sect:form}, where $D<0$, we showed that $G$  maps into the form domain of $L$. For $D\geq0$ however, $G$ will not map into the form domain of $L$ but instead into $D(L^\eta)$ for some $\eta < \frac{2-D}{4} \leq \frac{1}{2} $. We first prove a bound on $G$ that will allow us to use the regularity and the decay in the particle number $N$ in an optimal way later on.

\begin{prop}
\label{prop:generalboundong}
Let Condition~\ref{cond:alphabeta} be satisfied and define the affine transformation $u(s):=\frac{\beta}{2} s - \frac{D}{2}$.
Then for any $s\geq0$ such that $u(s)<1$ and all $0\leq \eta < \frac{1+u(s)-s}2$ 
there exists a constant $C$ such that for all $n\in \N$
\begin{equation*}
 \norm{L^\eta G \psi}_{\hilb\uppar{n+1}} \leq C\left(1+n^{\frac{\max(0,1-s)}2}\right) \norm{\psi\uppar{n}}_{\hilb\uppar{n}}\,.
\end{equation*}
\end{prop}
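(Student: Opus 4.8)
The plan is to work in the Fourier representation and estimate, sector by sector, the operator $L^\eta G$ acting on $\psi\uppar n$. Recall $G\psi = -g L^{-1}a^*(V)\psi$, so on the $(n+1)$-boson sector, symmetrising the creation operator produces a sum of $n+1$ terms, each of the form
\begin{equation*}
 \big(L^\eta G\psi\big)\uppar{n+1}(P,K)\sim \frac{-g}{\sqrt{n+1}}\,L(P,K)^{\eta-1}\sum_{\ell=1}^{n+1}\sum_{j=1}^M \hat v(k_\ell)\,\hat\psi\uppar n(P-e_j k_\ell,\hat K_\ell)\,.
\end{equation*}
First I would reduce, as in Lemma \ref{lem:Dnegbasicbound}, to controlling a single pair $(j,\ell)$ and bound the norm of the sum by $(n+1)$ times the norm of one term (the sum over $j$ costs a factor depending only on $M$, which we ignore). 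The core is then to bound, in $L^2(P,\hat K_\ell)$, the function $k_\ell\mapsto L(P,K)^{\eta-1}\hat v(k_\ell)\,\hat\psi\uppar n(P-e_jk_\ell,\hat K_\ell)$ after applying Cauchy--Schwarz in $k_\ell$ against the weight $|\hat v(k_\ell)|^2\le |k_\ell|^{-2\alpha}$; this leaves a supremum/integral of $L(P,K)^{2\eta-2}|k_\ell|^{-2\alpha}$ over $k_\ell$, times $\|\hat\psi\uppar n\|^2$ after the change of variables $P\mapsto P-e_jk_\ell$.

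The key estimate is therefore a scaling bound on the remaining $k$-integral. Using $\omega(k_\ell)\ge (1+k_\ell^2)^{\beta/2}\ge c\,|k_\ell|^\beta$ and $L(P,K)\gtrsim (P-e_jk_\ell)^2+\Omega(\hat K_\ell)+|k_\ell|^\beta$, one wants
\begin{equation*}
 \int_{\R^d}\frac{|k_\ell|^{-2\alpha}}{\big((P-e_jk_\ell)^2+\sigma+|k_\ell|^\beta\big)^{2-2\eta}}\,\ud k_\ell
 \;\lesssim\; \sigma^{-\theta}\quad\text{or}\quad \lesssim 1\,,
\end{equation*}
where $\sigma:=1+\Omega(\hat K_\ell)\ge n$ plays the role of the spectral parameter of the $n$-boson sector; this is exactly where the parameter $s$ enters. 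The idea is to split the $k_\ell$-integral into the regime $|k_\ell|^\beta\lesssim \sigma$ and $|k_\ell|^\beta\gtrsim\sigma$: in the first regime one bounds $(P-e_jk_\ell)^2+\sigma+|k_\ell|^\beta\gtrsim\sigma$ and integrates $|k_\ell|^{-2\alpha}$ over a ball of radius $\sigma^{1/\beta}$, yielding $\sigma^{(d-2\alpha)/\beta}\sigma^{2\eta-2}$; in the second regime one drops $\sigma$ and rescales $k_\ell=\sigma^{1/\beta}k'$, producing $\sigma^{(d-2\alpha)/\beta+2\eta-2}$ times a $k'$-integral that converges precisely under the stated constraints. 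Tracking the exponent of $\sigma$ and writing it in terms of $u(s)=\tfrac\beta2 s-\tfrac D2$ (so that $\tfrac{d-2\alpha}{\beta}=\tfrac{D+2}{\beta}$), one should find that the $\sigma$-power is $\le \max(0,1-s)$, which, since $\sigma\ge n$, gives the claimed factor $1+n^{\max(0,1-s)/2}$ after taking the square root. The hypothesis $u(s)<1$ is what makes the second-regime integral converge at infinity, and $\eta<\tfrac{1+u(s)-s}2$ is what keeps the $\sigma$-exponent within the stated bound (equivalently keeps the first-regime contribution under control).

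Finally I would assemble: multiply by the overall factor $(n+1)\cdot (n+1)^{-1}=1$ coming from the number of terms versus the $1/\sqrt{n+1}$ normalisation squared — more carefully, the $(n+1)$ from bounding the sum cancels against the $(n+1)^{-1}$ prefactor, so no extra particle-number loss occurs — integrate in $P$ and $\hat K_\ell$, use $\sigma\ge n$ in the exponent bound, and take square roots to obtain $\|L^\eta G\psi\|_{\hilb\uppar{n+1}}\le C(1+n^{\max(0,1-s)/2})\|\psi\uppar n\|$. The main obstacle is the careful two-regime analysis of the $k_\ell$-integral and verifying that the resulting exponent of $\sigma$ is exactly $\max(0,1-s)$ under the constraints $u(s)<1$ and $\eta<\tfrac{1+u(s)-s}2$; the dependence of the convergence of the rescaled integral on $d$, $\alpha$, $\beta$, $\eta$ at both $k'=0$ and $k'=\infty$ must be checked, and this is where Condition \ref{cond:alphabeta} (in particular the $d=3$ restriction on $\alpha$ relative to $\beta$) is used, at least implicitly through the admissible range of $s$.
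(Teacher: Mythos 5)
The central gap is in your treatment of the sum over the $n+1$ boson variables and the resulting particle-number bookkeeping. If you bound $\bigl\lVert\sum_{\ell=1}^{n+1}f_\ell\bigr\rVert$ by $(n+1)$ times the norm of a single (by symmetry, identical) term, then at the level of squared norms you pay $(n+1)^2$ against the prefactor $(n+1)^{-1}$ coming from $(\sqrt{n+1})^{-2}$, so the net loss is a factor $n+1$ — the exact cancellation you assert would hold only if the $n+1$ terms were essentially orthogonal, and controlling those cross terms is precisely the hard part of the proposition. Moreover, the parameter $s$ never actually enters your estimate: the $k_\ell$-integral you are left with depends only on $d,\alpha,\beta,\eta$, and its sharp decay in $\sigma=1+\Omega(\hat K_\ell)$ is $\sigma^{2\eta-1+D/2}$ (by Hardy--Littlewood, the relevant scale is $\abs{k}\sim\sigma^{1/2}$ from the $(p-k)^2$ term; your regime-2 step of dropping this term and rescaling by $\sigma^{1/\beta}$ already fails for the Nelson model, where $2\alpha+2\beta(1-\eta)<d$ for every $\eta\geq0$, so that integral diverges at infinity). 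Combining the single-term bound with the lost factor $n+1$, your route yields at best $\norm{L^\eta G\psi}_{\hilb\uppar{n+1}}\lesssim n^{\eta+D/4}\norm{\psi\uppar{n}}$ under $\eta<\frac{2-D}{4}$. That is essentially the $s=0$ case of the statement, but it is strictly weaker than the claim for $s>0$: for instance with $\beta=2$, $D=0$ and $s$ close to $1$ the proposition asserts growth $n^{(1-s)/2}$ for all $\eta<\tfrac12$, while your bound only gives $n^{\eta}$; and the subsequent applications (Corollary~\ref{lem:gonfockspace}, Lemma~\ref{lem:D_s}) rely exactly on the cases $s\geq 1$ or $s$ near $S_1>1$, where the claimed growth is $O(1)$ or mild.

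The paper obtains the $s$-dependent trade-off by a different mechanism, which your plan would need to incorporate: after writing the $n+1$ terms, it multiplies by $\omega(k_j)^{s/2}$ and its inverse and applies Cauchy--Schwarz to the \emph{discrete} sum over $j$ (Equation~\eqref{eq:gcauchyschwarz}), then splits the resulting double sum into the diagonal part $i=j$ and the off-diagonal part $i\neq j$. The diagonal part carries no $n$-growth at all and only requires $\eta<\frac{2-D}{4}$, which your hypothesis guarantees. The off-diagonal part is where $s$ enters: the elementary bound $\sum_i\omega(k_i)^s\leq n^{\max(0,1-s)}\Omega(K)^s$ (Equation~\eqref{eq:sum}) produces the factor $n^{\max(0,1-s)}$, while the leftover weight $\omega(k_j)^{-s}\leq\abs{k_j}^{-\beta s}$ improves the $k_j$-integral — this is where $u(s)<1$, i.e.\ $2\alpha+\beta s<d$, is needed — so that the remaining power of $\Omega(\hat K_j)$ has exponent $2\eta-1-u(s)+s<0$ precisely under $\eta<\frac{1+u(s)-s}{2}$. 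Without such a weighted splitting of the cross terms, the claimed estimate is out of reach of the crude sum bound.
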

\begin{proof}
 Note first that, since $\beta\leq 2$, the function $u(s)-s$ is non-increasing and thus $\eta < \frac{1+u(0)}{2}= \frac{2-D}{4} $. The expression for the Fourier transform of $G\psi\uppar{n}$ is given by
 \begin{equation}\label{eq:G Fourier}
  \widehat{G\psi\uppar{n}}(P,K)= \frac{-g}{\sqrt{n+1}}\sum_{i=1}^M  \sum_{j=1}^{n+1} \frac{\hat v(k_j) \hat \psi \uppar n(P+{e}_i k_j,\hat{K}_j)}{L(P,K)}.
 \end{equation}
 As we are not interested in the dependence of the constant $C$ on $M$ or $g$ it is sufficient to estimate  the $\hilb\uppar{n+1}$-norm of the expression
\begin{align}
\gamma  \hat\psi \uppar n(P,K)
= \frac{1}{\sqrt{n+1}} \sum_{j=1}^{n+1} \frac{\hat v(k_j) \hat \psi \uppar n(P+{e}_1 k_j,\hat{K}_j)}{L(P,K)^{1-\eta}}\,.
\end{align}
We first multiply by $\omega(k_j)^{\frac{s}{2}}$ and its inverse, and then use the finite-dimensional Cauchy-Schwarz inequality to obtain
\begin{align}
\label{eq:gcauchyschwarz}
\abs{\gamma  \hat\psi \uppar n(P,K)}^2 
&
\leq \frac{1}{n+1} \sum_{i,j=1}^{n+1}  \frac{\abs{\hat v(k_j)}^2 \abs{\hat \psi \uppar n(P+{e}_1 k_j,\hat{K}_j)}^2}{L(P,K)^{2(1-\eta)} \omega(k_j)^s}\omega(k_i)^s
 \, .
\end{align}
Let $\abs{\gamma \uppar d \hat \psi \uppar n}^2$ denote the sum of terms in this sum with $i=j$, and $\abs{\gamma \uppar {od} \hat \psi \uppar n}^2$ the sum of the remaining terms.
We have
\begin{align}
\label{eq:Dposg1}
\abs{\gamma \uppar d \hat \psi \uppar n}^2
&
= 
\frac{1}{n+1} \sum_{j=1}^{n+1}  \frac{\abs{\hat v(k_j)}^2 \abs{\hat \psi \uppar n(P+{e}_1 k_j,\hat{K}_j)}^2}{L(P,K)^{2(1-\eta)} }\,,
\\ 
\label{eq:Dposg2}
\abs{\gamma \uppar {od} \hat \psi \uppar n}^2
&
\leq \frac{ n^{\max(0,1-s)}}{n+1} \sum_{j=1}^{n+1} \frac{\abs{\hat v(k_j)}^2 \abs{\hat \psi \uppar n(P+{e}_1 k_j,\hat{K}_j)}^2}{L(P,K)^{2(1-\eta)} \omega(k_j)^s}  \Omega(\hat{K}_j)^s \, .
\end{align}
In the second line we have used the bound (with the notation $\sum_{j \in J} \omega(q_j) = \Omega(Q)$)
\begin{align}
\label{eq:sum}
\sum_{i=1}^n \omega(k_i)^s \leq {n}^{\max(0,1-s)} \Omega(K)^{s} \, ,
\end{align} 
which for $s<1$ follows from the Hölder inequality, while for $s\geq 1$ it holds by interpolation between the $\ell^1$-norm and the $\ell^\infty$-norm.

Note that both sums in~\eqref{eq:Dposg1},~\eqref{eq:Dposg2} are just symmetrisations and every summand has the same integral. 
%
Integrating~\eqref{eq:Dposg1} and performing a change of variables thus yields
\begin{align*}
\int & \abs{\gamma \uppar d \hat \psi \uppar n(P,K) }^2\, \ud P \ud K 
=
\int \frac{\abs{\hat v(k_{n+1})}^2 \abs{\hat \psi \uppar n(P,\hat{K}_{n+1})}^2}{L(P-{e}_1 k_{n+1},K)^{2(1-\eta)} }  \, \ud P \ud K \, .
\end{align*}
We notice that the square of $\hat\psi \uppar n$ does not depend on $k_{n+1}$ anymore. 
Using that $4 \eta <2-D=4+2\alpha-d$, and the Hardy-Littlewood inequality, the integral over $k_{n+1}$ can be bounded by
\begin{align}\notag
 \int_{\R^d}  &\frac{\abs{\hat v(k_{n+1})}^2 }{L(P-{e}_1 k_{n+1},K)^{2(1-\eta)} }  \, \ud k_{n+1} \\
&\leq  
\int_{\R^d}  \frac{\abs{k_{n+1}}^{-2 \alpha}}{((p_1-k_{n+1})^2+\Omega(\hat{K}_{n+1})+1)^{2(1-\eta)} }  \,  \ud k_{n+1} 
\notag\\
&
\leq
C   (\Omega(\hat{K}_{n+1})+1)^{-2(1-\eta)-\alpha+\frac{d}{2} }  \, ,
\label{eq:G int bound}
\end{align}
The exponent here is negative, which proves the required bound for $\abs{\gamma\uppar{d}\psi\uppar{n}}^2$.

The integration of~\eqref{eq:Dposg2} gives 
\begin{align*}
\int & \abs{\gamma \uppar {od} \hat \psi \uppar n(P,K)}^2 \, \ud P \ud K \\
&\leq n^{\max(0,1-s)}\int \frac{\abs{\hat v(k_{n+1})}^2 \Omega(\hat{K}_{n+1})^{s} \abs{ \hat \psi \uppar n(P,\hat{K}_{n+1})}^2}{L(P-{e}_1 k_{n+1},K)^{2(1-\eta)} \omega(k_{n+1})^s }  \, \ud P \ud K \, .
\end{align*}
The condition $u(s)<1$ implies that $\beta s +2\alpha <d$, so we can bound the integral in $k_{n+1}$ by
\begin{align*}
\int_{\R^d}   \frac{\abs{k_{n+1}}^{-2 \alpha-\beta s}}{((p_1-k_{n+1})^2+\Omega(\hat{K}_{n+1}))^{2(1-\eta)} }  \,  \ud k_{n+1} 
\leq
C   \Omega(\hat{K}_{n+1})^{-2(1-\eta)-\frac{2\alpha+\beta s}{2}+\frac{d}{2} }  \, .
\end{align*}
It follows that
\begin{align*}
 \int &  \abs{\gamma \uppar {od} \hat \psi \uppar n(P,K)}^2 \, \ud P \ud K\\
 &\leq C n^{\max(0,1-s)}\int  \Omega(\hat{K}_{n+1})^{2\eta -1 - u(s) +s} \abs{\hat \psi \uppar n(P,\hat{K}_{n+1})}^2\, \ud P \ud \hat K_{n+1} \, .
\end{align*}
The exponent  of $\Omega(\hat K_{n+1})$ in this integral is negative by hypothesis, and this proves the claim.
\end{proof}

A simple consequence of this proposition is that $G$ maps $\hilb$ into the domain of some power of $L$, and thus also of $N$.

\begin{cor}
\label{lem:gonfockspace}
Assume  Condition~\ref{cond:alphabeta} holds with $D\geq0$. There exists  an $\eta \in (0,1/2)$ such that $G$ is a continuous operator from $\hilb$ to $D(L^\eta)$.
\end{cor}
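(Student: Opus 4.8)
The plan is to deduce Corollary~\ref{lem:gonfockspace} directly from Proposition~\ref{prop:generalboundong} by choosing the free parameters $s$ and $\eta$ appropriately and then summing the sector-wise bounds. First I would pick $s$ large enough that the polynomial growth in $n$ disappears: taking any $s\geq 1$ makes $\max(0,1-s)=0$, so that the bound reads $\norm{L^\eta G\psi}_{\hilb\uppar{n+1}}\leq C\norm{\psi\uppar{n}}_{\hilb\uppar{n}}$ with a constant $C$ independent of $n$. Squaring and summing over $n$ then gives $\norm{L^\eta G\psi}_\hilb^2=\sum_n \norm{L^\eta G\psi}_{\hilb\uppar{n+1}}^2\leq C^2\sum_n\norm{\psi\uppar{n}}^2_{\hilb\uppar n}=C^2\norm{\psi}_\hilb^2$, which is exactly continuity of $G\colon\hilb\to D(L^\eta)$.

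The point requiring a little care is that the hypotheses of the proposition must be compatible with a choice of $\eta$ in the open interval $(0,1/2)$. With $s\geq 1$ the constraint $u(s)<1$ becomes $\frac{\beta}{2}s-\frac{D}{2}<1$, i.e. $s<\frac{2+D}{\beta}$; by~\eqref{eq:Dbeta} we have $D<\frac{\beta}{2}\leq\beta$, hence $\frac{2+D}{\beta}>\frac{2+D}{2}>1$, so $s=1$ is an admissible choice (indeed there is room to take $s$ slightly above $1$ if desired, but $s=1$ suffices). With $s=1$ the allowed range for $\eta$ is $0\leq\eta<\frac{1+u(1)-1}{2}=\frac{u(1)}{2}=\frac{1}{2}\bigl(\frac{\beta}{2}-\frac{D}{2}\bigr)=\frac{\beta-D}{4}$. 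Since $0\leq D<\beta$ by~\eqref{eq:Dbeta} and $\beta>0$, the quantity $\frac{\beta-D}{4}$ is strictly positive, so there exists $\eta$ with $0<\eta<\frac{\beta-D}{4}$; moreover $\frac{\beta-D}{4}\leq\frac{\beta}{4}\leq\frac{1}{2}$ since $\beta\leq 2$, so any such $\eta$ automatically lies in $(0,1/2)$. Fixing such an $\eta$ and applying Proposition~\ref{prop:generalboundong} with this $s$ and $\eta$ completes the argument.

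There is essentially no obstacle here: the corollary is a bookkeeping consequence of the proposition. The only thing one must check is the arithmetic showing that the constraints $u(s)<1$, $0\le\eta<\frac{1+u(s)-s}{2}$, and $\eta\in(0,1/2)$ can be met simultaneously, and this is guaranteed precisely by the inequality chain~\eqref{eq:Dbeta} together with $0<\beta\le 2$. Finally, since $L\geq N$ (in fact $L\geq 1$ on every sector with at least one boson, and $G$ maps into such sectors), continuity into $D(L^\eta)$ also yields continuity into $D(N^\eta)$, which is the remark made after the statement.

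Concretely, the proof would read:

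\begin{proof}
Choose $s=1$. Then $\max(0,1-s)=0$, so Proposition~\ref{prop:generalboundong} gives a constant $C$, independent of $n$, with $\norm{L^\eta G\psi}_{\hilb\uppar{n+1}}\leq C\norm{\psi\uppar{n}}_{\hilb\uppar n}$ for every $0\leq\eta<\frac{1+u(1)-1}{2}=\frac{u(1)}{2}=\frac{\beta-D}{4}$. By~\eqref{eq:Dbeta} we have $0\leq D<\frac\beta2\leq\beta$, so $\frac{\beta-D}{4}>0$, and since $\beta\leq2$ also $\frac{\beta-D}{4}\leq\frac12$; hence we may fix $\eta\in(0,1/2)$ with $\eta<\frac{\beta-D}{4}$. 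For this $\eta$, summing over $n$,
\begin{equation*}
 \norm{L^\eta G\psi}_\hilb^2=\sum_{n\in\N}\norm{(L^\eta G\psi)\uppar{n+1}}_{\hilb\uppar{n+1}}^2\leq C^2\sum_{n\in\N}\norm{\psi\uppar n}_{\hilb\uppar n}^2=C^2\norm{\psi}_\hilb^2\,,
\end{equation*}
so $G$ is continuous from $\hilb$ to $D(L^\eta)$. As $L\geq N$ on the range of $G$, it is also continuous from $\hilb$ to $D(N^\eta)$.
\end{proof}
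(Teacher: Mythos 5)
There is a genuine gap: your choice $s=1$ is not always admissible in Proposition~\ref{prop:generalboundong}. The hypothesis there is $u(s)<1$, i.e. $s<\frac{2+D}{\beta}$, and your chain $\frac{2+D}{\beta}>\frac{2+D}{2}>1$ relies on two strict inequalities that both fail simultaneously when $\beta=2$ and $D=0$: then $\frac{2+D}{\beta}=1$ and $u(1)=1$. This is not a vacuous boundary case — it is exactly the two-dimensional model with $v=\delta$, $\omega(k)=k^2+1$ (so $d=2$, $\alpha=0$, $\beta=2$), one of the paper's main examples; Condition~\ref{cond:alphabeta},(2) together with~\eqref{eq:Dbeta} only gives $\beta-D\leq 2$, not $\beta-D<2$. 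Nor is the restriction $u(s)<1$ cosmetic: in the proof of Proposition~\ref{prop:generalboundong} it is what guarantees $\beta s+2\alpha<d$, i.e. local integrability of $\abs{k}^{-2\alpha-\beta s}$, and at $u(s)=1$ the relevant $k_{n+1}$-integral in the off-diagonal term diverges. So in the case $D=0$, $\beta=2$ your argument, as written, proves nothing.

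The paper treats precisely this case separately: it takes $s=1-\eps<1$ (so that $u(s)=s<1$) and $\eta=\frac{1-\eps}{2}$, accepts the resulting factor $1+n^{\eps/2}$ in the sector-wise bound, and absorbs it into a power of $L$ using $N\leq L$ (i.e. $1+n^{\eps/2}\leq c\,L^{\eps/2}$ on $\hilb\uppar{n+1}$), which yields continuity of $G$ from $\hilb$ to $D(L^{1/2-\eps})$ for small $\eps>0$, hence the claim with $\eta=\tfrac12-\eps$. In the remaining cases (either $D>0$ or $\beta<2$) your computation is correct and coincides with the paper's second case: there $u(1)=(\beta-D)/2<1$, so $s=1$ is admissible, any $0<\eta<\frac{\beta-D}{4}\leq\frac12$ works, and summing the $n$-independent sector bounds gives continuity as you describe. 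Your proposal therefore needs this extra case distinction to be complete.
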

\begin{proof}
 We apply Proposition~\ref{prop:generalboundong}, distinguishing two cases. First, if $D=0$ and $\beta=2$, then $u(s)=s$ and we choose, for some $1>\eps>0$, $s_\eps=1-\eps$ and $\eta_\eps=\frac{1-\eps}2$.
 Proposition~\ref{prop:generalboundong} then gives the bound 
 \begin{equation*}
\norm{ L^{\frac{1-\eps}2}G\psi}_{\hilb \uppar{n+1}} \leq C (1+n^{\eps/2}) \norm{\psi\uppar n}_{\hilb \uppar n}. 
 \end{equation*}
 Dividing  by $(1+n^{\eps/2}) \leq c L^{\eps/2}$ then shows that $G$ maps $\hilb $ to $D(L^{1/2-\eps})$  for all $0<\eps\leq \frac12$, in this case.
 
 In all other cases, we have $u(1)=(\beta-D)/2<1$, by~\eqref{eq:Dbeta}, and we may choose in Proposition~\ref{prop:generalboundong} $s=1$ and any $0\leq\eta<\frac{\beta-D}4$.
\end{proof}

An important consequence of this is that $ga(V)L^{-1}=-G^*$ is a continuous operator on $\hilb$, so $a(V)$ is well defined on $D(L)$. 
We can thus define $L_0$ and its adjoint in the very same way as in Section~\ref{sect:form}. We can also prove the analogue of Lemma~\ref{lem:Dnegnisleftinv}.
\begin{lem}
\label{lem:Dposnisleftinv}
Let Condition~\ref{cond:alphabeta} be satisfied. Then $1-G$ is invertible and there exists a constant $C>0$ such that
\begin{align}
\label{eq:Dposnisleftinv}
\norm{N \psi }_\hilb \leq C (\norm{N (1-G) \psi }_\hilb + \norm{ \psi }_\hilb ) \, .
\end{align}
\end{lem}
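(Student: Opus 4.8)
The plan is to mimic the proof of Lemma~\ref{lem:Dnegnisleftinv} from Section~\ref{sect:form}, replacing the sector-wise bound coming from Lemma~\ref{lem:Dnegbasicbound} by the one furnished by Corollary~\ref{lem:gonfockspace}. By that corollary there is an $\eta\in(0,1/2)$ and a constant $C$ with $\norm{G}_{\hilb\uppar{n-1}\to\hilb\uppar{n}}\leq C$ uniformly in $n$; in particular $G$ is bounded on $\hilb$. However a uniform bound on $\norm{G^k}$ of the form $C^k$ is not by itself enough to sum the Neumann series, so I would extract the extra smallness from the resolvent: define, as in the $D<0$ case, the shifted operator $G_\mu:=-g\bigl(a(V)(L+\mu^2)^{-1}\bigr)^{*}=-g(L+\mu^2)^{-1}a^{*}(V)$. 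The bound of Proposition~\ref{prop:generalboundong} (applied with $s=1$, or with $s=1-\eps$ in the borderline case $D=0,\beta=2$, exactly as in the proof of Corollary~\ref{lem:gonfockspace}) holds verbatim with $L$ replaced by $L+\mu^{2}\geq L$, so one gets $\norm{L^{\eta}G_\mu\psi}_{\hilb\uppar{n+1}}\leq C(1+n^{\max(0,1-s)/2})(\mu^{2})^{-(1/2-\eta)}\norm{\psi\uppar n}$ — equivalently $\norm{G_\mu}_{\hilb\to\hilb}\to 0$ as $\mu\to\infty$. Hence for $\mu$ large enough $c_\mu:=\norm{G_\mu}_{\hilb\to\hilb}<1$ and also $\norm{G_\mu}_{D(N)\to D(N)}<1$ (it maps $\hilb\uppar n$ to $\hilb\uppar{n+1}$, so its $D(N)$-norm is controlled by its $\hilb$-norm times a factor $\tfrac{n+1}{n}\le 2$ after peeling off the bottom sector, or one simply invokes the same estimate with an extra power of $(1+n)$ absorbed into $L^{\eta}$ by lowering $\eta$ slightly). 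Thus $1-G_\mu$ is invertible on both $\hilb$ and $D(N)$.

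Next I would transfer invertibility from $1-G_\mu$ to $1-G$ and prove the estimate~\eqref{eq:Dposnisleftinv} by the resolvent identity, again copying the end of the proof of Lemma~\ref{lem:Dnegnisleftinv}. One writes $1-G=(1-G_\mu)-(G-G_\mu)$ and notes $G-G_\mu=-g\bigl((L+\mu^{2})^{-1}-L^{-1}\bigr)(-a^{*}(V))\cdot(-1)=\mu^{2}L^{-1}(L+\mu^{2})^{-1}a^{*}(V)\cdot g=-\mu^{2}L^{-1}(L+\mu^{2})^{-1}\,G_\mu$ up to the obvious rearrangement, so $G=G_\mu+\mu^{2}(L+\mu^{2})^{-1}L^{-1}a^{*}(V)(-g)=G_\mu-\mu^{2}(L+\mu^{2})^{-1}L^{-1}\cdot L G_\mu$; more directly, on $D(N)$ one has the bound
\begin{align*}
 \norm{N\psi}
 &\leq (1-c_\mu)^{-1}\bigl(\norm{N(1-G_\mu)\psi}+\norm{(1-G_\mu)\psi}\bigr)\\
 &\leq (1-c_\mu)^{-1}\bigl(\norm{N(1-G)\psi}+\norm{\mu^{2}N(L+\mu^{2})^{-1}G_\mu\psi}+\norm{(1-G_\mu)\psi}\bigr),
\end{align*}
and since $L\geq N$ the operators $\mu^{2}N(L+\mu^{2})^{-1}$ and $1-G_\mu$ are bounded on $\hilb$, so the right-hand side is bounded by $C(\norm{N(1-G)\psi}+\norm{\psi})$, which is~\eqref{eq:Dposnisleftinv}. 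Invertibility of $1-G$ itself follows because $1-G=(1-G_\mu)(1-(1-G_\mu)^{-1}(G-G_\mu))$ and $(1-G_\mu)^{-1}(G-G_\mu)$ is a bounded operator whose norm can be made $<1$ by the same argument applied with $\mu$ slightly larger, or — more in keeping with the paper's style — one observes directly that the a priori estimate~\eqref{eq:Dposnisleftinv} together with density of $D(N)$ and boundedness of $G$ forces $1-G$ to be injective with closed range, and the adjoint argument (or the Neumann-series-type bound on $(1-G_\mu)^{-1}(G-G_\mu)$) gives surjectivity.

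The one genuinely new point compared with Section~\ref{sect:form} — and the step I expect to need the most care — is the borderline case $D=0,\ \beta=2$ (the two-dimensional point-particle model and the three-dimensional $\beta=2$ case with $\alpha>1/6$), where Corollary~\ref{lem:gonfockspace} only gives $G:\hilb\to D(L^{1/2-\eps})$ with the loss $n^{\eps/2}$ rather than a clean uniform sector bound. There one must check that after dividing by $(1+n^{\eps/2})\le cL^{\eps/2}$ the resulting map $\hilb\to D(L^{1/2-\eps})$ is genuinely uniformly bounded over sectors and that the same reasoning goes through for $G_\mu$ with the decaying factor $\mu^{-2\varepsilon'}$ for some $\varepsilon'>0$; since $\eta=\tfrac12-\eps<\tfrac12$ one still has $1/2-\eta=\eps>0$, so the resolvent gain $(\mu^{2})^{-\eps}$ is strictly positive and the smallness of $c_\mu$ survives. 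Everything else is a line-by-line repetition of the $D<0$ argument with $L\geq N$ used in place of the form-domain estimate, so no further obstacle is anticipated.
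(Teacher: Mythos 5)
The estimate~\eqref{eq:Dposnisleftinv} in your proposal is obtained exactly as in the paper (the $G_\mu$/resolvent argument copied from Lemma~\ref{lem:Dnegnisleftinv}), and your remark that one must track the $\mu$-dependence of Proposition~\ref{prop:generalboundong} to get $c_\mu<1$ is sound. The gap is in your treatment of the invertibility of $1-G$. You assert that Corollary~\ref{lem:gonfockspace} only provides a sector-wise bound that is uniform in $n$, so that the Neumann series cannot be summed; this misses the point of the paper's one-line proof. Since $G$ maps $\hilb$ continuously into $D(L^\eta)$ with $\eta>0$, and since $L\geq N\geq n$ on $\hilb\uppar{n}$ (because $\omega\geq 1$), one gets the \emph{decaying} sector-wise bound $\norm{G}_{\hilb\uppar{n-1}\to\hilb\uppar{n}}\leq C n^{-\eta}$. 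Inserting this into the computation of $\norm{G^k\psi}$ from Lemma~\ref{lem:Dnegnisleftinv} yields $\norm{G^k\psi}\leq C^k (k!)^{-\eta}\norm{\psi}$, so the Neumann series $\sum_{k\geq0}G^k$ converges and $1-G$ is invertible; this is what "exactly the same proof, using Corollary~\ref{lem:gonfockspace} and $N\leq L$" means, with $n^{-\eta}$ playing the role of $n^{D/4}$, $D<0$.

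Your substitute argument for invertibility does not close. The factorization $1-G=(1-G_\mu)\bigl(1-(1-G_\mu)^{-1}(G-G_\mu)\bigr)$ would require $\norm{(1-G_\mu)^{-1}(G-G_\mu)}<1$, but $G-G_\mu=\mu^2(L+\mu^2)^{-1}G$ converges strongly to $G$ as $\mu\to\infty$, and its norm does not become small — it is comparable to $\norm{G}$, which need not be less than one — so taking "$\mu$ slightly larger" gains nothing. The fallback via the a priori estimate is also insufficient: \eqref{eq:Dposnisleftinv} has $\norm{\psi}$ on its right-hand side, so it gives neither a lower bound of the form $\norm{(1-G)\psi}\geq c\norm{\psi}$ nor closed range. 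Injectivity of $1-G$ does hold trivially because $G$ raises the boson number by one (so $\psi=G\psi$ forces $\psi\uppar0=0$ and then $\psi\uppar n=0$ inductively), but the analogous argument fails for the adjoint $1-G^*$, since the relation $\psi\uppar n=G^*\psi\uppar{n+1}$ does not terminate; hence surjectivity is not established by your sketch. Replacing this portion of your argument by the sector-wise Neumann-series estimate above repairs the proof, and the rest of your proposal then coincides with the paper's.
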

\begin{proof}
Using Corollary~\ref{lem:gonfockspace} and the fact that $N\leq L$ the proof for the case $D\geq0$ is exactly the same as in Lemma~\ref{lem:Dnegnisleftinv} for $D<0$.
\end{proof}

\subsection{Extending the annihilation operator for $D\geq0$}
In this section we will extend the annihilation operator $a(V)$ to certain vectors in the range of $G$, defining the operator $A$. To do so, for any symmetric operator $(T,D(T))$ we could define an extension $g A$ on the set $D(A)=D(L) \oplus G D(T)$ by
\begin{align}\label{eq:DposAdef}
g A(\psi + G \phi) := g a(V) \psi +g A G \phi =  g a(V) \psi +T \phi\,.
\end{align}
In the case of a form perturbation, where $G$ maps sector-wise into $D(L^{1/2})$, the right extension of $a(V)$ to these elements is obviously $a(V)$ itself. As a result, we have simply chosen $T = g a(V) G$ in Section~\ref{sect:form}.
However, this choice is not possible if the domain of $a(V)$ and the range of $G$ do not match, as is the case if $D\geq 0$. We will define $T$ by slightly modifying the expression for $ga(V)G$, in such a way that the operator $H$ we obtain coincides with the one constructed by renormalisation.
In Fourier representation, $ga(V)G$ is formally given by
\begin{align*}
g \sqrt{n+1}  \sum_{\ell=1}^M \int_{\R^d} \overline{\hat v(k_{n+1})} \widehat {G  \phi\uppar n}(P-{e}_\ell k_{n+1},K) \, \ud k_{n+1} \, .
\end{align*}
Expanding the formal action by spelling out $\widehat {G  \phi\uppar n}$ as in~\eqref{eq:G Fourier} gives
\begin{align}
\label{eq:Tformally v1} 
-{g^2} \sum_{\ell=1}^M \sum_{i=1}^M \sum_{j=1}^{n+1} \int_{\R^d}  \frac{\overline{\hat v(k_{n+1})} \hat v(k_{j}) \hat \phi\uppar n(P-{e}_\ell k_{n+1}+{e}_i k_j,\hat{K}_j)}{L(P-{e}_\ell k_{n+1},K) } \, \ud k_{n+1} \, .
\end{align}
Have a look at the sum above. In the terms where $j=n+1$ and $i=\ell$, the function $\hat \phi \uppar n$ does not depend on the variable $k_{n+1}$ anymore. Formally, these terms define a multiplication operator, with the multiplier  given by a sum over integrals of the form
\begin{align*}
-g^2 \int_{\R^d}  \frac{\abs{\hat v(k_{n+1})}^2  }{L(P-{e}_\ell k_{n+1},K) } \, \ud k_{n+1} \, .
\end{align*}
This is what we will call the \textit{diagonal} part in the following. However, this integral is divergent.
In order to obtain a well-defined operator, we replace this integral by a regularised version. We set
\begin{align}
\label{eq:Dposregintergal}
I_\ell(P,\hat{K}_{n+1}) 
&
:= 
\int_{\R^d} \abs{\hat v(k_{n+1})}^2  \left(\frac{ 1 }{L(P-{e}_\ell k_{n+1},K) } - \frac{ 1 }{k_{n+1}^2+\omega(k_{n+1}) } \right) \, \ud k_{n+1}
\end{align}
and define $T_\ud$, the diagonal part of $T$, in Fourier  representation as 
\begin{align} \label{eq:DposDefofTd}
&\widehat{T_\ud  \phi\uppar n}(P,\hat{K}_{n+1})
 := -g^2 \hat \phi\uppar n(P,\hat{K}_{n+1}) \sum_{\ell=1}^M  I_\ell(P,\hat{K}_{n+1}) \, .
\end{align}
The remaining expressions in~\eqref{eq:Tformally v1} constitute the \textit{off-diagonal} part of $T$. 
It is a sum of integral operators and we will show that they are defined on suitable spaces, without modification. Spelled out, we have
\begin{align}
\label{eq:Toffdiagdef v1}
\widehat{T_\uod \phi \uppar n}& (P,\hat K_{n+1})\\
:=
&
\begin{aligned}[t]
&
  -
{g^2}  \sum_{\ell=1}^M \sum_{\substack{i=1\\i\neq \ell}}^M \int_{\R^d}  \frac{\abs{\hat v(k_{n+1})}^2  \hat \phi\uppar n(P-({e}_\ell -{e}_i) k_{n+1},\hat{K}_{n+1})}{L(P-{e}_\ell k_{n+1},K) } \, \ud k_{n+1} 
\\
&
  -{g^2}  \sum_{\ell=1}^M \sum_{i=1}^M \sum_{j=1}^{n} \int_{\R^d}  \frac{\overline{\hat v(k_{n+1})} \hat v(k_{j}) \hat \phi\uppar n(P-{e}_\ell k_{n+1}+{e}_i k_j,\hat{K}_j)}{L(P-{e}_\ell k_{n+1},K) } \, \ud k_{n+1} \,.
\end{aligned}\notag
\end{align}
We define the operator 
\begin{align}
\label{eq:DposTdef}
T \phi \uppar n  := T_\ud \phi \uppar n  + T_\uod \phi \uppar n 
\end{align} 
by the expressions above, on a domain (or rather a family of admissible domains) to be specified in Proposition~\ref{prop:Tmainprop} below.
\begin{rem}
\label{rem:Tdef}
As noted before, the choice of the operator $T$ is not unique. In fact, any operator $T$ that is symmetric on an appropriate domain will lead to a self-adjoint operator $H$. 
We have made the choice for which this operator coincides with the one constructed by renormalisation, with the usual choice of renormalisation constant $E_\Lambda$, cf.~Theorem~\ref{thm:renorm}.
Observe that the the regularised integral~\eqref{eq:Dposregintergal} is formally obtained by subtracting the \enquote{constant} $E_\infty=\int \abs{\hat v(k)}^2(k^2+\omega(k))^{-1} \ud k$. In this sense, the operator $A$ may be viewed as the \enquote{renormalised} annihilation operator.

Another way to interpret the expression for $T_\ud$ is that the distribution $v(x_\ell-y_{n+1})$ is not applied to the function $G\phi\uppar{n}$, but to the more regular function 
\begin{equation*}
 G\phi\uppar{n} +g \phi\uppar{n}(X,\hat Y_{n+1}) f(x_\ell-y_{n+1})\,,
\end{equation*}
 where $\hat f(k)=\hat v(k) (k^2 + \omega(k))^{-1}$. Here, the second term effectively cancels the local singularities of $G\phi\uppar{n}$ in the directions parametrised by $x_\ell-y_{n+1}$.
 This point of view is particularly natural if $v(y)$ is singular only at $y=0$, and thus $G\phi\uppar{n}$ is singular on the planes $\{x_\ell=y_j\}$. In this case, the off-diagonal operator $T_\mathrm{od}$ comes from the application of $v(x_\ell-y_{n+1})$ to functions $L^{-1} v(x_i - y_j) \phi\uppar{n}(X,\hat Y_j)$ in directions where they are regular.

In concrete examples, there might be other criteria that single out a choice of $T$, respectively $A$. For example in the case of $v=\delta$, $d=2$, $\omega(k)=k^2+1$, the annihilation operator $a(V)$ is (the sum of) evaluation operators on the planes where $x_\ell=y_j$. These are local boundary values and one would want the extension $A$ to be local in this sense as well. In this example, the functions in the range of $G$ are singular, with an asymptotic expansion
\begin{equation*}
 G\phi\uppar{n}(X,Y)= \frac{c \log|x_\ell-y_j| \phi\uppar{n}(X,\hat Y_j)}{\sqrt{n+1}} + F(X,Y) 
\end{equation*}
as $|x_\ell-y_j|\to 0$, where $c$ is  a universal constant and $F$ is a function that has a (suitable) limit almost-everywhere  on $\{x_\ell=y_j\}$. One can view $\phi$ as a local boundary value of this function, since
\begin{equation*}
 \phi(X,\hat Y_j) = \sqrt{n+1} \lim_{|x_\ell-y_j|\to 0} \frac{G\phi\uppar{n}(X,Y)}{c \log|x_\ell-y_j|}.
\end{equation*}
It is then natural to choose $AG\phi\uppar{n}$ as the evaluation of the regular part $F(X,Y)$ of $G\phi\uppar{n}$, more precisely 
\begin{equation*}
  AG \phi\uppar{n}(X,\hat Y_{n+1}) = 
  \lim_{r\to 0}\sum_{\ell=1}^M\int\limits_{|x_\ell-y_{n+1}|=r} \hspace{-21pt}\left( \sqrt{n+1}G \phi\uppar{n}(X,Y) - c \log(r) \phi\uppar{n}(X,\hat Y_{n+1})\right)\ud \omega.
\end{equation*}
This is clearly a local boundary value, and one can check that this coincides with our choice of $A$ up to the addition of a global constant. Such boundary values are discussed in~\cite{La18, IBCpaper, TeTu15} for a variety of models involving creation and annihilation of particles.
Boundary values for a two-dimensional model with point interactions were treated by Dell'Antonio, Figari, and Teta~\cite[Sec.5]{DeFiTe1994}.
\end{rem}

The next proposition states the important mapping properties of $T$. 
For our model of non-relativistic point-particles in two dimensions ($d=2$, $v=\delta$, $\omega(k)=k^2+1$), we show that $T_\ud$ is defined on $D(L^\eps)$ for any $\eps>0$ (in fact, it is a Fourier multiplier of logarithmic growth), and that $T_\uod$ is a bounded operator on $\hilb\uppar{n}$ whose norm grows at most like $n^\eps$.
 For the Nelson model ($d=3$, $\omega(k)=\sqrt{1+k^2}$, $\hat v(k)=\omega(k)^{-1/2}$), $T_\ud$ is also bounded by any power of $L$, and  $T_\mathrm{od}$ is an operator $D(L^\eps) \cap \hilb\uppar{n} \to \hilb \uppar{n}$ whose norm grows at most like $n^{1-2\eps}$.

\begin{prop}
\label{prop:Tmainprop}
Assume Condition~\ref{cond:alphabeta} holds with $D\geq0$, set $u(s) = \frac{\beta }{2} s - \frac{D}{2}$ and define $T$ for every $n\in \N$ by the expression~\eqref{eq:DposTdef}.
\begin{itemize}
\item If $D=0$ and $\beta=2$ then, for any $\eps>0$, $T$ defines a symmetric operator on the domain $D(T)=D(L^{\eps})$. 
\item If either $D>0$ or $\beta<2$ then, for all $s > 0$ such that the following two conditions are satisfied
\begin{align*}
u(s) &< 1
\\
0 &< u(u(s)) \, ,
\end{align*}
the operator $T$ is symmetric on $D(T)=D((N+1)^{\max(0,1-s)} L^{s-u(s)})$.
\end{itemize} 
\end{prop}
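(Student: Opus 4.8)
The plan is to establish the two mapping statements by estimating $T_\ud$ and $T_\uod$ separately in the Fourier representation, using the integral bounds already developed in the proof of Proposition~\ref{prop:generalboundong}, and then to deduce symmetry from the explicit formulae. For the diagonal part, I would first analyse the multiplier $\sum_{\ell} I_\ell(P,\hat K_{n+1})$ from~\eqref{eq:Dposregintergal}: the subtraction of $(k^2+\omega(k))^{-1}$ makes the integrand $O(|k|^{-2\alpha-2}\,|P-e_\ell k|/(k^2+\omega(k))^{2})$ for large $k$ and $O(|k|^{-2\alpha})$ near the origin, so the integral converges, and one can bound it by $C(1+P^2+\Omega(\hat K_{n+1}))^{D/2}$ when $D>0$, respectively by $C\log(2+P^2+\Omega(\hat K_{n+1}))$ when $D=0,\beta=2$ (here $v=\delta$). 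Since $P^2+\Omega(\hat K_{n+1})\leq L(P,\hat K_{n+1})$ on the $n$-boson sector, this gives $\norm{T_\ud\phi\uppar n}\leq C\norm{L^{\max(0,D/2)}\phi\uppar n}$ (logarithmic growth being dominated by $L^\eps$ for every $\eps>0$). One checks $D/2 \leq s-u(s)$ under the stated hypotheses, so $T_\ud$ is defined on $D(T)$; and $T_\ud$ is manifestly symmetric, being (sector-wise) multiplication by a real-valued function.

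For the off-diagonal part $T_\uod$, the two families of terms in~\eqref{eq:Toffdiagdef v1} are handled the same way as in Proposition~\ref{prop:generalboundong}. I would insert a factor $\omega(k_{n+1})^{s/2}\omega(k_{n+1})^{-s/2}$ (and, for the second family, also split the spectator-boson weights using~\eqref{eq:sum}), apply Cauchy--Schwarz in the $k_{n+1}$ integral with the assumptions $|\hat v|\le|k|^{-\alpha}$ and $\omega(k)\ge(1+k^2)^{\beta/2}$, and use the Hardy--Littlewood inequality to evaluate the resulting convolution. The condition $u(s)<1$, i.e.\ $\beta s+2\alpha<d$, is exactly what is needed for the $k_{n+1}$-integral $\int |k|^{-2\alpha-\beta s}((p_1-k)^2+\Omega)^{-1}\,\ud k$ to converge and be bounded by $C\,\Omega(\hat K_{n+1})^{-1-u(s)+\alpha'}$; after reinstating the remaining power of $L$ one is left needing the exponent of $\Omega(\hat K_{n+1})$ to be negative, which is precisely $0<u(u(s))$. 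The particle-number bookkeeping produces the factor $(N+1)^{\max(0,1-s)}$ from~\eqref{eq:sum} and the $\sqrt{n+1}$ normalisations cancel against the $1/\sqrt{n+1}$ in $G$. Symmetry of $T_\uod$ is verified directly: swapping $\phi\uppar n$ and $\phi\uppar n{}'$, relabelling the dummy boson $k_{n+1}$ against the summation index $j$, and using that the kernel is built from $\overline{\hat v(k)}\hat v(k')/L$, one sees $\langle T_\uod\phi,\phi'\rangle=\langle\phi,T_\uod\phi'\rangle$; combined with the symmetry of $T_\ud$ this gives symmetry of $T=T_\ud+T_\uod$ on $D(T)$, which one checks is dense (it contains all finite-particle vectors with smooth compactly supported Fourier transforms).

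The main obstacle I expect is the bookkeeping in the off-diagonal estimate: one must track the interplay between the regularity exponent (the power of $L$ one is allowed to keep), the weight $\omega(k_{n+1})^s$ used to tame the $k_{n+1}$-integral, and the spectator weights $\Omega(\hat K_j)^s$ redistributed via~\eqref{eq:sum}, and verify that the single free parameter $s$ can simultaneously satisfy $u(s)<1$ and $u(u(s))>0$ while leaving a nonnegative, finite power $s-u(s)$ of $L$ and the correct power $\max(0,1-s)$ of $N+1$ — and that this is consistent with $D/2\le s-u(s)$ needed for $T_\ud$. A secondary subtlety is the borderline case $D=0$, $\beta=2$, where the diagonal integral $I_\ell$ is only logarithmically convergent after subtraction, so one has to be slightly careful to extract a genuine (rather than merely formal) bound of the form $|I_\ell|\le C\log(2+L)$, and to note that in the off-diagonal terms choosing $s$ slightly below $1$ costs only an arbitrarily small power $L^\eps$, matching the statement $D(T)=D(L^\eps)$.
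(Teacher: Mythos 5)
Your plan for the diagonal part and for the second off-diagonal family (the operators $\tau_{i\ell}$ of \eqref{eq:T2def}) follows the paper's own route — $\omega^{s/2}$-weights, Cauchy--Schwarz, the combinatorial bound \eqref{eq:sum}, with $u(s)<1$ and $u(u(s))>0$ playing exactly the roles you describe. The genuine gap is in the first off-diagonal family, the operators $\theta_{i\ell}$ of \eqref{eq:T1def} with $i\neq\ell$ (present as soon as $M\geq 2$), which you propose to treat "the same way". For these terms the argument of $\hat\phi\uppar{n}$ is shifted only in the particle momenta, $\hat\phi\uppar{n}(P+(e_i-e_\ell)k_{n+1},\hat K_{n+1})$, so after an $\omega(k_{n+1})^{\pm s/2}$-weighted Cauchy--Schwarz the compensating factor $\omega(k_{n+1})^{s}$ sits upstairs in a $k_{n+1}$-integral where it cannot be traded against a boson slot of $\phi$: the renaming $k_j\leftrightarrow k_{n+1}$ that rescues the $\tau$-terms is unavailable, and since only a lower bound on $\omega$ is assumed, the best one can do is $\omega(k_{n+1})^{s}/L\leq C\big((p-k_{n+1})^2+\Omega(\hat K_{n+1})\big)^{s-1}$, leaving an integral $\int \abs{k}^{-2\alpha}\big((p-k)^2+\Omega\big)^{s-1}\ud k$ that diverges at infinity because $2\alpha+2(1-s)\leq 2\alpha+2\leq d$ whenever $D\geq 0$ and $s>0$. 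Dropping the weight fails for the same reason: $\int\abs{\hat v(k)}^2L^{-1}\ud k=\infty$ is precisely the divergence that was subtracted into $T_\ud$. The paper handles these terms with a different Schur weight, in the particle momentum, $\abs{p_\ell-k_{n+1}}^{D+\eps}$ (Lemma~\ref{lem:DposT1}), exploiting the regularity of $\phi$ in the $x$-variables; this yields $\theta_{i\ell}\colon D(L^{D/2})\to\hilb$ bounded, with no growth in $n$, and $D(L^{D/2})\supset D(T)$ since $s-u(s)\geq D/2$. Some idea of this kind is missing from your proposal, and without it the first family is not controlled.

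A secondary imprecision: your diagonal bound $\norm{T_\ud\phi\uppar{n}}\leq C\norm{L^{\max(0,D/2)}\phi\uppar{n}}$, with logarithmic growth attributed only to the case $D=0$, $\beta=2$, is not correct as stated when $D=0$ and $\beta<2$ (e.g.\ the Nelson model, $d=3$, $\alpha=\tfrac12$): there $\abs{q}^{-2-2\alpha}=\abs{q}^{-d}$ is not locally integrable after the crude bound on the subtracted integrand, $I_\ell$ is unbounded, and one only obtains $\abs{I_\ell}\leq C_\eps L^{\eps}$ for every $\eps>0$ (cf.\ Lemma~\ref{lem:DposTdiag}). This is harmless for the final statement, because in that case $s-u(s)=(1-\tfrac\beta2)s+\tfrac D2>0$ leaves room for an arbitrarily small power of $L$, but it needs to be said; the main defect remains the $\theta$-terms above.
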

\begin{proof}
The proof will be split into three lemmas. 
In Lemma~\ref{lem:DposTdiag} we deal with the diagonal operator $T_\ud$. We will show that $T_\ud$ defines a symmetric operator on the domain $D(L^{\max(\eps,D/2)})$ for any $\eps>0$.
We further decompose the \textit{off-diagonal} part in~\eqref{eq:Toffdiagdef v1} as
\begin{align*}
\widehat{T_\uod \phi} \uppar n := \sum_{\ell=1}^M \sum_{i=1,i\neq \ell}^M \theta_{i\ell} \hat \phi \uppar n  + \sum_{\ell=1}^M \sum_{i=1}^M \tau_{i\ell} \hat \phi \uppar n 
\end{align*}
 with
\begin{align}
\label{eq:T1def}
\theta_{i\ell}  \hat{\phi} \uppar n (P,\hat{K}_{n+1})  := \int_{\R^d}  \frac{\abs{\hat v(k_{n+1})}^2 \hat \phi\uppar n(P+({e}_i-{e}_\ell) k_{n+1},\hat{K}_{n+1})}{L(P-{e}_\ell k_{n+1},K) } \, \ud k_{n+1} 
\end{align}
and
\begin{align}
\label{eq:T2def}
\tau_{i\ell} \hat{\phi} \uppar n (P,\hat{K}_{n+1})  := \sum_{j=1}^{n} \int_{\R^d}  \frac{\overline{ \hat v(k_{n+1})} \hat v(k_{j}) \hat \phi\uppar n(P-{e}_\ell k_{n+1}+{e}_i k_j,\hat{K}_j)}{L(P-{e}_\ell k_{n+1},K) } \, \ud k_{n+1} \, .
\end{align}
 In Lemma~\ref{lem:DposT1} the properties of the $\theta$-terms and in Lemma~\ref{lem:T2mainlem} those of the $\tau$-terms are described. Both of these lemmas rely on modifications of the Schur test, but the second one will be more difficult due to the additional sum  over $n$ terms in $\tau_{i\ell}$.

If $D=0$ and $\beta=2$, Lemma~\ref{lem:DposTdiag} shows that $T_\ud$ is defined on $D(L^{\eps})$ for any $\eps>0$. Regarding the terms $\theta_{i\ell}$, Lemma~\ref{lem:DposT1} shows that they are bounded and that their sum is symmetric. Now because $u(s)=s$, the conditions on the parameter $s$ in Lemma~\ref{lem:T2mainlem} reduce to $s \in (0,1)$. The lemma then states that the operators $\tau_{i \ell} $ are defined on $D(N^{1-s})$ and their sum is symmetric. Choosing $s_\eps = 1 -\eps$ and estimating $N \leq L$ yields the claim in this case.

If either $D>0$ or $\beta<2$, strictly, we have for sufficiently small $\eps>0$
\begin{align*}
s-u(s) &= \frac{1}{2}(2-\beta) s + \frac{D}{2} \geq \max\left(\eps, \frac{D}{2}\right) \, .
\end{align*}
This means that $  D((N+1)^{\max(0,1-s)} L^{s-u(s)}) \subset D(L^{\max(\eps, D/2)}) $ for such an $\eps$. Therefore, Lemmas \ref{lem:DposTdiag} -- \ref{lem:T2mainlem} together prove the claim.
\end{proof}
\begin{lem}
\label{lem:DposTdiag}
Assume Condition~\ref{cond:alphabeta} holds with $D\geq 0$. Then for any $\eps>0$ the expression $T_\ud$ given by $\eqref{eq:DposDefofTd}$ defines a symmetric operator on the domain $D(T_\ud)= D(L^{\max(\eps,D/2)})$. 
\end{lem}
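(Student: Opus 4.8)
I would start from the observation that $T_\ud$ is, on each sector $\hilb\uppar{n}$, multiplication in Fourier representation by the function $-g^2\sum_{\ell=1}^{M} I_\ell$. The integrand of $I_\ell$ in~\eqref{eq:Dposregintergal} is real-valued, since $\abs{\hat v}^2\ge0$, $\omega>0$ and all the squared momenta appearing are real and nonnegative; hence this multiplier is real, and the lemma will follow at once from a pointwise bound showing that it is dominated by a fixed power of $L$. Concretely, it suffices to prove that for some $C$ independent of $n$, $P$ and $\hat K_{n+1}$,
\begin{equation*}
 \abs{I_\ell(P,\hat K_{n+1})}\le C\bigl(1+(P^2+\Omega(\hat K_{n+1}))^{\max(\eps,D/2)}\bigr).
\end{equation*}
Indeed, summing the resulting estimate $\norm{T_\ud\psi\uppar{n}}\le C\norm{(1+L^{\max(\eps,D/2)})\psi\uppar{n}}$ over $n$ shows that $T_\ud$ maps $D(L^{\max(\eps,D/2)})$ continuously into $\hilb$, and symmetry on this domain is immediate because $T_\ud$ is multiplication by a real function.

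For the pointwise bound I would write $p:=p_\ell$ for the $\ell$-th $d$-dimensional block of $P$ and $\mu:=\Omega(\hat K_{n+1})+\sum_{m\ne\ell}\abs{p_m}^2$, so that $\Lambda:=P^2+\Omega(\hat K_{n+1})=\abs{p}^2+\mu$ and $L(P-e_\ell q,K)=\abs{p-q}^2+\mu+\omega(q)$. Bringing the two fractions in~\eqref{eq:Dposregintergal} over a common denominator and using $2p\cdot q-\abs{p}^2=\abs{q}^2-\abs{p-q}^2$, the numerator becomes $\abs{q}^2-\abs{p-q}^2-\mu$, which is the cancellation responsible for convergence. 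I would then split $\R^d$ at $\abs{q}=R_0:=2\max(\abs{p},\sqrt{\mu+1})$. On the far region $\abs{q}\ge R_0$ one has $\abs{p-q}\ge\abs{q}/2$ and $\Lambda\le\abs{q}^2$, so the integrand has modulus at most $C\abs{q}^{-2\alpha}(\abs{p}\abs{q}+\Lambda)\abs{q}^{-4}$; since $D<1$ (which holds because $D<\beta/2\le1$ by~\eqref{eq:Dbeta}), the two resulting integrals are finite and bounded by $CR_0^{D}\le C(1+\Lambda^{D/2})$.

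On the near region $\abs{q}<R_0$ I would only use $\abs{q}^2-\abs{p-q}^2-\mu\le\abs{q}^2+\abs{p-q}^2+\mu$ together with $\omega\ge1$ to estimate the integrand by $\abs{q}^{-2\alpha}\bigl((\abs{p-q}^2+\mu+1)^{-1}+2(q^2+1)^{-1}\bigr)$. The $(q^2+1)^{-1}$ term contributes $\le C(1+R_0^{D})$ when $D>0$ and $\le C(1+\log R_0)$ when $D=0$, and this logarithm is the sole reason for the arbitrary $\eps$ in the statement. The term with $(\abs{p-q}^2+\mu+1)^{-1}$ is the main obstacle: if $\abs{p}\lesssim\sqrt{\mu+1}$ one bounds the denominator below by $\mu+1$ and obtains $C(\mu+1)^{D/2}$, but otherwise neither $\abs{p-q}^2$ nor $q^2$ may be discarded, and one splits once more into $\{\abs{q}\le\abs{p}/2\}$, where $\abs{p-q}^2+\mu+1\ge\abs{p}^2/4$ gives $C\abs{p}^{D}$, and the annulus $\abs{q}\sim\abs{p}$, on which $\abs{q}^{-2\alpha}\sim\abs{p}^{-2\alpha}$ and, after substituting $r=q-p$, the remaining integral $\int_{\abs{r}\le3\abs{p}}(\abs{r}^2+1)^{-1}\,\ud r$ is $\lesssim\abs{p}$ for $d=3$ and $\lesssim\log(1+\abs{p})$ for $d=2$ (only these two dimensions occur when $D\ge0$). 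Since $D=d-2\alpha-2$ and $\abs{p}^2\le\Lambda$, every contribution is at most $C(1+\Lambda^{\max(\eps,D/2)})$, and adding the far- and near-region bounds proves the pointwise estimate. The delicate point is thus the near-region term in the regime of large $\abs{p}$ and small $\mu$, where the estimate becomes genuinely dimension-dependent.
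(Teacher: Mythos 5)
Your proposal is correct, and its skeleton coincides with the paper's: $T_\ud$ is a real Fourier multiplier, so everything reduces to the pointwise bound $\abs{I_\ell(P,\hat K_{n+1})}\leq C\bigl(1+L(P,\hat K_{n+1})^{\max(\eps,D/2)}\bigr)$, obtained after writing the regularised integrand over a common denominator so that the numerator becomes $2p\cdot q-p^2-\mu$ (exactly the cancellation the paper uses). Where you diverge is in how this bound is proved. The paper splits the numerator's modulus into the pieces $2\abs p\abs q+p^2$ and $\mu=\hat P_\ell^2+\Omega(\hat K_{n+1})$, keeps different factors of the denominator for each, treats $d=2$ and $d=3$ (and $D=0$ via the substitute $q^2+1\geq q^{2(1-\eps)}$) separately, and invokes the explicit parameter integrals of Lemma~\ref{lem:parameterint} computed in the appendix. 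You instead decompose $q$-space at the scale $R_0=2\max(\abs p,\sqrt{\mu+1})$, use $\abs{p-q}\geq\abs q/2$ and power counting in the far region (where $D<1$, guaranteed by~\eqref{eq:Dbeta}, ensures convergence), and in the near region reduce to the two elementary kernels $(\abs{p-q}^2+\mu+1)^{-1}$ and $(q^2+1)^{-1}$ with a final case distinction in $\abs p$ versus $\sqrt{\mu+1}$. This buys a self-contained argument that avoids the appendix lemma and handles both dimensions and the $D=0$/$D>0$ cases uniformly, at the cost of a longer chain of regions and cases; the paper's route is shorter once Lemma~\ref{lem:parameterint} is available and is reused elsewhere (e.g.\ in Lemma~\ref{lem:DposT1}). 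One small imprecision: the logarithm from the $(q^2+1)^{-1}$ term is not quite the \emph{sole} source of the $\eps$, since for $d=2$ your annulus $\abs q\sim\abs p$ also produces $\log(1+\abs p)$; both logarithms are of course absorbed into $\Lambda^\eps$, so this does not affect the proof.
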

\begin{proof}
The integral~\eqref{eq:Dposregintergal} defining $T_\ud $ is real, so $T_\ud$ is a real Fourier multiplier and it is sufficent to prove that it maps the domain $D(T_\ud)$ to $\hilb$.
Specifying as usual to $\ell=1$ we have to show that there exists a constant $C >0$ such that the inequality
\begin{align}
\label{eq:Tdiaggeneral}
I_1(P,\hat{K}_{n+1}) \leq C \left(L(P,\hat{K}_{n+1})^{\max(\eps,D/2)} + 1 \right)
\end{align}
holds pointwise on $\R^{M d} \times \R^{ n d}$. We will use that
\begin{align*}
&
I_1(P,\hat{K}_{n+1}) 
=
\int_{\R^d} \abs{\hat v(k_{n+1})}^2  \frac{2 p_1 \cdot k_{n+1} - p_1^2  -  \left( \hat{P}_1^2 + \Omega(\hat{K}_{n+1}) \right) }{L(P-{e}_1 k_{n+1},K) (k_{n+1}^2+\omega(k_{n+1})) }  \, \ud k_{n+1} 
\end{align*}
and distinguish between $d=2$ and $d=3$. 

If $d=2$ then necessarily $\alpha=0$ and $D=0$. We denote the integration variable by $q$ instead of $k_{n+1}$ and also write $p$ for $p_1$. The absolute value of the integral $I_1$ can, for any $\eps \in (0,1)$, be bounded by
\begin{align*}
&\int_{\R^2}   \frac{2 \abs {p } \abs q + p^2  +  \left( \hat{P}_1^2 + \Omega(\hat{K}_{n+1}) \right) }{\left((p-q)^2+ \hat{P}_1^2 + \Omega(\hat{K}_{n+1}) \right) (q^2+1)}   \ud q 
\\
&
\leq
\int_{\R^2}   \frac{2 \abs {p}   (q^2+1)^{\frac{1}{2}} + p^2  }{\left((p-q)^2+ 1 \right) (q^2+1)}   \ud q  
+ \int_{\R^2}   \frac{ \hat{P}_1^2 + \Omega(\hat{K}_{n+1})  }{\left((p-q)^2+ \hat{P}_1^2 + \Omega(\hat{K}_{n+1}) \right) \abs{q}^{2(1-\eps)}}   \ud q   .
\end{align*}
The second term is bounded by some constant times $ (\hat{P}_1^2 + \Omega(\hat{K}_{n+1}))^{\eps}$. For the first term we use Lemma~\ref{lem:parameterint} in the appendix, which yields
\begin{align*}
 \int_{\R^2}   \frac{2 \abs{ p}  (q^2+1)^{\frac{1}{2}} + p^2  }{\left((p-q)^2+ 1 \right) (q^2+1)}  \, \ud q
&
\leq
3 C  (\log(1+\abs{p})+1)
\leq
\tilde C (\abs{p}^{\eps}+1),
\end{align*}
for some $\tilde{C}>0$.

Now let $d=3$ and $D>0$. The absolute value of the integral $I_1$ is bounded by
\begin{align}
%
\int_{\R^3}  \frac{2 \abs{p}  \abs q + p^2 }{\left((p-q)^2+ 1 \right)  \abs{q}^{2+2 \alpha}}  \, \ud q  + \int_{\R^3}   \frac{ \hat{P}_1^2 + \Omega(\hat{K}_{n+1})  }{\left((p-q)^2+ \hat{P}_1^2 + \Omega(\hat{K}_{n+1}) \right)  \abs{q}^{2+2 \alpha}}  \, \ud q \, .\label{eq:Ibound 3d}
\end{align}
The integrals converge because $2+2 \alpha = d- D < d$ and $\alpha>0$. The second term is easily seen to be bounded by a constant times $ (\hat{P}_1^2 + \Omega(\hat{K}_{n+1}))^{\frac{D}{2}}$. For the first term we can use Lemma~\ref{lem:parameterint} in the appendix which gives
\begin{align*}
& \int_{\R^3}  \frac{2 \abs {p}  \abs q + p^2 }{\left(({p} -q)^2+ 1 \right)  \abs{q}^{2+2 \alpha}}  \, \ud q 
\leq
\frac{2 C \abs{{p} }}{\abs{{p} }^{2 \alpha}} + \frac{ C p^2}{\abs{{p} }^{ 1 + 2 \alpha}} 
\leq
{\tilde C \abs{p }^D}
\leq
{\tilde C L(P,\hat{K}_{n+1})^{\frac{D}{2}}} \, .
\end{align*}
If $D=0$, the function $\abs{q}^{-2-2\alpha}=\abs{q}^{-d}$ is not locally integrable. We thus use the estimate $q^2+1 \geq q^{2(1-\eps)}$, for any $\eps\in (0,1)$. This yields a bound on $\abs{ I_1}$ as in Equation~\eqref{eq:Ibound 3d}, but with $\abs{q}^{-2-2\alpha}$ replaced by $\abs{q}^{-d+2\eps}$.
Applying Lemma~\ref{lem:parameterint} then gives a bound on $\abs{ I_1}$ by some constant times $L(P,\hat{K}_{n+1})^{\eps}$.
\end{proof}

\begin{lem}
\label{lem:DposT1}
Assume Condition~\ref{cond:alphabeta} holds with $D\geq 0$. Then, for any $i,\ell\in \{1,\dots, M \}$ with $i \neq \ell $, the operator $\theta_{i\ell}$ defined by~\eqref{eq:T1def} is continuous from $D(L^{D/2})$ to $\hilb$ and $\theta_{i\ell}+\theta_{\ell i}$ is symmectric on this domain.
\end{lem}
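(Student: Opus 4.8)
I would treat the two assertions separately, beginning with the (softer) symmetry. Writing $\langle\chi,\theta_{i\ell}\phi\rangle$ out in the Fourier representation via~\eqref{eq:T1def}, exchanging the order of integration, and substituting $P\mapsto P+(e_i-e_\ell)k_{n+1}$ in the $P$-integral (a measure-preserving change for each fixed $k_{n+1}$) turns $L(P-e_\ell k_{n+1},K)$ into $L(P-e_ik_{n+1},K)$ and leaves $\abs{\hat v(k_{n+1})}^2$ untouched, whence $\langle\chi,\theta_{i\ell}\phi\rangle=\langle\theta_{\ell i}\chi,\phi\rangle$ for $\chi,\phi$ in a suitable dense set of smooth compactly supported functions. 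The same identity with $i$ and $\ell$ interchanged then yields $\langle\chi,(\theta_{i\ell}+\theta_{\ell i})\phi\rangle=\langle(\theta_{i\ell}+\theta_{\ell i})\chi,\phi\rangle$, and symmetry on all of $D(L^{D/2})$ follows once boundedness is known.

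\textbf{Boundedness.} The operator $\theta_{i\ell}$ couples only the momenta $p_i$, $p_\ell$ and the integration variable $k_{n+1}$: the remaining $x$-momenta and all of $\hat K_{n+1}$ enter $L(P-e_\ell k_{n+1},K)$ only through $\mu^2:=\sum_{m\neq i,\ell}p_m^2+\Omega(\hat K_{n+1})\geq0$, and $L$ evaluated at the shifted argument $P+(e_i-e_\ell)k_{n+1}$ equals $(p_i+k_{n+1})^2+(p_\ell-k_{n+1})^2+\mu^2$. Since $D(L^{D/2})=D((L+1)^{D/2})$ with equivalent norms, it suffices to show that $\theta_{i\ell}(L+1)^{-D/2}$ is bounded on each $\hilb\uppar n$ with a bound independent of $n$ (equivalently, uniform in $\mu$ and in $p_i+p_\ell$). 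I would establish this by a Schur test with weight $w(P,\hat K_{n+1})=\big(P^2+\Omega(\hat K_{n+1})+1\big)^{-\sigma}$, for $\sigma$ chosen slightly larger than $\max(0,D/2)$. The two Schur inequalities reduce to weighted $k_{n+1}$-integrals of $\abs{\hat v(k_{n+1})}^2\, L(P-e_\ell k_{n+1},K)^{-1}\big((P+(e_i-e_\ell)k_{n+1})^2+\Omega(\hat K_{n+1})+1\big)^{-D/2}$, which I would bound exactly as in the proofs of Proposition~\ref{prop:generalboundong} and Lemma~\ref{lem:DposTdiag}: the Hardy--Littlewood rearrangement inequality reduces each such integral to the case of vanishing momentum shift, and rescaling in $k_{n+1}$ then extracts the correct power of $\mu^2+1$. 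Inspecting the behaviour at large $\abs{k_{n+1}}$ shows the ``row'' inequality holds as soon as $\sigma>0$ and the ``column'' inequality as soon as $\sigma>D/2$, so both hold for the chosen $\sigma$; the $+1$ in $(L+1)^{-D/2}$ is precisely what keeps all constants uniform near the origin. (Near $k_{n+1}=0$ and near $k_{n+1}=p_\ell$ one uses $\abs{\hat v}\leq\abs{k}^{-\alpha}$ with $\alpha<\tfrac d2$ together with $\omega\geq1$ for local integrability.)

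\textbf{Main obstacle.} The delicate feature is that the combined order of the denominator of $\theta_{i\ell}(L+1)^{-D/2}$ is exactly $1+\tfrac D2=\tfrac d2-\alpha$, the critical exponent of the Hardy--Littlewood--Sobolev bound $\int\abs{k}^{-2\alpha}\big(\mu^2+(p-k)^2\big)^{-\gamma}\,\ud k\leq C\,(\mu^2)^{d/2-\alpha-\gamma}$: with a constant Schur weight the ``row'' integral would diverge logarithmically, and the estimate closes only because the power weight $w$ supplies the missing decay while still reproducing itself, up to a constant uniform in $\mu$ and $p_i+p_\ell$, on the right-hand sides of the Schur inequalities. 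Verifying this reproduction with the needed uniformity down to $\mu=0$ forces one to split the $k_{n+1}$-integration according to whether $\abs{k_{n+1}}^2\leq\mu^2+1$, whether $\abs{k_{n+1}-p_\ell}$ is small, or whether $\abs{k_{n+1}}$ is large, and to track the constants through each region; this bookkeeping, rather than any individual inequality, is the substance of the lemma. The symmetry statement is then a one-line consequence of the change of variables above.
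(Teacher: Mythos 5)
Your symmetry argument (change of variables $P\mapsto P+(e_i-e_\ell)k_{n+1}$, which exchanges $L(P-e_\ell k_{n+1},K)$ and $L(P-e_i k_{n+1},K)$) is exactly the paper's, and is fine once boundedness is in place. The gap is in the boundedness part. First, the mechanism you invoke cannot deliver what the Schur test needs: applying Hardy--Littlewood to ``reduce to the case of vanishing momentum shift'' produces a bound on the row integral that is \emph{uniform} in $p_i,p_\ell$ and depends only on $\mu^2+1$, whereas the row inequality must be bounded by $C\,w$ evaluated at the output point, i.e.\ it must \emph{decay} in the outer momenta; rearrangement destroys precisely the decay your weight is supposed to reproduce. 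Second, and more decisively, with the pure energy weight $w=(L+1)^{-\sigma}$ the row inequality is actually false in part of the admissible range. Take $d=2$, $\alpha=0$, $\omega(k)=(1+k^2)^{\beta/2}$ with $0<\beta<2$ (allowed by Condition~\ref{cond:alphabeta}(2), e.g.\ $v=\delta$), so $D=0$ and the claim is boundedness of $\theta_{i\ell}$ itself. At an output point with $p_i=0$, $\abs{p_\ell}$ large, the kernel of \eqref{eq:T1def} in the transfer variable $k_{n+1}$ is $\approx \abs{p_\ell-k_{n+1}}^{-2}$ on the annulus $\abs{p_\ell}^{\beta/2}\lesssim\abs{p_\ell-k_{n+1}}\lesssim\abs{p_\ell}$ (only $\omega(k_{n+1})\sim\abs{p_\ell}^{\beta}$ is left as a regulator there), while $w$ at the input is $\approx\abs{p_\ell}^{-2\sigma}$ throughout that region; hence $\int K\,w\,\ud k_{n+1}\gtrsim (1-\tfrac\beta2)\abs{p_\ell}^{-2\sigma}\log\abs{p_\ell}$, which exceeds $C\,w$ at the output for every $\sigma$. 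The analogous computation for the column integral shows that using two different powers of $L+1$ on the two sides cannot help either (one side would force $\sigma_2<\sigma_1$, the other $\sigma_1<\sigma_2$). So your Schur test cannot close exactly in the two-dimensional, small-$\beta$ case that the lemma must cover; in $d=3$ your exponents are borderline-consistent, but that does not rescue the general statement.

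The paper's proof is also a weighted Cauchy--Schwarz (Schur-type) argument, but with the weight $\abs{p_\ell-k_{n+1}}^{D+\eps}$, i.e.\ adapted to the point $k_{n+1}=p_\ell$ where the kernel nearly loses integrability, rather than to the total energy. After Cauchy--Schwarz, the first factor is bounded by $C\abs{p_i}^{-(D+2\eps)}$ using only the $p_i^2$ term inside $L$ (Hardy--Littlewood with $p_i^2$ as the surviving mass, then scaling), and after the change of variables $P\to P+(e_i-e_\ell)k_{n+1}$ the remaining $k_{n+1}$-integral is bounded by $C\abs{p_\ell}^{-2\eps}$ using the $p_\ell^2$ term inside $L$; the $\eps$-excess of the weight cancels and one is left with $\abs{p_\ell}^{2D}\leq L^{D}$, with no use of the growth of $\omega$ beyond $\omega\geq1$ --- which is why small $\beta$ in $d=2$ causes no trouble there. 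If you wish to keep a Schur-test formulation, the weight must depend on the momentum transfer $p_\ell-k_{n+1}$ (equivalently on the input momentum in the $\ell$-th slot); doing so essentially reproduces the paper's argument.
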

\begin{proof}
We will prove continuity for $\theta:=\theta_{1 2}$. We multiply~\eqref{eq:T1def} by $\abs{p_2-k_{n+1}}^{D+\eps}$ and its inverse for any $\eps>0$, and use the Cauchy-Schwarz inequality to obtain
\begin{align*}
&\abs {\theta \hat \psi \uppar n}^2 
 \leq 
  \int_{\R^d} \frac{\abs{\hat v(q)}^2 \, \ud q}{L(P-{e}_2 q,\hat{K}_{n+1},q)\abs{p_2-q}^{2(D+\eps)}} 
\\
& 
\times \int_{\R^d}  \frac{\abs{\hat v(k_{n+1})}^2 \abs{\hat \psi\uppar n(P +(e_1-e_2) k_{n+1},\hat{K}_{n+1})}^2 \abs{p_2-k_{n+1}}^{2(D+\eps)}}{L(P-{e}_2 k_{n+1},K)} \, \ud k_{n+1}\, .
\end{align*}
Using the Hardy-Littlewood inequality and scaling, the integral in $q$ can be bounded by 
\begin{align*}
&\int_{\R^d}  \frac{\abs{ q}^{- 2\alpha}}{(p_1^2+q^2)\abs{q}^{2(D+ \eps)} }  \, \ud q
\leq C \abs{p_1}^{-(D+2\eps)} \, ,
\end{align*}
for $0<\eps<1/2$. Integrating in the remaining variables $(P, \hat K_{n+1})$ and performing a change of variables $P \rightarrow P+(e_1-e_2) k_{n+1}$ then gives
\begin{align*}
&\int \abs {\theta \hat \psi \uppar n (P,\hat{K}_{n+1})}^2 \, \ud \hat{K}_{n+1} \ud P
  \\
 &
\leq
C \int  \frac{\abs{ \hat v(k_{n+1})}^2 \abs{\hat \psi\uppar n(P,\hat{K}_{n+1})}^2 \abs{p_2}^{2(D+\eps)}}{L(P-{e}_1 k_{n+1},K) \abs{p_1-k_{n+1}}^{{D+2 \eps}} } \, \ud k_{n+1} \ud \hat{K}_{n+1} \ud P \, .
\end{align*}
Because $2+2\alpha+D+2 \eps = d+2\eps $ 
 the $k_{n+1}$-integral can, for $0<\eps<1$, be bounded as above by some constant times  $\abs{p_1}^{- 2\eps}$. We thus obtain
\begin{align*}
&\int \abs {\theta \hat \psi \uppar n (P,\hat{K}_{n+1}) }^2\, \ud \hat{K}_{n+1} \ud P
\leq
C   \int \abs{\hat{\psi}\uppar n(P,\hat{K}_{n+1})}^2 \abs{p_2}^{2D} \ud \hat{K}_{n+1} \ud P \, ,
\end{align*}
and this proves continuity.

To prove symmetry, we use the change of variables $Q=P+(e_i-e_\ell)k_{n+1}$ in
 \begin{align*}
  \langle &\hat \phi\uppar{n}, \theta_{i\ell}  \hat{\psi} \uppar n \rangle_{\hilb\uppar n}\\
  &= \int   \overline{\hat{\phi}\uppar{n}}(P,\hat K_{n+1}) \frac{\abs{\hat v(k_{n+1})}^2 \hat \psi\uppar n(P+({e}_i-{e}_\ell) k_{n+1},\hat{K}_{n+1})}{L(P-{e}_\ell k_{n+1},K) } \,  \ud P\ud K \\
  &=  \int   \overline{\hat{\phi}\uppar{n}}(Q+(e_\ell -e_i)k_{n+1}, \hat K_{n+1}),\hat K_{n+1}) \frac{\abs{\hat v(k_{n+1})}^2 \hat \psi\uppar n(Q,\hat{K}_{n+1})}{L(Q-{e}_i k_{n+1},K) } \,  \ud P\ud K.
   \end{align*}
%
%
Together with the bounds we have just proved, this implies that $\theta_{i \ell}^*$ extends $\theta_{\ell i}$ (defined on $D(L^{D/2})$), so the sum of the two is symmetric on this domain. 
\end{proof}
\begin{lem}
\label{lem:T2mainlem}
Assume Condition~\ref{cond:alphabeta} holds with $D \geq 0$ and let $u(s) = \frac{\beta }{2} s - \frac{D}{2}$. Then, for all $s > 0$ such that the following two conditions are satisfied
\begin{align}
\label{eq:cond1s}
u(s) &< 1
\\
\label{eq:cond2s}
0 &< u(u(s)) \, ,
\end{align}
and for all $i,\ell \in\{1,\dots, M\}$, the operator $\tau_{i \ell}$, defined in \eqref{eq:T2def}, is bounded from $ D(N^{\max(0,1-s)} L^{s-u(s)})$ to $\hilb$ and $\tau_{i\ell} + \tau_{\ell i}$ is symmetric on this domain.
\end{lem}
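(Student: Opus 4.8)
The plan is to prove the mapping property by a Schur‑type estimate carried out separately in each $n$‑boson sector, keeping careful track of the dependence on $n$ so that it matches the factor $N^{\max(0,1-s)}$ in the domain; the symmetry of $\tau_{i\ell}+\tau_{\ell i}$ will then follow from a change of variables exactly as in Lemma~\ref{lem:DposT1}. Since constants are allowed to depend on $M$, it suffices to treat one pair $(i,\ell)$. Fix $n$ and write $q=k_{n+1}$ for the internal momentum in~\eqref{eq:T2def}. I would estimate $\norm{\tau_{i\ell}\hat\phi\uppar n}_{\hilb\uppar n}$ by applying the Cauchy--Schwarz inequality twice, once in the integration variable $q$ and once in the finite sum over $j=1,\dots,n$; to this end I insert into the integrand of~\eqref{eq:T2def} a weight built from powers $\omega(q)^{\pm\sigma'}$, $\omega(k_j)^{\pm\sigma}$ and a split $L(P-e_\ell q,K)^{-1}=L^{-a}L^{-(1-a)}$ of the energy denominator, with the exponents $\sigma,\sigma',a$ to be optimised in terms of $s$. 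The two Cauchy--Schwarz steps decouple the expression into a product of two factors: a first factor depending only on $(P,\hat K_{n+1})$, in which $\hat\phi\uppar n$ is absent, and a second factor containing $\hat\phi\uppar n$.

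In the first factor the $q$‑integral is a parameter integral of exactly the type handled by the Hardy--Littlewood rearrangement inequality and by Lemma~\ref{lem:parameterint}; it converges provided the exponent of $|q|$ is admissible, which is the same kind of requirement as in the proof of Proposition~\ref{prop:generalboundong}, and it leaves behind a power of $\Omega(\hat K_{n+1})$. The sum $\sum_j\omega(k_j)^{\sigma}$ produced by the weight is controlled by the interpolation bound~\eqref{eq:sum}, $\sum_j\omega(k_j)^{\sigma}\le n^{\max(0,1-\sigma)}\Omega(K)^{\sigma}$, which, together with $L(P,K)\ge\Omega(K)\ge n$, is the source of the power of $n$ in the final estimate. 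In the second factor, after forming the full $\hilb\uppar n$‑norm and performing the change of variables $P\mapsto P-e_\ell q+e_i k_j$, the function $\hat\phi\uppar n$ becomes a function of the shifted momenta and no longer depends on $k_j$, so one integrates out $k_j$; this is a second parameter integral, now governed by the exponent $u(s)$ produced by the first step rather than by $s$ itself, and its convergence together with the sign of the residual power of $\Omega$ is what forces the condition $u(u(s))>0$ in~\eqref{eq:cond2s}, while admissibility of the first $q$‑integral forces $u(s)<1$ in~\eqref{eq:cond1s}. Collecting the powers of $\Omega$, of $L$ and of $n$, I expect the sectorwise bound $\norm{\tau_{i\ell}\hat\phi\uppar n}_{\hilb\uppar n}\le C\,n^{\max(0,1-s)}\norm{L^{s-u(s)}\hat\phi\uppar n}_{\hilb\uppar n}$, and summing the squares over $n$ yields boundedness from $D(N^{\max(0,1-s)}L^{s-u(s)})$ to $\hilb$.

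For the symmetry statement I would proceed as in Lemma~\ref{lem:DposT1}: in the sesquilinear form $\langle\hat\phi\uppar n,\tau_{i\ell}\hat\psi\uppar n\rangle$, carry out in the $j$‑th summand the substitution $Q=P-e_\ell k_{n+1}+e_i k_j$ together with the bosonic relabelling exchanging the integration variable $k_{n+1}$ with $k_j$. Since $\hat\phi\uppar n$ and $\hat\psi\uppar n$ are symmetric under permutations of the boson momenta and $L(P-e_\ell k_{n+1},K)$ is symmetric in the $k$'s, the roles of $\overline{\hat v}$ and $\hat v$ and of the indices $i$ and $\ell$ get interchanged, which identifies this form with $\langle\tau_{\ell i}\hat\phi\uppar n,\hat\psi\uppar n\rangle$. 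Combined with the boundedness just established, this shows that $\tau_{i\ell}^*$ extends $\tau_{\ell i}$ on $D(N^{\max(0,1-s)}L^{s-u(s)})$, so $\tau_{i\ell}+\tau_{\ell i}$ is symmetric there.

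The main obstacle will be the exponent bookkeeping: the weight $(\sigma,\sigma',a)$ has to be chosen so that \emph{both} parameter integrals over the internal momenta converge, so that the leftover powers of $\Omega(\hat K_{n+1})$ (and of any auxiliary energies appearing after the change of variables) are nonpositive and can be absorbed into the $L$‑weights on either side, and so that the power of $n$ extracted from~\eqref{eq:sum} is exactly $\max(0,1-s)$ and no worse. This is genuinely harder than the analysis of $\theta_{i\ell}$ in Lemma~\ref{lem:DposT1}, where the internal sum over the $n$ bosons is absent and the product $\overline{\hat v(k_{n+1})}\hat v(k_j)$ collapses to $\abs{\hat v(k_{n+1})}^2$; here the two form factors sit on different momenta, forcing an asymmetric Cauchy--Schwarz, and it is precisely the need to run the one‑dimensional estimate of Proposition~\ref{prop:generalboundong} twice in succession — once producing the exponent $u(s)$ and then again with $u(s)$ in place of $s$ — that produces the composed condition on $u(u(s))$. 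Finally one should check that the admissible window $\tfrac{D(\beta+2)}{\beta^2}<s<\tfrac{2+D}{\beta}$ (which is exactly $\{s>0: u(s)<1,\ u(u(s))>0\}$) is nonempty; this holds because $D<\beta$, which is guaranteed by~\eqref{eq:Dbeta}.
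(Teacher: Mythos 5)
Your proposal follows essentially the same route as the paper's proof: insert $\omega$-weights and split the energy denominator, apply Cauchy--Schwarz jointly in $k_{n+1}$ and the sum over $j$, control the resulting parameter integrals via Hardy--Littlewood (with $u(s)<1$), extract $n^{\max(0,1-s)}$ from the interpolation bound~\eqref{eq:sum}, and then run the same estimate a second time after the change of variables $P\mapsto P-e_\ell k_{n+1}+e_i k_j$ with relabelling $k_j\leftrightarrow k_{n+1}$, which is exactly where $u(u(s))>0$ enters; the symmetry argument via the same substitution mirrors Lemma~\ref{lem:DposT1}. The only cosmetic difference is that you keep the weight exponents $(\sigma,\sigma',a)$ general, whereas the paper fixes them to $s$ and $1/2$ from the start and splits the $\mu$-sum into the $\mu=j$ and $\mu\neq j$ parts explicitly, comparing $s-u(u(s))\leq 2(s-u(s))$ at the end.
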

\begin{proof}
We start by proving the bound
\begin{equation*}
 \norm{\tau_{i \ell} \hat\psi\uppar{n}}_{\hilb \uppar n} \leq C n^{\max(0,1-s)} \norm{L^{s-u(s)} \psi\uppar{n}}_{\hilb\uppar{n}}
\end{equation*}
 for any fixed $i,\ell$ and $n\geq 1$ (note that $\tau_{i\ell}=0$ for $n=0$). 
 Note that, because $D\geq 0$ and $\beta \leq 2$, it holds that $u(s) \leq s$ and therefore the conditions~\eqref{eq:cond1s} and~\eqref{eq:cond2s} already imply that
\begin{align}
\label{eq:cond3s}
u(s),u(u(s)) \in (0,1) \, .
\end{align}
Now we denote $\tau=\tau_{i \ell}$ and write
\begin{align*}
&\tau \hat \psi \uppar n 
= 
\begin{aligned}[t]
\sum_{j=1}^{n} \int_{\R^d}& \omega(k_{n+1})^{\frac{s}{2}} \frac{ \hat v(k_{j}) \hat \psi\uppar n(P-{e}_\ell k_{n+1}+{e}_i k_j,\hat{K}_j)}{L(P-{e}_\ell k_{n+1},K)^{\frac{1}{2}} \omega(k_j)^{\frac{s}{2}}} 
\\
& \times \omega(k_j)^{\frac{s}{2}}  \frac{\overline{ \hat v(k_{n+1})}}{L(P-{e}_\ell k_{n+1},K)^{\frac{1}{2}} \omega(k_{n+1})^{\frac{s}{2}}}\, \ud k_{n+1} \, .
\end{aligned} 
\end{align*}
 Applying the Cauchy-Schwarz inequality on $\Lz(\R^d \times \lbrace 1, \dots, n \rbrace)$ and using the assumptions on $\hat v$ and $\omega$, we obtain
\begin{align*}
\abs{\tau \hat \psi \uppar n }^2 
&
\leq
\begin{aligned}[t]
&\sum_{j=1}^{n} \int_{\R^d} \omega(k_{n+1})^s \frac{ \abs{ \hat v(k_{j})}^2  \abs{\hat \psi\uppar n(P-{e}_\ell k_{n+1}+{e}_i k_j,\hat{K}_j)}^2}{L(P-{e}_\ell k_{n+1},K) \omega(k_j)^s} \, \ud k_{n+1} 
\\
&
 \times \sum_{\mu=1}^{n} \omega(k_\mu)^s  \int_{\R^d}  \frac{1}{((p_\ell-q)^2 + \Omega(\hat K_{n+1}))\abs{q}^{\beta s+2\alpha}} \, \ud q \,.
\end{aligned} 
\end{align*}
Since $u(s) \in (0,1)$, the integral in the second line is bounded by a constant times $\Omega(\hat{K}_{n+1})^{-u(s)}$.
In order to deal with the sum over $\mu=1,\dots, n$, we split the term $\mu=j$ from the rest and use~\eqref{eq:sum}. This gives 
\begin{align*}
\sum_{\mu=1}^{n} \omega(k_\mu)^s  \Omega(\hat{K}_{n+1})^{-u(s)}
&\leq
\omega(k_{j})^{s-u(s)}
+ n^{\max(0,1-s)} \Omega(\hat{K}_{n+1,j})^s \Omega(\hat{K}_{n+1})^{-u(s)}\\
&\leq \omega(k_{j})^{s-u(s)}
+ n^{\max(0,1-s)} \Omega(\hat{K}_{n+1,j})^{s-u(s)}\\
&\leq \omega(k_{j})^{s-u(s)}
+ n^{\max(0,1-s)} \Omega(\hat{K}_{j})^{s-u(s)}
\,,
 \end{align*}
 where we have  also used that $s\geq u(s)> 0$.
Consequently, we have a bound of the form
\begin{align*}
\abs{\tau \hat \psi \uppar n }^2 \leq  C \abs{\tau \uppar d \hat \psi \uppar n }^2 +C \abs{\tau \uppar {od} \hat \psi \uppar n }^2\,,  
\end{align*}
 with
\begin{align}\label{eq:tau_d}
&\abs{\tau \uppar d \hat \psi \uppar n }^2 
:= \sum_{j=1}^{n} \int_{\R^d}  \frac{\omega(k_{n+1})^s \abs{ \hat v(k_{j})}^2  \abs{\hat \psi\uppar n(P-{e}_\ell k_{n+1}+{e}_i k_j,\hat{K}_j)}^2}{\omega(k_j)^{u(s)}L(P-{e}_\ell k_{n+1},K) } \, \ud k_{n+1} 
\end{align}
and
\begin{align}\label{eq:tau_od}
&\abs{\tau \uppar {od} \hat\psi \uppar n }^2
:=  n^{\max(0,1-s)} 
\sum_{j=1}^{n} \int_{\R^d}
\begin{aligned}[t]
 &\frac{ \omega(k_{n+1})^s   \abs{\hat \psi\uppar n(P-{e}_\ell k_{n+1}+{e}_i k_j,\hat{K}_j)}^2}{L(P-{e}_\ell k_{n+1},K) }\\
 &\times \frac{\abs{ \hat v(k_{j})}^2\Omega(\hat{K}_{j})^{s-u(s)}}{\omega(k_j)^s}  \, \ud k_{n+1}\,.
\end{aligned}
\end{align}
To treat the term~\eqref{eq:tau_d}, we integrate in $(P,\hat K_{n+1})$, perform a change of variables $P \rightarrow P-{e}_\ell k_{n+1}+{e}_i k_j$, and then rename the variables $k_j \leftrightarrow k_{n+1}$. This yields
\begin{align*}
\int &\abs{\tau \uppar d \hat\psi \uppar n (P,\hat{K}_{n+1})}^2 \,  \ud P \ud \hat {K}_{n+1} \\
&
= \sum_{j=1}^{n} \int   \frac{ \omega(k_{n+1})^s\abs{ \hat v(k_{j})}^2  \abs{\hat  \psi\uppar n(P,\hat{K}_j)}^2}{\omega(k_j)^{u(s)}L(P-{e}_i k_{j},K) } \,  \ud P \ud {K} 
\\
&
= \sum_{j=1}^{n}  \int  \frac{ \omega(k_{j})^s\abs{ \hat v(k_{n+1})}^2  \abs{\hat  \psi\uppar n(P,\hat{K}_{n+1})}^2}{\omega(k_{n+1})^{u(s)}L(P-{e}_i k_{n+1},K) } \,  \ud P \ud K \, ,
\end{align*}
where, in the last step, we have used the permutation symmetry.
The $k_{n+1}$-integral can be estimated, using the assumptions on $\hat{v}$ and $\omega$ and the fact that $u(u(s)) \in (0,1)$, by
\begin{align*}
\int_{\R^d} \frac{\abs{ \hat v(k_{n+1})}^2  }{L(P-{e}_i k_{n+1},K) \omega(k_{n+1})^{u(s)}} \,  \ud k_{n+1}  \leq C \Omega(\hat{K}_{n+1})^{-u(u(s))} \, .
\end{align*}
Therefore, using again the bound~\eqref{eq:sum}, we conclude
\begin{align*}
\int &\abs{\tau \uppar d \hat\psi \uppar n (P,\hat{K}_{n+1})}^2 \,  \ud P \ud \hat {K}_{n+1} 
\\
&
\leq C \sum_{j=1}^{n}  \int  \omega(k_{j})^s \abs{\hat  \psi\uppar n(P,\hat{K}_{n+1})}^2 \Omega(\hat{K}_{n+1})^{-u(u(s))}  \,  \ud P \ud \hat{K}_{n+1} 
\\
&
\leq C n^{\max(0,1-s)} \int  \abs{\hat  \psi\uppar n(P,\hat{K}_{n+1})}^2 \Omega(\hat{K}_{n+1})^{s-u(u(s))}  \,  \ud P \ud \hat{K}_{n+1} \, .
\end{align*}
We proceed similarly with the second term~\eqref{eq:tau_od} and obtain
\begin{align*}
 &
\abs{\tau \uppar {od} \hat\psi \uppar n }^2
\leq
C n^{2 \max(0,1-s)}
\int
\abs{\hat  \psi\uppar n(P,\hat{K}_{n+1})}^2 \Omega(\hat{K}_{n+1})^{2 (s-u(s))}  \,  \ud P \ud \hat{K}_{n+1}  \, .
\end{align*}
This proves the desired bound, because $u(s)\leq s$ (as $D\geq 0$ and $\beta \leq 2$) and thus
\begin{align*}
s-u(u(s)) 
\leq s-u(s)+ u(s-u(s)) \leq  2(s-u(s)) \,.
\end{align*}

Symmetry follows from this as in Lemma~\ref{lem:DposT1}. In this case, the change of variables one makes is $P\mapsto P-e_\ell k_{n+1} + e_i k_j$. Additionally, one also uses the symmetry of functions in $\hilb\uppar{n}$, while renaming $k_j \leftrightarrow k_{n+1}$.
\end{proof}
%
%
 
\begin{rem}\label{rem:T}
An operator very similar to the operator $T$ plays an important role in the context of point interactions  of nonrelativistic particles, where $v=\delta$ and $\omega(k)=1+k^2$. This operator is known as the Ter-Martyrosyan--Skornyakov operator.

In two dimensions, this was studied in~\cite[Lem.3.1]{DeFiTe1994}, where estimates similar to ours (but with a linear growth in $n$) were proved. These bounds were refined by Griesemer and Linden~\cite{GrLi17}. 

The three-dimensional case has received more attention, see e.g.~\cite{DeFiTe1994, Co_etal15, MoSe17, MoSe18}. Recently, Moser and Seiringer~\cite{MoSe17} proved, in particular, an $n$-independent bound on $T_\mathrm{od}$ for this model, as an operator from $H^{1/2}(\R^{3+3n})$ to $H^{-1/2}(\R^{3+3n})$ (with $M=1$). Our proof of Lemma~\ref{lem:T2mainlem} is inspired by their technique.
\end{rem}
The lemmas above do provide bounds on $T$ for the case $d=3$, $v=\delta$, $\omega=k^2+1$ (for which $D=1$), as an operator on $D(L^{1/2})$. In particular, an $n$-independent bound on $T_\mathrm{od}$ on $H^{1}(\R^{3(M+n)})$ is obtained from Lemma~\ref{lem:T2mainlem} by choosing $s=1+\eps$. However, this model is not known to be renormalisable by Nelson's method and it does not satisfy the assumptions of Theorem~\ref{thm:main}. The reason is that, since $G$ does not map into $D(L^{1/2})$, we do not have $D(T)\subset G D(L)$ and  $D(H)\subset D(A)$. See~\cite{La18} for a modification of our method that works for this model.

\subsection{Proof of Theorem~\ref{thm:main} for $D \geq 0$}\label{sect:proof renorm}

We are now ready to prove Theorem~\ref{thm:main} under the assumptions of this section (Condition~\ref{cond:alphabeta},(2)).
As in the case of form perturbations treated in Section~\ref{sect:form}, we rewrite the Hamiltonian $H=L_0^*+ g A$ as (cf. Equation~\eqref{eq:H rewrite v2})
\begin{align*}
H 
&
= (1-G)^* L (1-G) + T \,. 
\end{align*}
From Lemma~\ref{lem:dressed} we already know that $H_0 := (1-G)^* L (1-G)$ is self-adjoint on $D(H_0)=D(H)=(1-G)^{-1}D(L)$. 
It is thus sufficient to prove that $T$ is symmetric and infinitesimally $H_0$-bounded on this domain. We will do this, distinguishing two cases.

\paragraph{The case $D=0$ and $\beta=2$.} 

In this case, Proposition~\ref{prop:Tmainprop} states that $T$ is symmetric on the domain $D(T)=D(L^\eps)$, for any $\eps>0$. 
Writing any $\psi\in D(H)$ as $(1-G)\psi + G\psi$, the first summand is an element of $D(L)$, and the second is in $D(L^\eps)$ by Corollary~\ref{lem:gonfockspace}.
We thus have $D(H)\subset D(L^\eps)=D(T)$ and $T$ is symmetric on $D(H)$.

To prove the relative bound on $T$, we decompose its action on $D(H)$ as $T=T(1-G)+TG$.
Because $G$ maps $\hilb$ to the domain of $T$, the operator $TG$ is bounded  on $\hilb$. To prove that $T(1-G)$ is relatively bounded by $H_0$ we simply use Young's inequality as in Equation~\eqref{eq:relboundifpowerbounded}.

\paragraph{The general case.} 
We will now cover the remaining cases, including the Nelson model. 
Given that $D$ and $\beta$ are within the bounds defined by Condition~\ref{cond:alphabeta},(2) the condition that either $\beta<2$ or $D>0$ is equivalent to $\beta-2<D$. We also recall from Equation~\eqref{eq:Dbeta} that Condition~\ref{cond:alphabeta},(2) implies
\begin{equation*}
 0\leq D< \frac{2\beta^2}{\beta^2+8}\leq \frac \beta 2
\end{equation*}
for the case at hand.

We will now use the flexibility of Proposition~\ref{prop:Tmainprop} that gives a family of domains on which $T$ is symmetric, by choosing a parameter $s(\beta, D)$ such that this domain is contained in $D(H)$.

\begin{lem}\label{lem:D_s}
 For any $s>0$ let $D_s(T)=D((N+1)^{\max(0,1-s)}L^{s-u(s)})$, with $u(s)=\tfrac\beta2 s - \tfrac D2$.
 If Condition~\ref{cond:alphabeta} is satisfied with $D\geq 0$, there exists $s=s(\beta,D)$, satisfying the conditions of Proposition~\ref{prop:Tmainprop}, and numbers $\delta_{1}(\beta,D),\delta_{2}(\beta,D)\in [0,1)$ such that
 \begin{itemize}
  \item $D(L^{\delta_1})\subset D_s(T)$, and
  \item $G$ is a continuous operator from $D(N^{\delta_2})$ to $D_s(T)$.
 \end{itemize}
\end{lem}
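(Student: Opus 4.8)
The plan is to choose the parameter $s$ optimally and then verify the two inclusions by elementary interpolation arguments.

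\textbf{Choice of $s$.} First I would record the constraints on $s$ coming from Proposition~\ref{prop:Tmainprop}: we need $u(s)<1$ and $u(u(s))>0$, i.e. $\tfrac\beta2 s - \tfrac D2 < 1$ and $\tfrac\beta2\bigl(\tfrac\beta2 s - \tfrac D2\bigr) - \tfrac D2 > 0$. The latter reads $s > \tfrac{2D(\beta+2)}{\beta^3}$ — wait, more carefully $\tfrac{\beta^2}4 s > \tfrac D2 + \tfrac{\beta D}4$, so $s > \tfrac{2D}{\beta^2} + \tfrac{D}{\beta}$. I would then look for $s$ in the resulting nonempty interval (nonemptiness being guaranteed by~\eqref{eq:Dbeta}, which is exactly the condition $0\leq D<\tfrac{2\beta^2}{\beta^2+8}$) for which the exponents $s-u(s)=\tfrac12(2-\beta)s+\tfrac D2$ and $\max(0,1-s)$ are both strictly less than $1$. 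Since $s-u(s)$ is increasing in $s$ and equals $\tfrac D2<1$ at $s=0$, and since we want $s-u(s)<1$, there is room; the natural choice is $s$ slightly larger than $\tfrac{2D}{\beta^2}+\tfrac D\beta$ if that is positive, or a small positive $s$ if $D=0$, adjusted so that $s<1$ as well (which keeps $\max(0,1-s)=1-s<1$).

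\textbf{The two inclusions.} For the first bullet, $D(L^{\delta_1})\subset D_s(T)=D((N+1)^{1-s}L^{s-u(s)})$: since $N\leq L$ (used repeatedly above), we have $(N+1)^{1-s}L^{s-u(s)}\leq c\, L^{(1-s)+(s-u(s))}=c\,L^{1-u(s)}$ as quadratic forms, so it suffices to take $\delta_1=1-u(s)$, which lies in $[0,1)$ precisely because $u(s)\in(0,1)$ by~\eqref{eq:cond3s}. For the second bullet, I would invoke Proposition~\ref{prop:generalboundong}: choosing there the same $s$, it gives $\norm{L^\eta G\psi}_{\hilb\uppar{n+1}}\leq C(1+n^{(1-s)/2})\norm{\psi\uppar n}$ for any $\eta<\tfrac{1+u(s)-s}2$. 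Dividing by $(1+n^{(1-s)/2})\leq c\,N^{(1-s)/2}$ (acting on the $n$-boson sector, with the shift $n\to n+1$ absorbed into the constant) shows $G: D(N^{(1-s)/2})\to D(L^\eta)$ continuously, hence into $D_s(T)$ once $\eta\geq 1-u(s)$... but that cannot hold since $\eta<\tfrac{1+u(s)-s}2<1-u(s)$ when $u(s)>s-1$, so instead I would feed the full statement of Proposition~\ref{prop:generalboundong} directly: it already controls $\norm{L^{s-u(s)}G\psi}_{\hilb\uppar{n+1}}$ up to the factor $n^{(1-s)/2}$ provided $s-u(s)<\tfrac{1+u(s)-s}2$, equivalently $3(s-u(s))<1+2(u(s)-s)+ \dots$; more cleanly, one applies the proposition with exponent $\eta=s-u(s)$ (permissible when $s-u(s)<\tfrac{1+u(s)-s}2$, i.e. $3s<1+3u(s)$, which holds in our parameter range) and then absorbs $1+n^{(1-s)/2}$ into $N^{\delta_2}$ with $\delta_2=(1-s)/2$ if $s<1$, or $\delta_2=0$ if $s\geq1$; in either case $\delta_2\in[0,1)$.

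\textbf{Main obstacle.} The delicate point is purely arithmetic: one must check that the single parameter $s$ can be chosen to satisfy \emph{simultaneously} the constraints of Proposition~\ref{prop:Tmainprop} ($u(s)<1$, $u(u(s))>0$), the inequality $s-u(s)<\tfrac{1+u(s)-s}2$ needed to apply Proposition~\ref{prop:generalboundong} with the right exponent, and $s\leq 1$ (or else accept $\delta_2=0$). Translating all of these into explicit inequalities in $(\beta,D)$ and confronting them with the hypothesis $0\leq D<\tfrac{2\beta^2}{\beta^2+8}$ from~\eqref{eq:Dbeta} is where the real work lies; I expect that the binding constraint is the lower bound $s>\tfrac{2D}{\beta^2}+\tfrac D\beta$ from $u(u(s))>0$ combined with the upper requirement from applying Proposition~\ref{prop:generalboundong}, and that~\eqref{eq:Dbeta} is exactly what makes the interval nonempty. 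Once $s$ is pinned down, setting $\delta_1=1-u(s)$ and $\delta_2=\max(0,(1-s)/2)$ and quoting Propositions~\ref{prop:generalboundong} and~\ref{prop:Tmainprop} finishes the proof.
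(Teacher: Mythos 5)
There is a genuine gap, and it sits exactly where you located the ``main obstacle'': your claim that the compatibility condition $3s<1+3u(s)$ (equivalently $s-u(s)<\tfrac{1+u(s)-s}{2}$, needed to apply Proposition~\ref{prop:generalboundong} with exponent $\eta=s-u(s)$ and the \emph{same} parameter $s$) ``holds in our parameter range'' is false on part of the range allowed by Condition~\ref{cond:alphabeta},(2). Take $d=3$, $\beta=1$ and $D\in[\tfrac16,\tfrac29)$ (i.e.\ $\alpha\in(\tfrac7{18},\tfrac5{12}]$), which is admissible by~\eqref{eq:Dbeta}. The condition $u(u(s))>0$ from Proposition~\ref{prop:Tmainprop} forces $s>\tfrac{D(2+\beta)}{\beta^2}=3D\geq\tfrac12$, while $3(s-u(s))<1$ reads $\tfrac{s}{2}+\tfrac{D}{2}<\tfrac13$, i.e.\ $s<\tfrac23-D\leq\tfrac12$; the two are incompatible, so no single $s$ works. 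This is precisely why the paper decouples the two roles of the parameter: it applies Proposition~\ref{prop:generalboundong} with a \emph{second} parameter $\sigma$ (possibly $\sigma=0$), keeping only the constraint $s-u(s)<\tfrac{1+u(\sigma)-\sigma}{2}$, at the price $\delta_2=\tfrac12\max(0,1-\sigma)+\max(0,1-s)$, and then shows by the explicit choice $(s_\eps,\sigma_\eps)$ built from $S_1=\tfrac{2+D}{\beta}$, $S_2=\tfrac{1-\frac32 D}{2-\beta}$ (with the case analysis and the bound $S_2>\tfrac12$) that all constraints can be met simultaneously with $\delta_2<1$. In the problematic regime above, the paper's choice is $\sigma=0$, $s=S_2-\eps$, which your one-parameter scheme cannot reproduce.

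A secondary error compounds this: your $\delta_2=\max(0,\tfrac{1-s}{2})$ only accounts for the factor $1+n^{\max(0,1-s)/2}$ from Proposition~\ref{prop:generalboundong}, but the target space is $D_s(T)=D((N+1)^{\max(0,1-s)}L^{s-u(s)})$, so the weight $(N+1)^{\max(0,1-s)}$ contributes an additional $\max(0,1-s)$; with a single parameter the correct exponent is $\delta_2=\tfrac32\max(0,1-s)$, which adds the further requirement $s>\tfrac13$ and in any case only shows continuity into $D(L^{s-u(s)})$, not into $D_s(T)$, as you wrote it. (Your first bullet, $\delta_1=1-u(s)$ via $N\leq L$ for $s\leq1$, is fine and matches the paper.) To repair the proof you must reintroduce the auxiliary parameter $\sigma$ in the application of Proposition~\ref{prop:generalboundong} and carry out the explicit two-parameter arithmetic, which is the actual content of the paper's argument.
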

\begin{proof}
 For $\beta=2$, $D=0$ this was already proved above, so we may restrict to $\beta-2<D$. We will find $s$, depending on $\beta$ and $D$, such that the second statement holds. The first claim is then immediate, because
 \begin{equation*}
  (N+1)^{\max(0,1-s)}L^{s-u(s)} \leq \left\lbrace
  \begin{aligned}
  &(L+1)^{1-u(s)} \qquad &s\leq 1\\
  &L^{s-u(s)} \qquad &s> 1,
 \end{aligned}\right.
 \end{equation*}
and $u(s)>0$ (by the hypothesis $u(u(s))>0$ of Proposition~\ref{prop:Tmainprop}), as well as $s-u(s)<1/2$ (this follows from Equation~\eqref{eq:condlistnoreg} below since $\sigma-u(\sigma)>0$).

To prove the second claim, recall that, by Proposition~\ref{prop:generalboundong}, $G$ maps $\hilb\uppar{n}$ to $D(L^\eta)\cap \hilb\uppar{n+1}$, for an appropriate $\eta>0$ and any $n\in \N$. For $G$ to map into $D_s(T)$, we need to apply this with $\eta=s-u(s)$. If the hypothesis of Proposition~\ref{prop:generalboundong} are satisfied for some $\sigma\geq 0$, we then obtain the bound
\begin{equation*}
 \norm{G\psi}_{D_s(T)} \leq C \norm{(N+1)^{\frac{\max(0,1-\sigma)}2+\max(0,1-s)} \psi}_\hilb.
\end{equation*}


We will now prove the claim by showing that there is a possible choice of $(s,\sigma) \in (0, \infty) \times [0, \infty)$, satisfying the conditions of Proposition~\ref{prop:Tmainprop}, respectively Proposition~\ref{prop:generalboundong}, such that $\delta_2= \frac12{\max(0,1-\sigma)}+\max(0,1-s)$ is less than one. 

The parameter $\sigma$ needs to satisfy 
the hypothesis of Proposition~\ref{prop:generalboundong} with  $\eta=s-u(s)$:
 \begin{align}
  \label{eq:condlistsigma}
&u(\sigma)<1\,, \\
 \label{eq:condlistnoreg}
&s-u(s) + \frac{\sigma-u(\sigma)-1}{2}  < 0 \,.
 \end{align}
For $s$, the hypothesis of Proposition~\ref{prop:Tmainprop} have to hold:
\begin{align}
\label{eq:condlistus1}
&u(s)<1 \,,\\
\label{eq:condlistus2}
&u(u(s))>0\,.
 \end{align}
 
Set for $\beta<2$
\begin{align*}
S_1:= \frac{2 + D}{\beta} \,,  \qquad S_2 := \frac{1-\frac{3}{2}D}{2-\beta} \,,
\end{align*}
and $S_1=1+D/2$, $S_2=\infty$ for $\beta=2$. Note that $u(S_1)=1$ and $S_1 > 1$, because $\beta < D+2$. Furthermore, using that $D<\frac{2\beta^2}{\beta^2+8}$ and $0<\beta\leq 2$, we also have that
\begin{align}
\label{eq:boundonS2}
S_2=\frac{1-\frac{3}{2}D}{2-\beta} > \frac{1-\frac{3 \beta^2}{\beta^2+ 8}}{2-\beta} = \frac{\frac{\beta^2+ 8-3 \beta^2}{\beta^2+ 8}}{2-\beta} = 2 \frac{\frac{ 4- \beta^2}{\beta^2+ 8}}{2-\beta} = 2 \frac{ 2+ \beta}{\beta^2+ 8} > \frac{1}{2} \, .
\end{align}

We now define a family of pairs $(s_\eps ,  \sigma_\eps  )$ such that they fulfil the conditions \eqref{eq:condlistnoreg} -- \eqref{eq:condlistus2} as long as $\eps$ is small enough. For any $\eps>0$, let
\begin{equation*}
 ( s_\eps ,  \sigma_\eps  ):=
 \bigg(\min\lbrace S_1,S_2\rbrace-{\eps} \, , \ \max\left[0,\min\lbrace2(S_2- S_1),S_1\rbrace-2 \eps\right] \bigg).
\end{equation*}
For $\eps$ small enough, we can determine $( s_\eps ,  \sigma_\eps  )$ in all possible cases
\begin{align}
( s_\eps ,  \sigma_\eps  )
&
 = \begin{cases} \left(S_1-\eps,\min\lbrace2(S_2- S_1),S_1\rbrace-2 \eps\right) & S_1 < S_2 \\
 \left(S_2-\eps, 0\right) &  S_1 \geq S_2 \end{cases}
  \notag \\
 &
 = \begin{cases} \left(S_1-\eps,S_1-2 \eps\right) & \tfrac32 S_1 \leq S_2 \leq \infty
 \\ 
 \left(S_1-\eps,2(S_2- S_1)-2 \eps\right) & S_1 < S_2< \tfrac32 S_1 \\
 \left(S_2-\eps, 0\right) &  \tfrac12 < S_2\leq S_1\,.
 \end{cases} \label{eq:s cases}
\end{align}
In the last step we have used \eqref{eq:boundonS2}. Observe that $s_\eps$ and $\sigma_\eps$ are always finite, and ${s}_\eps > 0$, ${\sigma}_\eps\geq0$ if $\eps$ is small enough. 

It is clear from the definition that we have ${\sigma}_\eps < S_1$. Since $u$ is increasing for $\beta >0$ and $u(S_1)=1$, we conclude that $u({\sigma}_\eps) < u(S_1) = 1$, and~\eqref{eq:condlistsigma} holds. As also $s_\eps< S_1$, this equally shows that~\eqref{eq:condlistus1} is fulfilled.

To check~\eqref{eq:condlistus2}, observe that, because $\beta > D$,
\begin{align*}
u(u(S_1)) = u(1) = \frac{\beta}{2}- \frac{D}{2} > 0 \, .
\end{align*}
This shows~\eqref{eq:condlistus2} if $S_1\leq S_2$ and $\eps$ is small enough. If $S_2< S_1$, then necessarily $\beta<2$ and, using the hypothesis $D<\frac{2\beta^2}{\beta^2+8}$, 
we see that
\begin{align*}
u(u(S_2)) 
&
= \frac{\beta^2}{4} \frac{1-\tfrac32 D}{2-\beta} -   \frac{(2+{\beta})D}{4}
> \frac{\beta^2}{4(2-\beta)} \bigg(1- \frac{3\beta^2}{\beta^2+8} - \frac{2(4-\beta^2)}{\beta^2+8}\bigg)
=0\,  .
\end{align*}
This proves that~\eqref{eq:condlistus2} holds for any sufficiently small $\eps$.

The last condition to prove is~\eqref{eq:condlistnoreg}. By computing $2s_\eps + \sigma_\eps$ in the different cases of Equation~\eqref{eq:s cases}, we find
\begin{align*}
2  s_\eps +  \sigma_\eps =  
\begin{cases}  3 S_1 - 4 \eps   &\tfrac32 S_1 \leq S_2 \leq \infty 
\\
2 S_2 - 4 \eps  & S_1 < S_2< \tfrac32 S_1
\\
 2 S_2-2 \eps  &  \tfrac12 < S_2\leq S_1
 \, . 
 \end{cases}
\end{align*}
From this we see that $2  s_\eps +  \sigma_\eps<2S_2$, and thus
\begin{align*}
s_\eps-u(s_\eps)-\frac{1}{2}\left(1-\sigma_\eps+u(\sigma_\eps) \right)
&
= 
\frac{1}{2} \left(1-\frac{\beta}{2}\right) (2 s_\eps +\sigma_\eps) +\frac{3 D}{4} - \frac{1}{2} 
\nonumber \\
&
<
\frac{1}{2} \left(1-\frac{3}{2} D\right) + \frac{3}{4} D - \frac{1}{2} = 0 \, ,
\end{align*} 
which proves~\eqref{eq:condlistnoreg}.

It remains to compute $\delta_2=\max(0,1-s_\eps) + \frac{1}{2} \max(0,1- \sigma_\eps) $ 
and see that $\delta_2<1$.
Since, for $\eps$ small enough, $S_1-\eps>1$, we find for the different cases of Equation~\eqref{eq:s cases}
\begin{align*}
\delta_2
&
= 
\begin{cases}
0 & \tfrac32 S_1 \leq S_2 \leq \infty
\\ 
\tfrac12 \max(0,1-2(S_2- S_1)+2\eps) & S_1<S_2<\tfrac32 S_1
\\
\max(0,1-S_2+\eps)+ \tfrac12 &  \tfrac12<S_2\leq S_1\,.
\end{cases}
\end{align*}
In the first case, we are finished. In the second case, $S_2-S_1>0$ and choosing $\eps$ smaller than this quantity proves the claim. For the last one, it is sufficient to choose $\eps<S_2-\tfrac12$, which is positive by~\eqref{eq:boundonS2}. This completes the proof.
\end{proof}

This lemma proves that $D(H)$ in $D_s(T)$, because $\psi=(1-G)\psi +G\psi$, with both of these terms in $D_s(T)$ since $D(H)\subset D(N)$ by Lemma~\ref{lem:Dposnisleftinv}. Since $\delta_1, \delta_2<1$, the lemma also implies that $(T, D_s(T))$ is infinitesimally $H_0$-bounded, because $N$ is $H_0$-bounded by Equation~\eqref{eq:NisrelboundedbyH0}.
We have thus proven that $H$ is self-adjoint and bounded from below under the assumptions of Theorem~\ref{thm:main}. 

The expression~\eqref{eq:H create}, involving the creation operators, for $H$ as an operator from $D(H)$ to the dual of $D(L)$ was already derived in Equation~\eqref{eq:L+a^*}. Note that $A\psi \in \mathscr{H}$ for $\psi\in D(H)$, since $T$ maps $(1-G)^{-1}D(L)$ to $\hilb$, as we have just shown.

\subsection{Proof of Theorem~\ref{thm:renorm}}\label{sect:proof limit}

We will now prove that the operator $H$, whose self-adjointness was proved in the previous section, is equal to an operator $H_\infty$ constructed by renormalisation.

Let us recall the definition of $H_\infty$. Let,  for $\Lambda>0$, $v_\Lambda$ be the interaction defined by $\hat{v}_\Lambda(k)=\chi_\Lambda(k) \hat v(k)$, where $\chi_\Lambda$ is the characteristic function of a ball with radius $\Lambda$. Then let
\begin{align*}
H_\Lambda 
= L + g \sum_{i=1}^M a(v_\Lambda(x_i-y)) + a^*(v_\Lambda(x_i-y))  \, .
\end{align*}
Since $v_\Lambda \in L^2(\R^d)$, this operator is self-adjoint on the domain $D(H_\Lambda)=D(L)$. 
In order to consider the limit of $H_\Lambda$ as $\Lambda\to \infty$ it is necessary to modify it by adding  
\begin{align*}
E_\Lambda:=  {g^2 M} \int_{\R^d} \frac{ \abs{\hat v_\Lambda(k)}^2 }{k^2+\omega(k) }  \, \ud k \,.
\end{align*}  
 Note that, since we are assuming that the second case of Condition~\ref{cond:alphabeta} holds, the numbers $E_\Lambda$ diverge as $\Lambda \to \infty$.
 
 It is known that, under appropriate assumptions on $\hat v$ and $\omega$, the limit as $\Lambda \to \infty$ of $H_\Lambda + E_\Lambda$ exists (see~\cite[Thm 3.3]{GrWu17}).
 
 \begin{thm*}[\cite{nelson1964, GrWu17}]
  Let Condition~\ref{cond:alphabeta} be satisfied with $D\geq 0$. Then $H_\Lambda +E_\Lambda$ converges in norm resolvent sense as $\Lambda\to \infty$ to an operator $(H_\infty,D(H_\infty))$ that is self-adjoint.
 \end{thm*}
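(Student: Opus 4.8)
The statement is established in \cite[Thm.~3.3]{GrWu17} (stated there for $M=1$; the general case is identical), which in turn rests on Nelson's renormalisation argument \cite{nelson1964}; here I only sketch the strategy one would carry out. The proof proceeds by a dressing (Gross) transformation. Fix an auxiliary cutoff $K>0$ and split the regularised form factor as $\hat v_\Lambda=\hat v_K+\hat w_{K,\Lambda}$, where $\chi_K$ is the characteristic function of the ball of radius $K$, $\hat v_K=\hat v\chi_K$, and $\hat w_{K,\Lambda}=\hat v(\chi_\Lambda-\chi_K)$ is supported in the shell $K<\abs k\le\Lambda$. Introduce the anti-self-adjoint operator
\[
 B_{K,\Lambda}:=g\sum_{j=1}^M\big(a^*(\beta_{K,\Lambda}(x_j-\cdot))-a(\beta_{K,\Lambda}(x_j-\cdot))\big),\qquad \widehat{\beta_{K,\Lambda}}(k):=\frac{\hat w_{K,\Lambda}(k)}{k^2+\omega(k)}\,,
\]
(a variant of $\widehat{\beta_{K,\Lambda}}$ incorporating the recoil momentum in the denominator makes the later bookkeeping slightly cleaner but is inessential), and set $\tilde H_{K,\Lambda}:=\ue^{B_{K,\Lambda}}H_\Lambda\ue^{-B_{K,\Lambda}}$. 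Since $w_{K,\Lambda}\in L^2(\R^d)$ for every finite $\Lambda$, $\ue^{\pm B_{K,\Lambda}}$ preserves $D(L)$ and $\tilde H_{K,\Lambda}$ is self-adjoint on $D(L)$.

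Next I would expand $\tilde H_{K,\Lambda}$ by the Baker--Campbell--Hausdorff formula $\ue^{B}X\ue^{-B}=X+[B,X]+\tfrac12[B,[B,X]]+\cdots$ (the series is effectively finite, modulo controllable remainders, because $B_{K,\Lambda}$ is linear in $a^\#$). The generator is chosen so that the first-order term $[B_{K,\Lambda},L]$ cancels the high-energy interaction $g\sum_j\big(a(w_{K,\Lambda}(x_j-\cdot))+a^*(w_{K,\Lambda}(x_j-\cdot))\big)$, while the second-order terms (from $\tfrac12[B_{K,\Lambda},[B_{K,\Lambda},L]]$ and $[B_{K,\Lambda},g\sum_j(a+a^*)(w_{K,\Lambda})]$) produce the c-number $-g^2M\int_{K<\abs k\le\Lambda}\abs{\hat v(k)}^2(k^2+\omega(k))^{-1}\,\ud k=-(E_\Lambda-E_K)$, which exactly removes the divergence of $E_\Lambda$, together with a genuine operator part $W_{K,\Lambda}$ (terms linear and quadratic in $a^\#$, including cross terms with $-\sum_j\Delta_{x_j}$). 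Thus $\tilde H_{K,\Lambda}+E_\Lambda=L+W_{K,\Lambda}+E_K$, and the key estimate is a uniform-in-$\Lambda$ relative bound $\norm{W_{K,\Lambda}(L+1)^{-1}}\le\delta(K)$ with $\delta(K)\to0$ as $K\to\infty$, valid for all $\Lambda\ge K$. Each summand of $W_{K,\Lambda}$ is controlled by shell integrals of $\abs{\hat v}^2$ weighted by suitable powers of $\omega$ and of $1+\abs k^2$, and Condition~\ref{cond:alphabeta},(2) is precisely what makes all of these tails vanish as $K\to\infty$, uniformly in $\Lambda$; for $v=\delta$, $\omega=1+k^2$, $d=3$ they would not, consistently with that model failing to be renormalisable in this sense.

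To conclude, fix $K$ so large that $\delta(K)<1$. Then $L+W_{K,\Lambda}+E_K$ is self-adjoint on $D(L)$ by the Kato--Rellich theorem for every $\Lambda\in(K,\infty]$, and since $W_{K,\Lambda}(L+1)^{-1}\to W_{K,\infty}(L+1)^{-1}$ in operator norm (dominated convergence in the Fourier representation, the shell exhausting $\{\abs k>K\}$), one obtains $L+W_{K,\Lambda}+E_K\to L+W_{K,\infty}+E_K$ in norm resolvent sense. Writing $(H_\Lambda+E_\Lambda-z)^{-1}=\ue^{-B_{K,\Lambda}}(\tilde H_{K,\Lambda}+E_\Lambda-z)^{-1}\ue^{B_{K,\Lambda}}$, one then transfers this to $H_\Lambda+E_\Lambda$ and obtains a self-adjoint limit $H_\infty=\ue^{-B_{K,\infty}}(L+W_{K,\infty}+E_K)\ue^{B_{K,\infty}}$, which is independent of $K$ because the intertwiners $\ue^{B_{K',\Lambda}-B_{K,\Lambda}}$ converge and commute with the limit.

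The main obstacle is twofold. First, one must establish the uniform-in-$\Lambda$ smallness of $\delta(K)$: this requires enumerating the finitely many operator terms generated by the commutator expansion and bounding each by the appropriate weighted shell integral, and it is here, and only here, that the arithmetic conditions of Condition~\ref{cond:alphabeta},(2) are genuinely used. Second --- and this is the delicate point if one wants \emph{norm} rather than merely \emph{strong} resolvent convergence --- the conjugating unitaries $\ue^{\pm B_{K,\Lambda}}$ do not converge in operator norm, so a naive estimate of $\ue^{-B_{K,\Lambda}}(\tilde H_{K,\Lambda}+E_\Lambda-z)^{-1}\ue^{B_{K,\Lambda}}$ yields only strong convergence (this is what the direct argument of Section~\ref{sect:proof limit} delivers, cf.\ Theorem~\ref{thm:renorm}); upgrading it to norm convergence requires the additional input, carried out in \cite{GrWu17}, that $B_{K,\Lambda}(N+1)^{-1/2}\to B_{K,\infty}(N+1)^{-1/2}$ in operator norm together with $N$-bounds on the transformed resolvent. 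An alternative route, closer to the methods of the present paper, would be to dispense with the dressing transformation and instead control the regularised Ter-Martyrosyan--Skornyakov operators $T_\Lambda$ directly by the uniform bounds of Proposition~\ref{prop:Tmainprop}, inserting them into a Konno--Kuroda-type resolvent formula.
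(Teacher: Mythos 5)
You rightly treat this as the cited result of \cite{nelson1964,GrWu17} (the paper itself offers no proof, only the summary of the Gross-transform argument at the start of Section~\ref{sect:proof limit}), and the architecture of your sketch — splitting off the shell $K<\abs{k}\le\Lambda$, a Weyl-type dressing $\ue^{B_{K,\Lambda}}$ whose first-order commutator cancels the high-energy interaction and whose second-order terms produce $-(E_\Lambda-E_K)$, uniform-in-$\Lambda$ control of the remainder $W_{K,\Lambda}$, and the $(N+1)^{-1/2}$-weighted transfer of resolvent convergence through the non-norm-convergent Weyl operators — is exactly the route of those references.

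However, your central quantitative claim is wrong in kind, and this is a genuine gap: there is no uniform-in-$\Lambda$ relative \emph{operator} bound $\norm{W_{K,\Lambda}(L+1)^{-1}}\le\delta(K)$ in the regime $D\ge0$, so the Kato--Rellich step at $\Lambda=\infty$ and the norm convergence of $W_{K,\Lambda}(L+1)^{-1}$ cannot hold. The transformed Hamiltonian contains the double-creation term $g^2\,a^*\bigl(k\beta_{K,\Lambda}\ue^{\ui kx_j}\bigr)\cdot a^*\bigl(k\beta_{K,\Lambda}\ue^{\ui kx_j}\bigr)$ coming from $(p_j-g\mathcal A_j)^2$, and applied to a vacuum-sector vector $u\otimes\Omega\in D(L)$ its norm is of order $\norm{k\beta_{K,\Lambda}}_{L^2}^2=\int_{K<\abs k\le\Lambda}k^2\abs{\hat v(k)}^2(k^2+\omega(k))^{-2}\,\ud k$, which diverges as $\Lambda\to\infty$ (logarithmically for the Nelson model and for the two-dimensional $v=\delta$ model, and generally whenever $\omega(k)=O(k^2)$ and $\int\abs{\hat v}^2(k^2+\omega)^{-1}\ud k=\infty$); no other term of $W_{K,\Lambda}$ maps the $n$-sector to the $(n+2)$-sector, so nothing can cancel it. Hence $W_{K,\infty}$ is not an operator from $D(L)$ to $\hilb$ at all. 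What Nelson and Griesemer--Wünsch actually prove — and what the paper's summary states, namely that the $R_\Lambda$ converge \emph{in the sense of quadratic forms on} $D(L^{1/2})$ and that the limit is the form sum $L\dotplus B_\infty$ — are uniform form bounds $\abs{\langle\psi,W_{K,\Lambda}\psi\rangle}\le\delta(K)\,\langle\psi,(L+1)\psi\rangle$ with $\delta(K)\to0$ (these do hold under Condition~\ref{cond:alphabeta},(2), exploiting both $\omega$-weights and the particle recoil), followed by KLMN and norm convergence of $(L+1)^{-1/2}W_{K,\Lambda}(L+1)^{-1/2}$, which gives norm resolvent convergence of the transformed operators and then, via the weighted estimates you describe, of $H_\Lambda+E_\Lambda$. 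Note also that if your operator-bound version were true it would yield the explicit description $D(H_\infty)=U_\infty^*D(L)$, which this method does not provide — only $D(\abs{H_\infty}^{1/2})=U_\infty^*D(L^{1/2})$ is accessible, which is precisely part of the motivation for the boundary-condition construction of the present paper. Your closing alternative (a Konno--Kuroda-type argument with the bounds of Proposition~\ref{prop:Tmainprop}) is essentially the paper's own Theorem~\ref{thm:renorm}, which as stated there yields strong, not norm, resolvent convergence without further work.
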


We will now prove Theorem~\ref{thm:renorm}, which states that under the same hypothesis $H_\Lambda +E_\Lambda$ converges to $H$ in the strong resolvent sense. This obviously implies $H=H_\infty$. With a more involved analysis one could certainly also prove convergence of the resolvents in norm. However, this seems unnecessary as the main point is to show that  $H=H_\infty$, and this already implies norm resolvent convergence by~\cite[Thm 3.3]{GrWu17}.

 In the following proof, an important role will be played by $G$ and its regularised variant $G_\Lambda := -g L^{-1} a^*(V_\Lambda)$. The operators $(1-G_\Lambda)$ are somewhat analogous to the Gross transformation $U_\Lambda$ that is used in the renormalisation procedure. This is a family of unitary operators on $\hilb$ with the property that $H_\Lambda+ E_\Lambda=U^*_\Lambda (L + R_\Lambda) U_\Lambda$, with operators $R_\Lambda$ that have a limit as $\Lambda\to \infty$, in the sense of quadratic forms on $D(L^{1/2})$. The limit $\lim_{\Lambda\to \infty} U_\Lambda=:U_\infty$ also exists and one has
 \begin{equation*}
H_\infty=  U^*_\infty (L \dotplus B_\infty) U_\infty \, ,
\end{equation*}
where $L \dotplus B_\infty$ denotes the self-adjoint operator defined by the sum of the quadratic forms.
This implies that $D(\abs{H_\infty}^{1/2})=U_\infty^* D(L^{1/2})$. However, for an explicit characterisation of $D(H_\infty)$ one would need to know the domain of $L \dotplus B_\infty$ and  an explicit description of the action of $U_\infty$ on this domain.
On the other hand, using the operators $G_\Lambda$ and $G_\infty=G$, we will find that
\begin{equation*}
 H_\Lambda + E_\Lambda = (1-G_\Lambda)^* L (1-G_\Lambda) + T_\Lambda + E_\Lambda\,.
\end{equation*}
The operators $T_\Lambda + E_\Lambda$ will converge as $\Lambda \to \infty$ to $T$ (strongly as operators $D(T)\to \hilb$). We have shown, in Section~\ref{sect:proof renorm}, that $T$ is a perturbation of $(1-G)^* L (1-G)$ in the sense of operators, and thus $D(H)=(1-G)^{-1}D(L)$.  

While these procedures look rather similar, there are some notable differences. The Gross transformation is constructed as a Weyl operator from the one-particle function $\hat{v}_\Lambda(k)/(k^2+\omega(k))$, it is unitary and maps the form domain of $H_\Lambda$, respectively $H_\infty$, to $D(L^{1/2})$. On the other hand, the operator $1-G$ uses the resolvent of the multi-particle operator $L$ and it is invertible, but not unitary. 
Like $U_\infty$, this operator maps $D(\abs{H}^{1/2})$ to $D(L^{1/2})$ (see also~\eqref{eq:H form dom}), but additionally also $D(H)$ to $D(L)$.
The action of $(1-G)$ on a generic element of $\hilb$ is also somewhat easier to analyse. This is because $\left((1-G)\psi\right)\uppar{n}$ depends only on $\psi\uppar{n}$ and $\psi\uppar{n-1}$, whereas $\left(U_\infty\psi\right)\uppar{n}$ will depend on all of the $\psi\uppar{j}$, $j\in \N$.

 \begin{proof}[Proof of Theorem~\ref{thm:renorm}]
%
Let $a(V_\Lambda)= \sum_{i=1}^M a(v_\Lambda(x_i-y))$, define $G_\Lambda = -g L^{-1} a^*(V_\Lambda)$, and 
\begin{equation*}
 T_\Lambda:=- G_\Lambda^* L G_\Lambda =- g^2 a(V_\Lambda)L^{-1} a(V_\Lambda)^*\,.
\end{equation*}
Since $v_\Lambda \in \Lz$ for $\Lambda < \infty$ and $L\geq N$, one easily sees that $G_\Lambda$ and $T_\Lambda$ are bounded operators on $\hilb$. We then have
\begin{align*}
(1-G_\Lambda)^* L (1-G_\Lambda) + T_\Lambda 
&
=
 L - G_\Lambda^* L - L G_\Lambda + G_\Lambda^* L G_\Lambda + T_\Lambda
 \\
 &
 = L +  g \left( a(V_\Lambda) +a^*(V_\Lambda)  \right)\\
 &=H_\Lambda\,.
 \end{align*}
 Using this representation, we calculate the difference of resolvents
 \begin{align}
(H+&\ui)^{-1} - (H_\Lambda + E_\Lambda +\ui)^{-1}
\nonumber 
\\
=&
 (H+\ui)^{-1}\Big(H_\Lambda+E_\Lambda- H \Big)(H_\Lambda + E_\Lambda +\ui)^{-1}
 \nonumber
 \\
 =&(H+\ui)^{-1}
 \Big((1-G)^*L(G-G_\Lambda) \Big) (H_\Lambda + E_\Lambda +\ui)^{-1} 
 \label{eq:H conv1}\\
 &+(H+\ui)^{-1}
 \Big((G^*-G^*_\Lambda)L(1-G_\Lambda) \Big) (H_\Lambda + E_\Lambda +\ui)^{-1} 
 \label{eq:H conv2}\\
 &
 +
 (H+\ui)^{-1}\Big(T_\Lambda +E_\Lambda - T\Big)(H_\Lambda + E_\Lambda +\ui)^{-1}\,.
 \label{eq:H conv3}
\end{align}
We need to prove that this converges to zero, strongly on $\hilb$. 

Consider first 
\begin{equation*}
 G-G_\Lambda=g L^{-1}\left( a^*(V_\Lambda)-a^*(V) \right)=g L^{-1}\left( \sum_{i=1}^M a^*\Big((v_\Lambda-v)(x_i-y)\Big)\right).
\end{equation*}
Following the proof of Proposition~\ref{prop:generalboundong}, with $\hat v$ replaced by $\hat{v} (\chi_\Lambda-1)$,  one easily sees that this converges to zero, since integrals such as~\eqref{eq:G int bound} tend to zero with the modified interaction. This proves the convergence of the term~\eqref{eq:H conv1}, because $T$ is $H$-bounded, as shown in Section~\ref{sect:proof renorm}, and thus $(H+\ui)^{-1}(1-G)^*L$ is bounded. 
The proof of this statement, with $v$ replaced by $v_\Lambda$ can also be used to show that $T_\Lambda +E_\Lambda$ is bounded relative to $(1-G_\Lambda)^*L(1-G_\Lambda)$ with constants independent of $\Lambda$, because all of the estimates are given by certain integrals of $\hat v_\Lambda$ that are bounded by the integral with $\hat v$ (see also the discussion of $T_\Lambda$ below). This implies that $L(1-G_\Lambda)(H_\Lambda + E_\Lambda +\ui)^{-1} $ is bounded uniformly in $\Lambda$ and gives the desired result for~\eqref{eq:H conv2}.

We now turn to $T_\Lambda + E_\Lambda = T_{\ud, \Lambda} + E_\Lambda + T_{\uod, \Lambda} $, with $T_{\ud, \Lambda}$, $T_{\uod, \Lambda}$ defined in analogy with $T_\ud$, $T_\uod$ (see Equations~\eqref{eq:DposDefofTd},~\eqref{eq:Toffdiagdef v1}). In Fourier representation the action of $T_{\ud, \Lambda} + E_\Lambda $ is just multiplication by the function
\begin{align*}
- {g^2} \sum_{\ell=1}^M & \int_{\abs{k_{n+1}}< \Lambda} \abs{\hat v(k_{n+1})}^2   \left(\frac{ 1 }{L(P-{e}_\ell k_{n+1},K) } - \frac{ 1 }{k_{n+1}^2+\omega(k_{n+1}) } \right) \, \ud k_{n+1}\,.
\end{align*}
As $\Lambda \to \infty$ this converges to the function defining $T_\ud$, given in~\eqref{eq:Dposregintergal}, pointwise. Using the bound of Lemma~\ref{lem:DposTdiag} one then sees that $T_{\ud, \Lambda} + E_\Lambda \to T_\ud$ in the strong topology of operators from $D(L^{\max(\eps,D/2)})$ to $\hilb$.

Concerning $T_{\uod, \Lambda}$, we spell out the action of $g a(V_\Lambda)G_\Lambda$ in the same way as in \eqref{eq:Tformally v1} and decompose as in \eqref{eq:Toffdiagdef v1} to arrive at
\begin{align*}
T_{\uod, \Lambda}- T_{\uod} := -g^2  \sum_{\ell=1}^M \sum_{i=1,i\neq \ell}^M (\theta_{i \ell, \Lambda} - \theta_{i \ell}) -g^2 \sum_{\ell=1}^M \sum_{i=1}^M (\tau_{i \ell, \Lambda} - \tau_{i \ell}) \, .
\end{align*}
Explicitly, we have
\begin{align}
\label{eq:T1lambda}
(\theta_{i \ell, \Lambda}& - \theta_{i \ell}) \hat{\phi} \uppar n (P,\hat{K}_{n+1}) 
\nonumber \\
&
 = \int_{\R^d}  \frac{(\chi_\Lambda(k_{n+1})-1)\abs{\hat v(k_{n+1})}^2 \hat \psi\uppar n(P+({e}_i-{e}_\ell) k_{n+1},\hat{K}_{n+1})}{L(P-{e}_\ell k_{n+1},K) } \, \ud k_{n+1}\,,  
\end{align}
and 
\begin{align}
\label{eq:T2lambda}
 &(\tau_{i \ell, \Lambda} - \tau_{i \ell})\hat{\phi} \uppar n (P,\hat{K}_{n+1})  
 %
 = \sum_{j=1}^{n} \int_{\R^d}
 \begin{aligned}[t]
  &\frac{\overline{\hat  v(k_j)} \hat  v(k_{n+1})  \hat \psi\uppar n(P-{e}_\ell k_{n+1}+{e}_i k_j,\hat{K}_j)}{L(P-{e}_\ell k_{n+1},K) } \\
  &\times\left(\chi_\Lambda(k_j) \chi_\Lambda(k_{n+1})-1\right)\, \ud k_{n+1}\,.
 \end{aligned}
\end{align}
With the expression \eqref{eq:T1lambda} at hand, going through the proof of Lemma~\ref{lem:DposT1} shows that $\sum_{\ell=1}^M \sum_{i=1,i\neq \ell}^M (\theta_{i \ell, \Lambda} - \theta_{i \ell})$ converges to zero strongly as an operator from $D(L^{D/2})$ to $\hilb$.
To show the analogue for the $\tau$-terms, one first inserts the equality
\begin{equation*}
 \chi_\Lambda(k_j) \chi_\Lambda(k_{n+1})-1= \chi_\Lambda(k_j)(\chi_\Lambda(k_{n+1})-1) +(\chi_\Lambda(k_j)-1)  
\end{equation*}
into~\eqref{eq:T2lambda}. Then, one observes that at least one of the the integrals in $k_j$ or $k_{n+1}$ performed in the proof of Lemma~\ref{lem:T2mainlem} converges to zero. This implies that~\eqref{eq:T2lambda}
converges to zero strongly as an operator from $D(N^{\max(0,1-s)} L^{s-u(s)})$ to $\hilb$.

To summarise, we have found that $T_\Lambda+ E_\Lambda - T$ tends to zero  strongly as an operator from $D(T)$ to $\hilb$, for any domain $D(T)$ that can be chosen in Proposition~\ref{prop:Tmainprop}. Combining this with the fact that $T$ is bounded relative to $H$ implies that for any $\psi\in \hilb$
\begin{equation*}
 \lim_{\Lambda\to \infty} \norm{\left(T_\Lambda+ E_\Lambda - T\right) (H-\ui)^{-1}\psi}_{\hilb}=0\,.
\end{equation*}
This shows convergence of~\eqref{eq:H conv3} and completes the proof. 
\end{proof}

\section{Regularity of domain vectors}\label{sect:regularity}
In this section we will discuss the regularity of vectors in $D(H)$. 
These results apply both to the case of form perturbations of Section~\ref{sect:form} and the renormalisable models treated in Section~\ref{sect:renorm}. 
Due to the boundary condition $(1-G)\psi\in D(L)$, a vector $\psi\in D(H)$  is exactly as regular as $G\psi = \psi - (1-G)\psi$ is. The same reasoning also applies on the form domain of $H$. Since we proved in Sections~\ref{sect:proof form} and~\ref{sect:proof renorm} that $H$ is a perturbation of $H_0=(1-G)^*L(1-G)$, the quadratic form of $H$ is a perturbation of that of $H_0$ and its domain is
\begin{equation}\label{eq:H form dom}
D(\abs {H}^{1/2})=(1-G)^{-1}D(L^{1/2}) \subset D(N^{1/2})\,.
\end{equation} 
This domain is characterised by the abstract boundary condition $\psi-G\psi \in D(L^{1/2})$, which is non-trivial if $G\psi \notin D(L^{1/2})$, i.e. for the models treated in Section~\ref{sect:renorm}. In this case, $\psi\in D(\abs {H}^{1/2})$ has the same regularity (with respect to $L$) as $G\psi$. 

We will prove sharp results on the regularity of $G\psi$ below. Together, these will imply the Corollary~\ref{cor:Nelson} stated in the introduction.


Proposition~\ref{prop:generalboundong} establishes that if $\abs{\hat v(k)}\leq \abs{k}^{-\alpha}$, then the vectors in the domain of the operator $H$ with interaction $v$ have the regularity of those in $D(L^\eta)$ for all $\eta<\frac{2-D}4=1-\frac{d-2\alpha}4$. Note that if $\int \frac{ |\hat v(k)|^2}{k^2+\omega(k)} \ud k<\infty$ Condition~\ref{cond:alphabeta} implies that we are in the case of form perturbations with $D<0$ treated in Section~\ref{sect:form} and the following corollary holds for some $\eta>1/2$.

\begin{cor}\label{cor:reg}
Let the conditions of Theorem~\ref{thm:main} be satisfied. Then for every $0\leq \eta < \frac{2-D}4$ we have
\begin{equation*}
D(H)\subset D(L^\eta) \quad \text{and} \quad  D(\abs{H}^{1/2})\subset D(L^{\min(\eta,1/2)}).
\end{equation*}

\end{cor}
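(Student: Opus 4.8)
The plan is to deduce the corollary directly from the boundary condition characterising $D(H)$, namely $(1-G)\psi\in D(L)$, together with the sharp mapping estimate for $G$ of Proposition~\ref{prop:generalboundong}. For $\psi\in D(H)$ I would write $\psi=(1-G)\psi+G\psi$; the first summand lies in $D(L)\subset D(L^\eta)$ since $\eta<\tfrac{2-D}4\le1$, so the statement reduces to showing $G\psi\in D(L^\eta)$ whenever $\eta<\tfrac{2-D}4$. Recall that $G$ is bounded on $\hilb$, so $G\psi$ is well defined, and that $D(H)\subset D(N)$ by Lemma~\ref{lem:Dnegnisleftinv} if $D<0$, respectively Lemma~\ref{lem:Dposnisleftinv} if $D\ge0$; hence $\sum_{n}(1+n)^2\norm{\psi\uppar n}_{\hilb\uppar n}^2<\infty$.

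Given $\eta<\tfrac{2-D}4$, I would pick the parameter $s>0$ in Proposition~\ref{prop:generalboundong} sufficiently small. Since $u(0)=-D/2$ and $\alpha<d/2$ forces $D>-2$, one has $u(0)<1$; moreover $s\mapsto\tfrac{1+u(s)-s}2$ is continuous and equals $\tfrac{2-D}4$ at $s=0$, so for $s>0$ small enough both $u(s)<1$ and $\eta<\tfrac{1+u(s)-s}2$ hold. Proposition~\ref{prop:generalboundong} then provides a constant $C$ with
\begin{equation*}
 \norm{L^\eta G\psi}_{\hilb}^2=\sum_{n\ge0}\norm{L^\eta G\psi}_{\hilb\uppar{n+1}}^2\le C^2\sum_{n\ge0}\bigl(1+n^{\max(0,1-s)/2}\bigr)^2\norm{\psi\uppar n}_{\hilb\uppar n}^2.
\end{equation*}
Because $s>0$ makes the exponent $\max(0,1-s)/2$ strictly less than $\tfrac12$, the right-hand side is bounded by a constant times $\sum_n(1+n)^2\norm{\psi\uppar n}^2$, which is finite since $\psi\in D(N)$. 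This yields $G\psi\in D(L^\eta)$, hence $D(H)\subset D(L^\eta)$.

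For the form-domain inclusion I would use the characterisation~\eqref{eq:H form dom}, $D(\abs H^{1/2})=(1-G)^{-1}D(L^{1/2})\subset D(N^{1/2})$. Writing again $\psi=(1-G)\psi+G\psi$, the first term lies in $D(L^{1/2})\subset D(L^{\min(\eta,1/2)})$, and for $G\psi$ the same application of Proposition~\ref{prop:generalboundong} works verbatim: since the exponent $\max(0,1-s)/2$ is $<\tfrac12$, the series $\sum_n(1+n^{\max(0,1-s)/2})^2\norm{\psi\uppar n}^2$ already converges for $\psi\in D(N^{1/2})$, so $G\psi\in D(L^\eta)\subset D(L^{\min(\eta,1/2)})$. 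The argument is essentially bookkeeping; the only point needing a little care is that the admissible regularity threshold $\tfrac{1+u(s)-s}2$ in Proposition~\ref{prop:generalboundong} tends to $\tfrac{2-D}4$ as $s\downarrow0$, so any $\eta$ below that value is reached, while the associated particle-number growth $n^{\max(0,1-s)/2}$ stays strictly below $n^{1/2}$, so that the decay $D(H)\subset D(N)$ (resp.\ $D(\abs H^{1/2})\subset D(N^{1/2})$) suffices to absorb it. Beyond invoking Proposition~\ref{prop:generalboundong} there is no genuine obstacle.
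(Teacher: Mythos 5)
Your argument is correct and follows essentially the same route as the paper: decompose $\psi=(1-G)\psi+G\psi$ and control $G\psi$ via Proposition~\ref{prop:generalboundong}, absorbing the particle-number growth using $D(H)\subset D(N)$, respectively $D(\abs{H}^{1/2})\subset D(N^{1/2})$. The only (inessential) difference is that the paper simply takes $s=0$, for which the regularity threshold is exactly $\tfrac{2-D}{4}$ and the growth $\sqrt{N+1}$ is already absorbed by $D(N^{1/2})$, so your small-$s>0$ continuity argument is not needed.
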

\begin{proof} 
Let $\psi \in D(H)$, respectively $\psi \in  D(\abs{H}^{1/2})$. 
 To show that $G\psi \in D(L^\eta)$ we can apply Proposition~\ref{prop:generalboundong} with $s=0$, since $\eta - \frac{2-D}4 <0$. This yields
 \begin{equation*}
  \norm{L^\eta G\psi}_\hilb \leq C \|\sqrt{N+1} \psi\|_\hilb\,.
 \end{equation*}
Together with the fact that $D(\abs{H}^{1/2})\subset D(N^{1/2})$ this implies that $G\psi \in D(L^\eta)$.
\end{proof}

For the Fröhlich model this means that $D(H)\subset D(L^\eta)$ for $\eta<3/4$. For the Nelson model as well as our model for point-particles in two dimensions with $v=\delta$ we have $D(\abs{H}^{1/2})\subset D(L^\eta)$ for $\eta<1/2$.

We will now show that these results are sharp, in the sense that $D(H)\cap D(L^\eta)=\{0\}$ for all larger $\eta$.
The intuition behind this is that the (worst) singularities of $G\psi$ behave exactly like those of $(-\Delta + \omega(-\ui \nabla))^{-1}v(x_\ell-y_{j})$. 
Similar results for $M=1$ were also proved in~\cite{GrWu16} and~\cite{GrWu17} using the Gross transform.

\begin{prop}\label{prop:non-reg}
Assume the hypothesis of Theorem~\ref{thm:main} hold and additionally that $\omega \in L^\infty_\mathrm{loc}(\R^d)$. Let $0<\eta<1$ be such that $\int \frac{ |\hat v(k)|^2}{(k^2+\omega(k))^{2(1-\eta)}} \ud k=\infty$, then
 \begin{equation*}
D(H)\cap D(L^\eta)=\{0\},
 \end{equation*}
 and if $\eta\leq 1/2$ we  also have
 \begin{equation*}
  D(\abs{H}^{1/2})\cap D(L^\eta)=\{0\}\,.
 \end{equation*}
\end{prop}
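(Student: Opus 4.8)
The plan is to reduce the statement to a single mapping property of $G$ and then prove that property by isolating, in Fourier space, the leading singularity of $G\phi$. If $\psi\in D(H)\cap D(L^\eta)$, the boundary condition gives $(1-G)\psi\in D(L)\subset D(L^\eta)$ (using $0\le\eta<1$), hence $G\psi=\psi-(1-G)\psi\in D(L^\eta)$; if $\psi\in D(\abs{H}^{1/2})\cap D(L^\eta)$ with $\eta\le\tfrac12$, one argues the same way using $D(\abs{H}^{1/2})=(1-G)^{-1}D(L^{1/2})$ from~\eqref{eq:H form dom} and $D(L^{1/2})\subset D(L^\eta)$. So it suffices to show that $\phi\in\hilb$ with $G\phi\in D(L^\eta)$ must vanish. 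Since $L$ acts sector-wise and, by~\eqref{eq:G Fourier}, $(G\phi)^{(n+1)}$ depends only on $\phi^{(n)}$, I would prove this sector by sector: for each $n$, if $(G\phi)^{(n+1)}\in D(L^\eta)$ then $\phi^{(n)}=0$.

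To do so, fix $n$, set $f=\phi^{(n)}$ and assume $f\ne0$, aiming for a contradiction with finiteness of $\int L^{2\eta}\,\abs{\widehat{(G\phi)^{(n+1)}}}^2$. Since $f\ne0$ and the union of the hyperplanes $\{k_i=0\}$, $i=1,\dots,n$, is a null set, I choose a ball $B_0=B(z_0,r)\subset\R^{d(M+n)}$ with $\int_{B_0}\abs{\hat f}^2>0$ whose last $dn$ coordinates are bounded away from $0$. For large $\Lambda$ I then restrict the integral to the region $\mathcal R_\Lambda$ given by $\abs{k_{n+1}}\ge\Lambda$ together with $(p_1+k_{n+1},p_2,\dots,p_M,k_1,\dots,k_n)\in B_0$. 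On $\mathcal R_\Lambda$ one has $p_1\approx-k_{n+1}$ with all other components bounded, so, using $\omega\in L^\infty_{\mathrm{loc}}$, $L(P,K)\asymp k_{n+1}^2+\omega(k_{n+1})$ once $\Lambda$ is large. In the sum~\eqref{eq:G Fourier} the single term with $i=1,\ j=n+1$ evaluates $\hat f$ at $(p_1+k_{n+1},p_2,\dots,p_M,k_1,\dots,k_n)\in B_0$, while every other term evaluates $\hat f$ at a point whose $p_1$-block has norm $\ge\Lambda-C$. After the volume-preserving substitution $u_1=p_1+k_{n+1}$ the region $\mathcal R_\Lambda$ becomes the product $\{\abs{k_{n+1}}\ge\Lambda\}\times B_0$, and the contribution of the leading term is bounded below, up to a constant, by
\[
 \Big(\int_{B_0}\abs{\hat f}^2\Big)\int_{\abs{k}\ge\Lambda}\frac{\abs{\hat v(k)}^2}{(k^2+\omega(k))^{2(1-\eta)}}\,\ud k\;=\;+\infty,
\]
because the integrand lies in $L^1_{\mathrm{loc}}$ (as $k^2+\omega(k)\ge1$ and $\hat v\in L^2_{\mathrm{loc}}$) while $\int_{\R^d}\abs{\hat v}^2(k^2+\omega)^{-2(1-\eta)}\,\ud k=\infty$ by hypothesis, so the tail is already infinite.

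It remains to check that the other $M(n+1)-1$ terms contribute only a finite amount to $\int_{\mathcal R_\Lambda}L^{2\eta-2}\abs{\,\cdot\,}^2$ for a fixed large $\Lambda$: bounding $L^{2\eta-2}\le1$, using Cauchy--Schwarz in the finitely many terms, and changing variables in each so that the argument of $\hat f$ becomes an integration variable, each such integral is dominated by a constant — coming from $\sup_{\abs{k}\ge\Lambda}\abs{\hat v(k)}^2$ (finite since $\alpha\ge0$) or from $\int_{B(k_j^0,r)}\abs{\hat v}^2$ (finite since $\hat v\in L^2_{\mathrm{loc}}$ and these balls avoid the origin) — times $\norm{f}^2$, the point being that $\hat f$ is always evaluated with its $p_1$-block of norm $\ge\Lambda-C$, so the relevant integrals are tails of $\norm{f}_{L^2}^2$. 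Using $\abs{a+b}^2\ge\tfrac12\abs{a}^2-\abs{b}^2$ we conclude $\int L^{2\eta}\abs{\widehat{(G\phi)^{(n+1)}}}^2\ge\tfrac12(+\infty)-(\text{finite})=+\infty$, contradicting $(G\phi)^{(n+1)}\in D(L^\eta)$; hence $f=0$, which (together with the reduction) proves the proposition.

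The delicate point is precisely this last estimate. A priori the leading singular term and the remaining terms of~\eqref{eq:G Fourier} could cancel, and there is no lower bound on $\hat v$ available to prevent this. The resolution is the geometric observation above: on $\mathcal R_\Lambda$ only the term that creates the new boson $y_{n+1}$ at the source $x_1$ ``sees'' $\phi^{(n)}$ at a bounded configuration, whereas all other terms push $\phi^{(n)}$ off to infinity in momentum. This is the Fourier-space incarnation of the fact that the singularities of $G\phi^{(n)}$ on the different collision planes $\{x_i=y_j\}$ do not interfere, and it is what lets the divergence of $\int\abs{\hat v}^2(k^2+\omega)^{-2(1-\eta)}\,\ud k$ propagate to a contradiction.
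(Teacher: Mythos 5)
Your proposal is correct and follows essentially the same route as the paper's proof: reduce the statement to showing that $G$ maps no nonzero vector into $D(L^\eta)$, isolate via $(a+b)^2\geq\tfrac12 a^2-b^2$ the single term of~\eqref{eq:G Fourier} whose created-boson momentum runs free, use $\omega\in L^\infty_{\mathrm{loc}}$ to bound $L$ by a multiple of $k^2+\omega(k)$ on a region where all other momenta are bounded so that this term's integral diverges by hypothesis, and check by changes of variables and $\hat v\in L^2_{\mathrm{loc}}$ that the remaining finitely many terms contribute only finitely. Your choices of localising $\hat f$ in a small ball away from the momentum origin and imposing the cutoff $\abs{k_{n+1}}\geq\Lambda$ are cosmetic variants of the paper's region $U$ (where $p_1$ and the free boson momentum are unbounded and all other momenta lie in $B_R(0)$) and do not change the argument.
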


\begin{proof}
We will show that $G$ maps no $0 \neq \psi \in  \hilb$ into $D(L^\eta)$, which implies our claim as discussed above. 

Let $n\in \N$ be such that $\psi\uppar{n}\neq 0$ and recall that
 \begin{align*}
  \widehat{G\psi\uppar{n}}(P,K)=\frac{-g}{\sqrt{n+1}} \sum_{i=1}^M\sum_{j=1}^{n+1} L^{-1}(P,K) \hat v(k_j) \hat\psi\uppar{n}(P+e_i k_j, \hat K_j).
 \end{align*}
 Let $U\subset \R^{Md}\times \R^{(n+1)d}$ be the set 
 \begin{equation*}
  U=\{(P,K): |p_j|<R \text{ and } \abs{k_j}<R \text{ for all } j>1 \}= \R^d\times B_R(0)^{M-1} \times \R^d \times B_R(0)^{n}\,,
 \end{equation*}
 where $R>0$ is a parameter, to be chosen later. We will prove that
 \begin{equation*}
  \int_U \abs{\widehat{L^\eta G\psi\uppar{n}}(P,K)}^2 \, \ud P \ud K = \infty\,,
 \end{equation*}
which implies that $G \psi\uppar{n}\notin D(L^\eta)$. We first use that $(a+b)^2\geq \tfrac12 a^2- b^2$ and the Cauchy-Schwarz inequality to obtain the lower bound
\begin{align}
 \label{eq:L^sG decomp 1}
 \abs{\widehat{L^\eta G\psi\uppar{n}}(P,K)}^2
 \geq& \frac{g^2}{2(n+1)} \frac{\abs{\hat v(k_1)}^2 \abs{\hat\psi\uppar{n}(P+e_1 k_1, \hat K_1)}^2}{L(P,K)^{2-2\eta}}\\
 &- g^2 M \sum_{(i,j)\neq (1,1)} \frac{\abs{\hat v(k_j)}^2\abs{\hat \psi\uppar{n}(P+e_i k_j, \hat K_j)}^2}{L(P,K)^{2-2\eta}}.
 \label{eq:L^sG decomp 2}
\end{align}
We will see that the terms of the second line have a finite integral over $U$, while the integral of the first is infinite.
In the sum over $(i,j)\neq (1,1)$ in Equation~\eqref{eq:L^sG decomp 2} consider a term with $i=1, j> 1$.  By the change of variables $p_1\mapsto q=p_1-k_j$ (note that the domain of integration for $p_1$ is $\R^d$) we obtain the bound
\begin{equation*}
 \int\limits_U \frac{\abs{\hat v(k_j)}^2\abs{\hat \psi\uppar{n}(P+e_1 k_j, \hat K_j)}^2}{L(P,K)^{2-2\eta}} \, \ud P \ud K 
 \leq \norm{\psi\uppar{n}}^2 \sup_{q\in \R^d} \int\limits_{\abs{k}<R}\hspace{-6pt} \frac{\abs{\hat v(k)}^2}{((q-k)^2+1)^{2-2\eta}} \, \ud k \, .
\end{equation*}
This is finite since $\hat v \in L^2_\mathrm{loc}$. The terms with $i,j>1$ can be bounded by enlarging the domain of integration in the variable $p_i$ to $\R^d$ and proceeding as for $i=1$.
The terms with $j=1, i>1$ are estimated in the same way, where the change of variables is performed in $k_1$ and the remaining integral is then over $p_i$.

To show that the integral over the term~\eqref{eq:L^sG decomp 1} is infinite, we perform a change of variables $p_1 \rightarrow p_1-k_1$. Then we restrict the domain of integration to $\{\abs{p_1}<R\} \cap U$ to bound it from below by
\begin{align}
\label{eq:stillwithL}
  \int\limits_{B_R(0)^{M+n}} \abs{\hat \psi\uppar{n}(P, \hat K_1)}^2 \int\limits_{\R^d} \frac{\abs{\hat v(k_1)}^2 }{ L(p_1-k_1, P,K) ^{2\eta-2}} \, \ud k_1 \ud P \ud \hat K_1\, .
\end{align}
Since we have restricted to $(P,\hat{K}_1) \in B_R(0)^{M+n}$
 and assumed that $\omega \in L^\infty_\mathrm{loc}$, it holds that $P^2+\Omega(\hat{K}_1) \leq C$ for some $C > 0$ that depends on $R$. Because, in particular $\abs{p_1} < R$ and $1\leq \omega(k_1)$, we can then estimate
\begin{align*}
 L(p_1-k_1, \hat{P}_1,K) 
 &
 \leq (k_1-p_1)^2 + \omega(k_1)  + C
  \leq C' ( k_1^2 + \omega(k_1)) \, ,
\end{align*}
for some $C' > 0$. Hence the integral~\eqref{eq:stillwithL} is bounded from below by some constant times
\begin{align*}
\int\limits_{B_R(0)^{M+n}} \abs{\hat \psi\uppar{n}(P, \hat K_1)}^2 \, \ud P \ud \hat K_1 \int_{\R^d} \frac{\abs{\hat v(k_1)}^2 }{(k_1^2 + \omega(k_1))^{2\eta-2}} \, \ud k_1  \, .
\end{align*}
Because $\psi \uppar n \neq 0$, we can choose an $R>0$ such that 
\begin{equation*}
 \int\limits_{B_R(0)^{M+n}} \abs{\hat \psi\uppar{n}(P, \hat K_1)}^2 \, \ud P \ud K > 0\,.
\end{equation*}
But since the integral in $k_1$ is infinite by hypothesis we have proved the claim. 
\end{proof}

\paragraph{Acknowledgments.} 
We thank Stefan Keppeler, Stefan Teufel and Roderich Tumulka for helpful discussions. J.S. 
was supported by the German Research Foundation (DFG) within the Research Training Group 1838 \textit{Spectral Theory and Dynamics of Quantum Systems} and thanks the Laboratoire Interdisciplinaire Carnot de Bourgogne for its hospitality during his stay in Dijon.

\begin{appendix}
\section{Appendix}

\begin{lem}\label{lem:V}
 Let $v\in \mathscr{S}'(\R^d)$ with $\hat v \in L^\infty(\R^d)+L^2(\R^d)$. If $v\notin L^2(\R^d)$, then for any $M\geq 2$ the set of $X=(x_1,\dots, x_M)$ such that 
 \begin{equation*}
V(X,y)=\sum_{j=1}^M v(x_j -y)    
 \end{equation*}
is an element of $L^2(\R^d, \ud y)$ has Lebesgue measure zero in $\R^{M d}$.
\end{lem}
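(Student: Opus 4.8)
The plan is to pass to the Fourier representation in the boson variable $y$ and to exploit the equidistribution of $\abs{\sum_{j}\ue^{\ui k\cdot x_j}}^2$. Recall from the discussion in Section~\ref{sect:form} that $\hat V(X,k)=\sum_{j=1}^M\ue^{\ui k\cdot x_j}\hat v(k)$, so by Plancherel $V(X,\cdot)\in L^2(\R^d,\ud y)$ if and only if the nonnegative measurable function
\[
 g(X):=\int_{\R^d}\abs{\hat v(k)}^2\,\Big\lvert\sum_{j=1}^M\ue^{\ui k\cdot x_j}\Big\rvert^2\,\ud k
\]
is finite. Since $\hat v\in L^\infty(\R^d)+L^2(\R^d)\subset L^2_{\mathrm{loc}}(\R^d)$, we have $\int_{\abs{k}\le R}\abs{\hat v(k)}^2\,\ud k<\infty$ for every $R$, while $v\notin L^2$ together with Plancherel gives $\int_{\R^d}\abs{\hat v(k)}^2\,\ud k=\infty$; hence $\int_{\abs{k}\ge R}\abs{\hat v(k)}^2\,\ud k=\infty$ for every $R>0$. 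The task is thus to show that $\{X:g(X)<\infty\}$ is Lebesgue null.

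I would argue by contradiction via a layer-cake decomposition. Write $\{g<\infty\}=\bigcup_{n\in\N}\{g\le n\}$; if the left-hand set had positive measure, then so would $\{g\le n\}$ for some $n$, and I could pick a bounded measurable set $E\subset\{g\le n\}$ with $0<\abs{E}<\infty$, on which $\int_E g(X)\,\ud X\le n\abs{E}<\infty$. On the other hand, Tonelli's theorem (the integrand is nonnegative and measurable) gives
\[
 \int_E g(X)\,\ud X=\int_{\R^d}\abs{\hat v(k)}^2\left(\int_E\Big\lvert\sum_{j=1}^M\ue^{\ui k\cdot x_j}\Big\rvert^2\,\ud X\right)\ud k.
\]
It therefore suffices to show that the inner integral is bounded below by $M\abs{E}/2$ for all sufficiently large $\abs{k}$: then the right-hand side is at least $\tfrac{M\abs{E}}{2}\int_{\abs{k}\ge R}\abs{\hat v(k)}^2\,\ud k=\infty$, contradicting $\int_E g\le n\abs{E}<\infty$.

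For the inner integral I would expand $\bigl\lvert\sum_{j}\ue^{\ui k\cdot x_j}\bigr\rvert^2=M+\sum_{j\ne \ell}\ue^{\ui k\cdot(x_j-x_\ell)}$, so that $\int_E\bigl\lvert\sum_{j}\ue^{\ui k\cdot x_j}\bigr\rvert^2\,\ud X=M\abs{E}+\sum_{j\ne\ell}\int_E\ue^{\ui k\cdot(x_j-x_\ell)}\,\ud X$. Each cross term equals (up to a Jacobian constant) the value at $k$ of the Fourier transform of the pushforward of $\mathbf 1_E\,\ud X$ under the surjective linear map $X\mapsto x_j-x_\ell$; concretely, after a linear change of variables making $w=x_j-x_\ell$ one block of coordinates, it becomes $\int_{\R^d}\ue^{\ui k\cdot w}\phi(w)\,\ud w$ with $\phi(w)=\int\mathbf 1_E\,\ud(\text{remaining variables})\in L^1(\R^d)$ (here boundedness of $E$ is used, and $j\ne\ell$ requires $M\ge 2$). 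By the Riemann--Lebesgue lemma these cross terms tend to $0$ as $\abs{k}\to\infty$, so the inner integral is at least $M\abs{E}/2$ once $\abs{k}$ is large enough, which closes the argument.

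The main obstacle is precisely the step from "$g$ is finite almost everywhere on a set of positive measure" to a genuine contradiction: a pointwise-finite function need not be integrable, so one cannot simply integrate $g$ over the whole level set. This is what the decomposition into the sublevel sets $\{g\le n\}$ and the passage to a bounded subset $E$ are designed to handle; after that, the estimate is the routine Riemann--Lebesgue bound above. A minor point requiring a little care is that $E$ is only assumed bounded and measurable, so the decay of the cross terms must be obtained from the $L^1$-density of a pushforward measure rather than, say, from the Fourier transform of the indicator of a box.
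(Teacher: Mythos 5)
Your proof is correct, but it takes a genuinely different route from the paper's. You work entirely on the Fourier side with the \emph{quadratic} quantity $g(X)=\norm{V(X,\cdot)}_{L^2}^2=\int\abs{\hat v(k)}^2\abs{\sum_j \ue^{\ui k\cdot x_j}}^2\ud k$: after truncating to a bounded set $E$ inside a sublevel set $\{g\le n\}$, you average over $E$, expand the exponential sum, and use Riemann--Lebesgue decay of the $M(M-1)$ off-diagonal terms (Fourier transforms of $L^1$ pushforward densities) to keep the diagonal contribution $M\abs{E}$ dominant for large $\abs{k}$, which contradicts $\int_{\abs{k}\ge R}\abs{\hat v}^2\,\ud k=\infty$. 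The paper instead stays in position space and averages $V(X,x_1-y)$ \emph{linearly} over the analogous truncated set $U_R=\{\abs{X}<R,\ \norm{V(X,\cdot)}_{L^2}<R\}$: by Minkowski's inequality the averaged function is in $L^2$, and it decomposes as $\abs{U_R}\,v(y)$ plus convolutions $v\ast f_j$ with $f_j\in L^1\cap L^2$, which lie in $L^2$ precisely because $\hat v\in L^\infty+L^2$; this forces $v\in L^2$, a contradiction. Both arguments hinge on the same truncation trick (a set of positive measure on which $\norm{V(X,\cdot)}_{L^2}$ and $\abs{X}$ are uniformly bounded), but the mechanisms for controlling the cross terms differ: convolution mapping properties versus Riemann--Lebesgue. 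Your version in fact needs only $\hat v\in L^2_{\mathrm{loc}}$ (so the divergence of $\int\abs{\hat v}^2$ sits at infinity), slightly weaker than the stated hypothesis, and it yields measurability of the exceptional set for free since it is $\{g=\infty\}$ for the Tonelli-measurable function $g$; the paper's argument uses the full $L^\infty+L^2$ assumption but has the structural appeal of exhibiting $v$ itself, after subtraction of $L^2$ convolutions, as an $L^2$ function. One cosmetic remark: boundedness of $E$ is not actually needed for your cross-term estimate -- finiteness of $\abs{E}$ already gives the $L^1$ densities -- though choosing $E$ bounded is of course harmless.
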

\begin{proof}
 Assume to the contrary that the set where $V(X,y)\in L^2(\R^d, \ud y)$ has positive measure. 
 This set is the union over all $R>0$ of the sets
  \begin{equation*}
  U_R:=\{ X \in \R^{Md}: \abs{X}<R \text{ and } \norm{V(X,y)}_{L^2(\R^d)} < R \}\,,
 \end{equation*}
 and thus $U_R$ has positive (and finite) measure $\abs{U_R}>0$ for some $R>0$.  Integrating over this set, we see that $\int_{U_R} V(X,x_1-y) \ud X \in L^2(\R^d)$, since
\begin{equation*}
 \norm{\int_{U_R} V(X,x_1-y) \ud X}_{L^2(\R^d)} \leq \int_{U_R} \norm{V(X,y)}_{L^2(\R^d)} \ud X < R \abs{U_R}\,.
\end{equation*}
On the other hand, denoting by $\chi_{U_R}$ the characteristic function of $U_R$, we have
\begin{align*}
 \int_{U_R} &V(X,x_1-y) \ud X\\
 &= \abs{U_R} v(y) + \sum_{j=2}^M  \int_{U_R} v(x_j-x_1+y) \ud X\\
 &= \abs{U_R} v(y) + \sum_{j=2}^M \int_{\R^d}  v(y-x_1) \int_{\R^{(M-1)d}} \chi_{U_R} (X+e_1 x_j) \ud X\,.
\end{align*}
Since $U_R$ has finite measure, the functions
\begin{equation*}
 f_j (x) = \int_{\R^{(M-1)d}} \chi_{U_R} (X+e_1 x_j) \ud x_2 \cdots \ud x_M
\end{equation*}
are in $L^1\cap L^2(\R^d)$. Since $\hat v\in L^2+L^\infty$, the convolution $v\ast f_j$ is then in $L^2(\R^d)$.
But this implies that 
\begin{equation*}
 \abs{U_R} v(y) = \int_{U_R} V(X,x_1-y) \ud X - \sum_{j=2}^M (v\ast f_j) (y)\in L^2(\R^d)\,, 
\end{equation*}
 a contradiction.
\end{proof}

\begin{lem}
\label{lem:parameterint}
Let $p \in \R^3$ and $\theta \in (1,3)$. Then there exists a constant $C > 0$ such that
\begin{align}
\label{eq:parameterintdrei}
\int_{\R^3} \frac{ \ud q}{((p-q)^2+1) \abs{q}^\theta} \leq \frac{C}{\abs{p}^{\theta-1}} \, .
\end{align}
Let $p \in \R^2$ and $\theta \in \lbrace 1, 2 \rbrace$. Then there exists a constant $C>0$ such that 
\begin{align}
\label{eq:parameterintzwei}
\int_{\R^2} \frac{ \ud q}{((p-q)^2+1) (q^2+1)^{\frac{\theta}{2}}} \leq \frac{C}{\abs{p}^{\theta}} (\log\left(1+{\abs p}\right) +1) \, .
\end{align}
\begin{proof}
For $d=3$, we will use spherical coordinates and write $p$ instead of $\abs p$ when it is clear what is intended. Let $R>0$ be any positive number. For $p\geq R$ we have:
\begin{align*}
\int_{\R^3} \frac{ \ud q}{((p-q)^2+1) \abs{q}^\theta} 
&
= 
2 \pi \int_0^\infty \frac{1}{r^{\theta-2}} \int_{-1}^1 \frac{1}{r^2 - 2 r  p s + p^2 + 1} \, \ud s \ud r
\\
&
=
  \frac{\pi}{p } \int_0^\infty \frac{1}{r^{\theta-1} } \log\left(\frac{r^2 + 2 r p  + p^2 + 1}{r^2 - 2 r  p  + p^2 + 1} \right) \, \ud r
\end{align*}
We perform a change of variables $r \rightarrow \frac{r}{p}=:t$ which yields
\begin{align*}
&  \frac{\pi}{p^{\theta-1} } \int_0^\infty \frac{1}{t^{\theta-1} } \log\left(\frac{(t  + 1)^2 + p^{-2} }{(t-1)^2 + p^{-2}} \right) \, \ud t \, .
\end{align*}
Now we split the domain of integration into three parts. For $t \in [0,\frac{1}{2}]$ we will use that
\begin{align}
&  \frac{(t  + 1)^2 + p^{-2} }{(t-1)^2 + p^{-2}} = 1 + \frac{4 t}{(t-1)^2 + p^{-2}} \, ,
\nonumber
\end{align}
which implies
\begin{align}
\label{eq:logest}
&\log\left(\frac{(t  + 1)^2 + p^{-2} }{(t-1)^2 + p^{-2}} \right) \leq \frac{4 t}{(t-1)^2 + p^{-2}} \, .
\end{align}
So we estimate the integral there using $\theta<3$ by
\begin{align*}
& \frac{4 \pi}{p^{\theta-1} } \int_0^{\frac{1}{2}} \frac{1}{t^{\theta-2} }\frac{1 }{(t-1)^2 + p^{-2}} \,   \ud t 
\leq
 \frac{4 \pi}{p^{\theta-1} } \int_0^{\frac{1}{2}} \frac{1}{t^{\theta-2} }\frac{1 }{(t-1)^2 } \,   \ud t \leq \frac{C}{p^{\theta-1} } \, .
 \end{align*}
 We can do the exact same thing for $t \in [\frac{3}{2},\infty)$ and obtain the same bound. It remains to deal with the integrable singularity at $t=1$:
 \begin{align*}
\frac{\pi}{p^{\theta-1} } &\int_{\frac{1}{2}}^{\frac{3}{2}} \log\left(\frac{(t+1)^2 + p^{-2} }{(t-1)^2+ p^{-2}} \right)  \ud t  
\\
&
\leq
\frac{\pi}{p^{\theta-1} } \int_{\frac{1}{2}}^{\frac{3}{2}} \log\left({(t+1)^2 + R^{-2} }\right)  \ud t  
- \frac{\pi}{p^{\theta-1} } \int_{\frac{1}{2}}^{\frac{3}{2}} \log\left({(t-1)^2} \right)  \ud t \leq \frac{C}{p^{\theta-1} } \, .
\end{align*}
If however $p<R$, then we simply estimate the integral by a constant. Since $R$ was arbitrary, this yields the claim in the case $d=3$. 

If $d=2$, we first observe that for $a \in (0,1)$ it holds that
\begin{align}
\label{eq:antidervcos}
\frac{\partial}{\partial x} \arctan\left(\sqrt{\frac{a+1}{a-1}} \tan(x) \right) = \frac{\sqrt{a^2 -1}}{1-a \cos(2 x)} \, .
\end{align}
We will use this to integrate in the angular variable:
\begin{align*}
\int_{\R^2}& \frac{ \ud q}{((p-q)^2+1) (q^2+1)^{\frac{\theta}{2}}}
= 
\int_0^\infty \frac{r }{ (r^2+1)^{\frac{\theta}{2}}} \int_{-\pi}^\pi \frac{1}{r^2 - 2 r  p \cos(s) + p^2 + 1} \, \ud r \ud s
\\
&
\leq
\int_0^\infty \frac{1}{ (r^2+1)^{\frac{\theta-1}{2}}(r^2+p^2+1)} \int_{-\frac{\pi}{2}}^{\frac{\pi}{2}} \frac{1}{1- \frac{2 r  p}{r^2+p^2+1} \cos(2 \xi)} \,  \ud r \ud \xi
\end{align*}
Now set $a=\frac{2 r  p}{r^2+p^2+1}$ and use \eqref{eq:antidervcos} to obtain
\begin{align*}
&
\int_0^\infty  \frac{a }{ 2 r p (r^2+1)^{\frac{\theta-1}{2}}}  \left[ \frac{1}{ \sqrt{a^2 -1} } \arctan\left(\sqrt{\frac{a+1}{a-1}} \tan(\xi) \right) \right]_{-\frac{\pi}{2}}^{\frac{\pi}{2}} \, \ud r
\\
&
= 
\int_0^\infty \frac{\pi }{(r^2+1)^{\frac{\theta-1}{2}}  ((r-p)^2+1)^{\frac{1}{2}}((r+p)^2+1)^{\frac{1}{2}}} \, \ud r \, .
\end{align*}
We then perform a change of variables  $r \rightarrow \frac{r}{p}=:x$. This yields 
\begin{align*}
&
\frac{\pi}{p^{\theta}}\int_0^\infty \frac{1}{ (x^2+p^{-2})^{\frac{\theta-1}{2}}  ((x-1)^2+p^{-2})^{\frac{1}{2}}((x+1)^2+p^{-2})^{\frac{1}{2}}} \, \ud x \, .
\end{align*}
For $\theta=1$ the integral has one singularity at $x=1$. For $\theta=2$ there are two singularities remaining, one at zero and another one at $x=1$. For that reason we split the integral at $r=\frac{1}{2}$ and $r=\frac{3}{2}$ (as in the case $d=3$ above). The integral from $r=\frac{3}{2}$ to infinity is finite and bounded independent of $p$. For the other two terms we use the fact that $  \mathrm{arsinh}(x)' = (x^2+1)^{-1/2}$ and conclude that
\begin{align*}
\int\limits_0^\infty \frac{ \ud x }{ (x^2+p^{-2})^{\frac{\theta-1}{2}}  ((x-1)^2+p^{-2})^{\frac{1}{2}}((x+1)^2+p^{-2})^{\frac{1}{2}}} 
\leq
C_1 \mathrm{arsinh}\left( \frac{\abs p}{2}\right) + C_2 \, ,
\end{align*}
for some constants $C_1,C_2>0$. Choose $C=\pi \max(C_1,C_2)$ and note that $\mathrm{arsinh}(x) = \log(x+\sqrt{x^2+1}) \leq \log(2 x + 1)$ for non-negative $x$. This yields the claim.
\end{proof}
\end{lem}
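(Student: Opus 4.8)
The plan is to prove both bounds by the same two-step recipe: first perform the angular integration explicitly in polar (resp.\ spherical) coordinates centred at the origin, reducing to a one-dimensional radial integral; then substitute $q = \abs p\, x$ to pull out the advertised power of $\abs p$, leaving a $p$-dependent radial integral whose only difficulty is an (integrable) singularity on the sphere $\abs q = \abs p$. Throughout one may assume $\abs p \geq R$ for a fixed $R > 0$: for $\abs p < R$ the left-hand sides are simply bounded by a constant, since $\abs q^{-\theta}$ (resp.\ $(q^2+1)^{-\theta/2}$) is locally integrable because $\theta < d$, and the factor $((p-q)^2+1)^{-1}$ supplies enough decay at infinity.

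For $d = 3$ I would write $q$ in spherical coordinates with polar axis along $p$, so $(p-q)^2 + 1 = r^2 - 2rp\cos\vartheta + p^2 + 1$ with $r = \abs q$. Integrating over $\cos\vartheta \in [-1,1]$ is elementary and produces the factor $\frac{1}{rp}\log\!\big(\frac{r^2 + 2rp + p^2 + 1}{r^2 - 2rp + p^2 + 1}\big)$, so after the substitution $r = pt$ one is left with $\frac{\pi}{p^{\theta-1}}\int_0^\infty t^{1-\theta}\log\!\big(\frac{(t+1)^2 + p^{-2}}{(t-1)^2 + p^{-2}}\big)\,\ud t$. I would split the $t$-integral at $\tfrac12$ and $\tfrac32$. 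On $[0,\tfrac12]$ and $[\tfrac32,\infty)$ the argument of the logarithm equals $1 + \frac{4t}{(t-1)^2 + p^{-2}}$, so $\log(\cdots) \leq \frac{4t}{(t-1)^2 + p^{-2}} \leq \frac{4t}{(t-1)^2}$ is tame there, and the remaining integrals converge because $\theta < 3$ controls the behaviour near $t = 0$ and $\theta > 1$ gives decay as $t \to \infty$. On $[\tfrac12,\tfrac32]$ one bounds the numerator $\log((t+1)^2 + p^{-2})$ by a constant (using $\abs p \geq R$) and uses $\int_{1/2}^{3/2}\abs{\log((t-1)^2)}\,\ud t < \infty$ to absorb the singularity at $t = 1$. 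This yields the bound $C\abs p^{-(\theta-1)}$.

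For $d = 2$ the same idea applies in polar coordinates. The angular integral is computed using the identity $\frac{\partial}{\partial x}\arctan\!\big(\sqrt{\tfrac{a+1}{a-1}}\tan x\big) = \frac{\sqrt{a^2-1}}{1 - a\cos(2x)}$ with $a = \frac{2rp}{r^2 + p^2 + 1}$, which collapses the radial integrand to the product form $\frac{\pi}{(r^2+1)^{(\theta-1)/2}((r-p)^2+1)^{1/2}((r+p)^2+1)^{1/2}}$. Substituting $r = px$ extracts $p^{-\theta}$ and leaves $\int_0^\infty\big[(x^2+p^{-2})^{(\theta-1)/2}((x-1)^2+p^{-2})^{1/2}((x+1)^2+p^{-2})^{1/2}\big]^{-1}\,\ud x$. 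This integrand is singular near $x = 1$ (for both values of $\theta$) and, when $\theta = 2$, also near $x = 0$; after splitting at $\tfrac12$ and $\tfrac32$ as before, each such region is estimated using $\mathrm{arsinh}'(x) = (x^2+1)^{-1/2}$, which produces a factor $\mathrm{arsinh}(\abs p/2) \leq \log(2\abs p + 1)$, while the tail $x \geq \tfrac32$ contributes a $p$-independent constant. Together this gives $C\abs p^{-\theta}(\log(1+\abs p) + 1)$.

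The main obstacle is the bookkeeping around the singular sphere $\abs q = \abs p$: the radial integrand genuinely diverges there, and one must verify that after pulling out $\abs p^{-(\theta-1)}$ (resp.\ $\abs p^{-\theta}$) the residual $p$-dependence is harmless — uniformly bounded in three dimensions, and only logarithmically growing in two, where the $\log$ in the statement is therefore unavoidable. The $+1$ in $(p-q)^2 + 1$ is precisely what keeps the relevant $p$-dependent constants finite and is also what necessitates the separate, trivial treatment of small $\abs p$.
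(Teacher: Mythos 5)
Your proposal is correct and follows essentially the same route as the paper's own proof: angular integration in spherical (resp.\ polar) coordinates, the rescaling $r=\abs{p}t$, the split at $t=\tfrac12$ and $t=\tfrac32$ with the logarithmic estimate near $t=1$, the $\mathrm{arsinh}$ bound producing the logarithm in two dimensions, and the trivial constant bound for $\abs{p}<R$. No gaps to report.
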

\end{appendix}


\newcommand{\etalchar}[1]{$^{#1}$}
\providecommand{\bysame}{\leavevmode\hbox to3em{\hrulefill}\thinspace}
\providecommand{\MR}{\relax\ifhmode\unskip\space\fi MR }
\providecommand{\MRhref}[2]{%
  \href{http://www.ams.org/mathscinet-getitem?mr=#1}{#2}
}
\providecommand{\href}[2]{#2}

\end{document}